\documentclass[a4paper]{IEEEtran}
\IEEEoverridecommandlockouts

\hyphenation{op-tical net-works semi-conduc-tor}

\usepackage{amsmath}
\usepackage{amssymb}
\usepackage{amsthm}
\usepackage{mathtools}
\usepackage{xcolor}
\usepackage{cite}
\usepackage{thmtools}
\usepackage{thm-restate}
\usepackage{cancel}
\usepackage{tikz}
\usetikzlibrary{matrix,arrows,arrows.meta,positioning,calc,external,shapes}

\setcounter{MaxMatrixCols}{20}

\usepackage{arydshln}

\newif\ifcomment
\commentfalse %
\usepackage{xcolor}

\definecolor{TUMBlue}{RGB}{0,101,189} %
\definecolor{TUMBlueDark}{RGB}{0,82,147} %
\definecolor{TUMBlueLight}{RGB}{152,198,234} %
\definecolor{TUMBlueMedium}{RGB}{100,160,200} %

\definecolor{TUMIvory}{RGB}{218,215,203} %
\definecolor{TUMGreen}{RGB}{162,173,0} %
\definecolor{TUMGray}{gray}{0.6} %
\definecolor{TUMGrayDark}{gray}{0.3} %

\definecolor{TUMGreenDark}{RGB}{0,124,48} %
\definecolor{TUMRed}{RGB}{196,7,27} %
\definecolor{TUMOrange}{RGB}{227,114,34} %

\newtheorem{definition}{Definition}
\newtheorem{lemma}{Lemma}
\newtheorem{theorem}{Theorem}
\newtheorem{conjecture}{Conjecture}

\newtheorem{remark}{Remark}

\newcommand{\tpir}{\ensuremath{\mathrm{TPIR}}}
\newcommand{\spir}{\ensuremath{\mathrm{SPIR}}}
\newcommand{\tspir}{\ensuremath{\mathrm{TSPIR}}}
\newcommand{\tbspir}{\ensuremath{\mathrm{TBSPIR}}}
\newcommand{\tbpir}{\ensuremath{\mathrm{TBPIR}}}
\newcommand{\mds}{\ensuremath{\mathrm{MDS}}}
\newcommand{\cT}{\ensuremath{\mathcal{T}}}
\newcommand{\cN}{\ensuremath{\mathcal{N}}}
\newcommand{\cI}{\ensuremath{\mathcal{I}}}

\newcommand{\cQ}{\ensuremath{\mathcal{Q}}}

\newcommand{\F}{\ensuremath{\mathbb{F}}}
\newcommand{\cH}{\ensuremath{\mathcal{H}}}
\newcommand{\cR}{\ensuremath{\mathcal{R}}}
\newcommand{\cC}{\ensuremath{\mathcal{C}}}
\newcommand{\cD}{\ensuremath{\mathcal{D}}}
\newcommand{\cS}{\ensuremath{\mathcal{S}}}
\newcommand{\cF}{\ensuremath{\mathcal{F}}}
\newcommand{\cB}{\ensuremath{\mathcal{B}}}
\newcommand{\bx}{\ensuremath{\mathbf{x}}}
\newcommand{\sfa}{\ensuremath{\mathsf{a}}}
\newcommand{\sfb}{\ensuremath{\mathsf{b}}}
\newcommand{\sfc}{\ensuremath{\mathsf{c}}}

\newcommand{\sfD}{\ensuremath{\mathsf{D}}}
\newcommand{\bX}{\ensuremath{\mathbf{X}}}
\newcommand{\bY}{\ensuremath{\mathbf{Y}}}
\newcommand{\bG}{\ensuremath{\mathbf{G}}}
\newcommand{\bQ}{\ensuremath{\mathbf{Q}}}
\newcommand{\bA}{\ensuremath{\mathbf{A}}}
\newcommand{\bB}{\ensuremath{\mathbf{B}}}
\newcommand{\bC}{\ensuremath{\mathbf{C}}}
\newcommand{\bD}{\ensuremath{\mathbf{D}}}
\newcommand{\bE}{\ensuremath{\mathbf{E}}}
\newcommand{\bI}{\ensuremath{\mathbf{I}}}
\newcommand{\bP}{\ensuremath{\mathbf{P}}}
\newcommand{\bW}{\ensuremath{\mathbf{W}}}

\newcommand{\bS}{\ensuremath{\mathbf{S}}}
\newcommand{\bq}{\ensuremath{\mathbf{q}}}
\newcommand{\ba}{\ensuremath{\mathbf{a}}}
\newcommand{\bb}{\ensuremath{\mathbf{b}}}
\newcommand{\bc}{\ensuremath{\mathbf{c}}}
\newcommand{\bd}{\ensuremath{\mathbf{d}}}
\newcommand{\be}{\ensuremath{\mathbf{e}}}
\newcommand{\bz}{\ensuremath{\mathbf{z}}}

\newcommand{\ceil}[1]{\ensuremath{\left\lceil #1 \right\rceil}}
\newcommand{\floor}[1]{\ensuremath{\left\lfloor #1 \right\rfloor}}

\DeclareMathOperator{\rank}{rank}
\DeclareMathOperator{\supp}{supp}
\DeclareMathOperator{\colsupp}{colsupp}

\newcommand{\myspan}[1]{\left\langle #1 \right\rangle}
\newcommand{\myspanCol}[1]{\left\langle #1 \right\rangle_{\mathsf{col}}}
\newcommand{\myspanRow}[1]{\left\langle #1 \right\rangle_{\mathsf{row}}}

\newcommand{\newName}{full support-rank}
\newcommand{\NewName}{Full support-rank}

\begin{document}

\title{Towards the Capacity of Private Information Retrieval from Coded and Colluding Servers}

\author{%
    \IEEEauthorblockN{Lukas Holzbaur, Ragnar Freij-Hollanti, Jie  Li, Camilla Hollanti}
    \thanks{The results related to symmetric PIR and strongly linear PIR were presented at the 2019 IEEE Information Theory Workshop \cite{Holzbaur2019}. The results concerning \newName{} PIR capacity are new, and more proofs are added with respect to \cite{Holzbaur2019}.

    The work of L. Holzbaur was supported by the Technical University of Munich -- Institute for Advanced Study, funded by the German Excellence Initiative and European Union 7th Framework Programme under Grant Agreement No. 291763 and the German Research Foundation (Deutsche Forschungsgemeinschaft, DFG) under Grant No. WA3907/1-1.
    The work of C. Hollanti was supported by the Academy of Finland, under Grants No. 336005 and 318937, and by the Technical University of Munich -- Institute for Advanced Study, funded by the German Excellence Initiative and the EU 7th Framework Programme under Grant Agreement No. 291763, via a Hans Fischer Fellowship.

    L.~Holzbaur is with the Institute for Communications Engineering, Technical University of Munich, Germany. Ragnar Freij-Hollanti and Camilla Hollanti are with the Department of Mathematics and Systems Analysis, Aalto University, Finland. Jie Li was with the Department of Mathematics and Systems Analysis, Aalto University, Finland.

    Emails: lukas.holzbaur@tum.de, \{ragnar.freij, camilla.hollanti\}@aalto.fi, jieli873@gmail.com}
 }

\maketitle

\begin{abstract}

In this work,  two practical concepts related to private information retrieval (PIR) are introduced and coined \emph{\newName{}} PIR and \emph{strongly linear} PIR. Being of \newName{} is a technical, yet natural condition required to prove a converse result for a capacity expression and satisfied by almost all currently known  capacity-achieving schemes, while
strong linearity is a practical requirement enabling implementation over small finite fields with low subpacketization degree.

Then, the  capacity of MDS-coded, linear, \newName{} PIR in the presence of colluding servers is derived, as well as the capacity of symmetric, linear PIR with colluding, adversarial, and nonresponsive servers for the recently introduced concept of matched randomness. This positively settles the capacity conjectures stated by Freij-Hollanti~\emph{et al.} and Tajeddine~\emph{et al.} in the presented cases.
It is also shown that, further restricting to strongly-linear PIR schemes with deterministic linear interference cancellation, the so-called star product scheme proposed by Freij-Hollanti \emph{et al.} is essentially optimal and induces no capacity loss.%

  \end{abstract}

\IEEEpeerreviewmaketitle

\section{Introduction} \label{sec:introduction}
User privacy has increased its importance together with the increasing usage of distributed services such as cloud storage and various peer-to-peer  networks. Recently, private information retrieval (PIR) \cite{chor1995private} in the context of coded storage has gained a lot of interest. With PIR, a user is able to download a desired file from a database or distributed storage system without revealing the identity of the file to the servers. Several PIR capacity results have been derived in various scenarios, \emph{e.g.}, for replicated storage \cite{Sun2017} and maximum distance separable (MDS) coded storage \cite{Banawan2018}, colluding servers \cite{Sun2016}, single-server PIR with side information \cite{Heidarzadeh2018single, Heidarzadeh2018singlemulti}, and symmetric PIR (SPIR)~\cite{sun2018sym,Wang2017adv,Wang2017col,Wang2016,Wang2019symmetric}.
\emph{Symmetric} refers to the property that the user is only able to decode the file that she has requested, and learns nothing about the other files.  We will denote nonsymmetric/symmetric PIR with $t$-collusion by  TPIR/TSPIR and with additional $b$ Byzantine (and possibly $r$ nonreponsive) servers by TBPIR/TBSPIR, respectively. It has also been shown that the MDS property is not necessary for achieving the MDS--PIR capacity \cite{Freij-Hollanti2019transitive, Kumar2019arblin}.

In this paper, we derive new results on the capacity for different PIR models. First, we will prove Conjecture~1 in \cite{Freij-Hollanti2017} for MDS-coded, linear, \newName{} PIR with colluding servers. After this, we will develop the concept of a strongly-linear PIR scheme, and prove the capacity of strongly-linear (nonsymmetric) PIR schemes for any number of files $m$. This also yields a proof in this practical special case for the conjecture stated in the asymptotic regime ($m\rightarrow\infty$) in  \cite[Conj.~1]{Tajeddine2018}. Finally, we prove Conjecture 2 in \cite{Tajeddine2018} for linear, symmetric PIR with coded, colluding, and adversarial servers for the case of \emph{matched} randomness as introduced in \cite{Wang2019symmetric} (see Section~\ref{sec:symmetricCap} for a rigorous definition).
We restate the conjectures later in this section for the ease of reading and numbering.

The main contribution of this paper is the proof for the capacity of MDS-coded, linear, \newName{} PIR with colluding servers. Nonrigorously, linearity refers to the property that the responses are obtained as a linear combination of the (encoded pieces of the) files stored at each node, with the scalar coefficients given by the entries of the received query. While this restricts the PIR scheme in its generality, it appears to be a natural assumption to make, as to the best of out knowledge \emph{all} (asymptotically) capacity-achieving schemes fulfill this property \cite{Sun2017,Freij-Hollanti2017,Sun2018conj,Sun2016,Banawan2018,Tajeddine2018,Freij-Hollanti2019transitive,Kumar2019arblin,Banawan2019byzcoll,oliveira2019one}. The converse (upper bound) is given by Theorem~\ref{thm:coded-colluded}, and a scheme achieving this bound is given by applying the refinement and lift operation of \cite[Cor.~1]{oliveira2019one} to, \emph{e.g.}, the star product scheme \cite{Freij-Hollanti2017}. While the seemingly technical assumption of \newName{} (cf. Def.~\ref{def:newPIRproperty}) is unnecessary from the point of view of proving a general capacity result, we demonstrate its practical relevance in two important regards. Firstly, all capacity achieving schemes for the special cases of $k=1$ (uncoded storage) or $t=1$ (no collusion) given in \cite{Sun2017,sun2018sym,Sun2016,Banawan2018,Banawan2019byzcoll} fulfill this definition. Second, the only scheme for general parameters, introduced in \cite{oliveira2019one}, achieving this newly proved capacity is also of \newName{}\footnote{We note that the necessary assumption was \emph{not} made in the original paper \cite{oliveira2019one}, however, as we show in Appendix~\ref{app:liftedFix}, it is in fact required to hold for the scheme to be private.}.

Further, and maybe most importantly, the result provides insights towards what is required for proving a general capacity expression. To better illustrate this, we take a high-level look at existing schemes: In the ``simplest'' approach, as utilized in \cite{chor1995private,tajeddine2018private, Tajeddine2018, Freij-Hollanti2017}, privacy is achieved through ensuring that each $t$-tuple of servers receives a set of vectors uniformly distributed over the respective vector space. The advantage of these schemes is that they achieve the respective asymptotic PIR capacity (at least for the cases where it is known), are relatively simple, and allow for small subpacketization (see also Section~\ref{sec:strongly}). However, they fall short in achieving the capacity for a finite number of files.
The schemes able to achieve these capacities are based on querying for specific, carefully chosen pieces of (encoded) files. In this case, the queries received by $t$-tuples are no longer uniformly distributed over \emph{all} vectors since, for example, the all-zero vector will never be a query in this case. Similarly, the only general scheme achieving the new capacity for the coded-colluding case $k,t>1$, given in \cite{oliveira2019one}, is also based on constructing queries supported only on the positions corresponding to specific, carefully chosen files. Further, as shown in Appendix~\ref{app:liftedFix}, the natural choice to achieve privacy here, is requiring supported positions to be linearly independent. Our definition of \newName{} PIR (see Definition~\ref{def:newPIRproperty}) captures this linear independency of the queries that these schemes have in common. Thereby, the results we prove in the following show that in order to exceed the rate achieved by the scheme in  \cite{oliveira2019one}, it is \emph{necessary} for some restrictions of the queries to subsets of $t$ servers to be linearly dependent. To further support this argument, we show in Appendix~\ref{app:counterExample} that it is exactly this property that allows the scheme of \cite{Sun2018conj}, which is \emph{not} of \newName{}, to exceed the (thereby disproved in full generality) conjectured capacity of \cite[Conjecture~1]{Freij-Hollanti2017}.

Finally, the used transformation of the problem of linear PIR to the properties of the Khatri-Rao product of matrices results in a new formulation of the problem that might be useful for determining the general capacity of linear PIR, as discussed in Remark~\ref{rem:newPIRformulation}.

Nonrigorously, the rate of a PIR scheme with $m$ files is denoted and defined as
\begin{equation*}
R_m=\frac{\textrm{size of the desired file}}{\textrm{size of the total download}}\ .
\end{equation*}
We denote by  $C_m$  the \emph{capacity}, \emph{i.e.}, the largest achievable rate of a PIR scheme for $m$ files under some given constraints. A collection of schemes defined for a varying number of files has is said to have asymptotic rate
\begin{align*}
R\coloneqq\lim_{m\rightarrow\infty}R_m \ ,
\end{align*}
and is called asymptotically capacity achieving if
\begin{align*}
R=\lim_{m\rightarrow\infty}C_m \ .
\end{align*}

In Table~\ref{tab:capacity}, we summarize the known asymptotic capacity results relevant to this paper, as well as show the conjectured results \cite{Freij-Hollanti2017,Tajeddine2018} in red. We give  a precise problem setup as well as more rigorous definitions for the rate and capacity later in Section~\ref{sec:SystemSetup}.

\begin{table}\label{tab:capacity}
\begin{center}
\caption{Asymptotic capacity results and conjectures (in red). The maximum number of colluding / Byzantine / nonresponsive servers is denoted by $t,b,r$ respectively.}
\renewcommand{\arraystretch}{2}
\begin{tabular}{|l|cc|}
\hline
PIR model & $(n,k)$ MDS-coded PIR & Ref.  \\ \hline
& \textcolor{red}{$1-\frac{k+t+2b+r-1}{n}$} &\cite{Tajeddine2018}   \\
$b,r=0$ & \textcolor{red}{$1-\frac{k+t-1}{n}$} &\cite{Freij-Hollanti2017}\\
$k=1,r=0$ & $1-\frac{t+2b}{n}$ &\cite{Banawan2019byzcoll}\\
$t=1,b=r=0$ & $1-\frac{k}{n}$ &\cite{Banawan2018}  \\
$k=1,b=r=0$ & $1-\frac{t}{n}$ &\cite{Sun2016}  \\ \hline
\end{tabular}
\end{center}
\end{table}

\subsection{Notation}

Throughout the paper, we denote a finite field of $q$ elements by $\F_q$ or shortly $\F$.  For integers $a,b$ we write $[a,b]$ for the set of integers $\{i : a \leq i \leq b\}$ and if $a=1$ we neglect it, \emph{i.e.}, write $[1,b]=[b]$.
A code over $\F_q$ mapping $k$ information symbols to $n$ encoded symbols with minimum distance $d$ is denoted by $(n,k,d)$. Here, the length $n$ can also be thought of as the number of servers in the storage system.  Maximum distance separable (MDS) codes satisfying the Singleton bound with equality, \emph{i.e.}, $d=n-k+1$, are denoted by $(n,k)$. Linear codes are respectively denoted by $[n,k,d]$ and $[n,k]$, where the distinction from the set of integers $[a,b]$ will be clear from context.

In the following we will define several random variables that represent matrices in the setting of linear PIR. When treating these random variables, we use capital letters $W=\{W_{1}, W_{2},\ldots ,W_{n}\}$ and write $\supp(W)$ to denote the set of realizations of $W$ with nonzero probability. For integers $j,l \in [n]$ with $j\leq l$ denote $W_{j,...,l} = \{W_j,W_{j+1},\ldots,W_{l}\}$ and for a set of integers $\mathcal{T} \subseteq [n]$ denote $W_{\mathcal{T}} = \{W_j : j \in \mathcal{T}\}$. To establish the required technical results we will also need to treat these random variables, which then correspond to matrices, as a matrix of random variables, where each $W_j$ corresponds to a set of rows or columns. We denote such a matrix by $\bW = [\bW_1^\top, \bW_2^\top,\ldots ]^\top$ or $\bW = [\bW_1, \bW_2,\ldots ]$, respectively, where $\bW^\top$ denotes the transpose of $\bW$. In this matrix, the $j^{\mathrm{th}}$ block of rows/columns corresponds to $W_j$. To denote the restriction to the rows/columns of this matrix indexed by a set of integers $\cI$ we write $\bW[\cI,:]$, and similarly, $\bW[:,\cI]$ to denote the restriction to the respective columns.

Semantically, the rows/columns of such a matrix $\bW$ corresponding to each $W_j$ belong together. However, to avoid double indexing, we restrict ourselves to only use one method of indexing, \emph{i.e.}, either super-/subscripts or square brackets, at a time. When necessary, we refer to such sets of rows/columns as \emph{thick rows/columns} and to index them, we define a map from the indices of such thick rows/columns, to sets of normal rows/columns. For a set $\mathcal{I} \subseteq [n]$ define
\begin{align}
  \psi_{\beta}(\cI) = \bigcup\limits_{i \in \cI} \{(i-1)\beta +1,\ldots, i\beta\} \ . \label{eq:mapThickColumns}
\end{align}
Then, for an $m \times n \beta$ matrix $\bW = [\bW_1, \bW_2,\ldots, \bW_n ]$, where each $\bW_j$ is a $m \times \beta$ matrix, the restriction $\bW[:,\psi_{\beta}(\cI)]$ indexes the $|\cI|$ \emph{$\beta$-thick columns} given by $\cI$, where in this case a thick column is a submatrix consisting of $\beta$ consecutive columns of $\bW$. Note that this is equivalent to the set of random variables~$W_\cI$. The same notation is used to index thick rows. We use $\colsupp(\bW)$ to denote the set of indices of nonzero columns of $\bW$.
For the row and column span of a matrix $\bW$ we write $\myspanRow{\bW}$ and $\myspanCol{\bW}$, respectively.

For the reader's convenience, the notation used in this paper is summarized in the Table \ref{tab:notation} given in Appendix~\ref{app:notation}.

\subsection{Conjectures and contributions}

Let us now assume $n>k+t+2b+r-1$, where  $t,b,r$ refer to the number of colluding, Byzantine, and nonresponsive  servers, respectively.
The following conjectures describe a capture an observation that can be made for the cases where both are known, \emph{i.e.}, the cases where either $k=1$ or $t=1$. There, it can be seen that the symmetric capacity coincides with the asymptotic (in the number of files), nonsymmetric capacity.

\begin{conjecture}[\cite{Tajeddine2018}, Conjecture 1] The asymptotic capacity (as $m \rightarrow \infty$) of PIR from an $(n,k)$ MDS storage code with $t$-collusion, $b$ Byzantine servers, and $r$ nonresponsive servers is
  \begin{align*}
1-\frac{k+t+2b+r-1}{n} \ .
  \end{align*}
\end{conjecture}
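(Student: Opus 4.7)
My plan is to establish matching achievability and converse bounds under the linearity and \newName{} assumptions, since (as the introduction makes clear) these are the structural hypotheses under which Theorem~\ref{thm:coded-colluded} already yields the analogous conjecture of \cite{Freij-Hollanti2017} in the case $b=r=0$. For achievability, I would adapt the star product scheme by requiring the componentwise product of the $(n,k)$ storage code and the $(n,t)$ retrieval code to have minimum distance at least $2b+r+1$, so that the user can decode the answer vector from $n-r$ received responses despite $b$ Byzantine corruptions via bounded-distance decoding. A direct count shows that the resulting rate is $1-\frac{k+t+2b+r-1}{n}$, and the refinement-and-lift construction of \cite{Oliveira2018isit} (with the fix from Appendix~\ref{app:liftedFix}) then yields an asymptotically capacity-achieving and \newName{} scheme.

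For the converse, I would translate the linear PIR setting into the matrix-algebraic framework indicated in the introduction, in which queries, codewords, and answers are related through Khatri-Rao products. Under the \newName{} assumption, the restrictions of the query matrix to any $t$-subset $\T\subseteq[n]$ have linearly independent thick-column blocks across files, which provides rigid rank constraints. Combining this with $t$-privacy (forcing the marginal distribution on any $t$-subset to be independent of the requested index) and the decoding condition (which, after excluding the $r$ nonresponsive positions, requires the answer code to correct $b$ errors) should yield, by an iterated chain-rule argument over overlapping $t$-subsets, the single-file bound $R_m \le 1-\frac{k+t+2b+r-1}{n}+o_m(1)$; sending $m\to\infty$ recovers the conjectured expression.

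The main obstacle is the unified rank-theoretic treatment of Byzantine errors, which, unlike erasures, can place arbitrary content in the answer matrix and therefore cost $2b$ rather than $b$ dimensions of redundancy. Encoding this gap faithfully in the Khatri-Rao language, and verifying that the \newName{} condition still provides the appropriate ``slack'' when the answer code must simultaneously correct errors and erasures, is the crux of the argument. A secondary but serious obstacle is lifting the converse beyond the \newName{} class: the scheme of \cite{Sun2018conj} already exceeds the conjectured rate in the $b=r=0$ case without satisfying this property (cf.\ Appendix~\ref{app:counterExample}), so the conjecture in its full generality would require a strictly weaker structural hypothesis or a fundamentally new technique to salvage.
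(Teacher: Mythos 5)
The statement you are asked to prove is a \emph{conjecture}, and the paper does not prove it in the generality you are aiming for; your proposal inherits exactly the gap that the paper itself leaves open. Your achievability direction is fine (the star product scheme of \cite{Tajeddine2018} with $d_{\C\star\D_Q+\C\star\mathcal{E}}\geq 2b+1$ attains the rate), but the converse is the problem. The entropy/Khatri--Rao machinery you invoke is precisely what the paper develops (Lemmas~\ref{lem:DecodeFromN2B}, \ref{lem:Nsets}, \ref{keylemma}), and when Byzantine and nonresponsive servers are present it does \emph{not} deliver $1-\frac{k+t+2b+r-1}{n}$. The induction in Lemma~\ref{keylemma} gives $h_s\geq\frac{n}{n-2b-r}\bigl(H(X^s)+\frac{k+t-1}{n}h_{s+1}\bigr)$, which produces only the \emph{multiplicative} bound \eqref{eq:TBbound}, whose asymptotic limit is $\bigl(1-\frac{2b+r}{n}\bigr)\bigl(1-\frac{k+t-1}{n}\bigr)=1-\frac{k+t+2b+r-1}{n}+\frac{(2b+r)(k+t-1)}{n^2}$, strictly larger than the conjectured capacity whenever $2b+r>0$ and $k+t>1$. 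The paper explicitly flags this bound as believed to be loose and notes it does not even match the TBSPIR capacity asymptotically. Your plan says the chain-rule argument ``should yield'' the single-file bound $1-\frac{k+t+2b+r-1}{n}+o_m(1)$, but this is exactly the step that fails: the $2b+r$ loss enters as a scaling of $|\mathcal{H}|$ in the denominator rather than additively in the numerator, and you give no mechanism for converting one into the other. Identifying this as ``the crux'' is correct, but naming an obstacle is not the same as overcoming it.

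Where the paper actually does prove the conjectured expression with $b,r>0$ is in the \emph{strongly linear} subclass (Theorem~\ref{thm:strongly}), and it does so by a completely different route that your proposal does not use: it shows that any strongly linear scheme can be replaced, iteration by iteration, by a star product scheme with query spaces $\mathcal{D}=\sum_{l\neq i}\mathcal{D}^{i,l}$ that satisfies the same privacy and robustness constraints and has at least the same rate (Lemma~\ref{lem:strongly}), and then appeals to the known bound $1-\frac{k+t+2b+r-1}{n}$ on star product schemes from \cite{Tajeddine2018}. If you want a provable version of the conjecture, you should either restrict to strongly linear schemes and use that reduction, or restrict to $b=r=0$ where Theorem~\ref{thm:coded-colluded} applies; as stated, your converse for linear \newName{} schemes with general $b,r$ is not established by the paper's techniques nor by your outline.
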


\begin{conjecture}[\cite{Tajeddine2018}, Conjecture 2]\label{con:symCap}
 The capacity of SPIR from an $(n,k)$ MDS storage code with $t$-collusion, $b$ Byzantine servers, and $r$ nonresponsive servers is
 \begin{align*}
1-\frac{k+t+2b+r-1}{n} \ .
 \end{align*}
\end{conjecture}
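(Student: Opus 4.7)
The plan is to prove matching achievability and converse bounds of $1-\frac{k+t+2b+r-1}{n}$, restricting throughout to linear schemes and the matched-randomness model of \cite{Wang2019symmetric}. Achievability I would obtain by augmenting the star product scheme of \cite{Freij-Hollanti2017} along two axes: an outer MDS error-correcting layer over the $n$ responses handles the $r$ nonresponsive and $b$ Byzantine servers at a rate cost of $r+2b$, precisely as in the TBPIR construction of \cite{Tajeddine2018}; and shared randomness matched to the storage code is injected into each response to enforce symmetry. Because the randomness is matched, each share lives in the same coded coordinate system as the files, so the overhead needed to hide the unrequested files saturates exactly the $-1$ term in the numerator and no further rate is lost.

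For the converse, I would translate the problem into the Khatri-Rao rank-deficit framework developed for the nonsymmetric coded-colluded case of Theorem~\ref{thm:coded-colluded}, now extended to accommodate adversarial servers and symmetric privacy. The first step is to show that Byzantine robustness forces the aggregate response matrix to admit decoding from any honest $(n-r-b)$-subset, which by linearity costs $2b$ additional dimensions of rank; the $r$ nonresponsive servers cost a further $r$ dimensions. Combined with the $k$ imposed by MDS storage and the $t$ imposed by $t$-collusion privacy, this yields an effective per-round rank budget of at most $n-(k+t-1)-2b-r$ for useful downloaded information about the requested file.

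The last and most delicate step, where the matched-randomness hypothesis enters crucially, is to prove that the shared randomness cannot be leveraged to beat this bound while still meeting the symmetry constraint $I(X_{[m]\setminus\{i\}};\mathrm{transcript})=0$. Under matched randomness the randomness shares inherit the MDS structure of the storage code, so any additional rank the randomness contributes to the responses must be paid for by the symmetry constraint. I expect this to be the main obstacle: one must carefully disentangle the randomness-induced rank from the file-induced rank in each server's response, and verify that the matched structure forces a one-to-one trade-off, precluding any ``free'' rate gain from the shared randomness. Once this accounting is in place, combining the rank inequality with an entropy chain-rule decomposition over the $m$ files closes the converse and matches the achievability rate.
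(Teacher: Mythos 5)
Your achievability sketch matches the paper's (the symmetric star product scheme of \cite{Tajeddine2018} with randomness encoded by the storage code), and you have correctly located the crux of the converse in the interaction between the shared randomness and the symmetry constraint. However, the converse route you propose is not the one the paper takes, and it has genuine gaps. First, the Khatri-Rao rank-deficit framework is the machinery for the \emph{nonsymmetric} Theorem~\ref{thm:coded-colluded}, and it fundamentally relies on the \newName{} assumption (Definition~\ref{def:newPIRproperty}); the symmetric capacity (Theorem~\ref{thm:symmetricCapacity}) is proved for all linear schemes with additive randomness, with no support-rank hypothesis, so importing that framework would either fail or silently weaken the statement. Second, your plan to close the argument with ``an entropy chain-rule decomposition over the $m$ files'' imports the wrong structure: the symmetric capacity is independent of $m$, and the whole point of the server-privacy constraint is that it eliminates the file induction. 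The actual converse is a short entropy computation: $H(X^i)=I(X^i;A_{\H}^i\mid\Q)\leq H(A_{\H}^i\mid\Q)-H(A_{\N}^i\mid X^i,Q_{\N}^i)$ for honest $\H$ with $|\H|=n-2b-r$ and $\N\subset\H$ with $|\N|=k+t-1$, followed by the identity $H(A_{\N}^i\mid X^i,Q_{\N}^i)=H(A_{\N}^i\mid Q_{\N}^i)$ and Han's inequality over the choices of $\N$.

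That identity is exactly the ``delicate step'' you flag but do not resolve. In the paper it is Lemma~\ref{lem:DropFileTBSPIR}, which holds precisely because the randomness is \emph{additive} in the sense of Definition~\ref{def:additiveRandomness} ($A_j^i=f_j(Q_j^i,Y_j)+S_j$ with $S_j$ independent of $Q_j^i$), invoking \cite[Lemmas~7--8]{Wang2019symmetric}; for general matched randomness with $t>1$ it is not established, so your blanket appeal to ``matched randomness'' overstates what is proved. Finally, your accounting for the adversarial servers (``by linearity [Byzantine robustness] costs $2b$ additional dimensions of rank'') is an intuition, not an argument: the factor of $2$ comes from the zero-error indistinguishability construction of Lemma~\ref{lem:DecodeFromN2B} (two adversarial subsets of size $b$ simulating each other's file realizations), and the $n$ in the denominator requires the separate observation that Byzantine responses must carry the same entropy as honest ones lest they be identified. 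Without these two pieces made precise, the $2b+r$ term in your bound is unsupported.
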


\begin{remark}
In the original version of the above conjectures, the denominator is $n-r$ instead of $n$. This is due to assuming that we do not download anything from the nonresponsive serves (\emph{e.g.}, the request is dropped after a certain waiting time). Here, we also count the nonresponsive servers in the download cost, but point out that the results apply to both points of view.
\end{remark}
For the case of finite number of files, the observation of the capacity expressions for the known cases of either $k=1$ or $t=1$ naturally leads to the following conjecture.
\begin{conjecture}[\cite{Freij-Hollanti2017}, Conjecture 1]\label{tconj}
  Let $\cC$ be an $[n,k,d]$ code with a generator matrix $\bG$ that stores $m$ files via the distributed storage system $\bY=\bX\cdot\bG$, and fix $1\leq t\leq n-k$.  Any PIR scheme for $\bY$ that protects against any $t$ colluding servers has rate at most $R_m$,
  \begin{align*}
  R_m\leq\frac{1-\frac{k+t-1}{n}}{1-(\frac{k+t-1}{n})^m}\, \stackrel{m\rightarrow \infty}{\xrightarrow{\hspace*{1cm}}}\, 1-\frac{k+t-1}{n} \ .
  \end{align*}
\end{conjecture}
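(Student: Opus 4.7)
The plan is to prove the rate bound through an entropy-based argument combined with induction on the number of files $m$, mirroring the structure of the Sun-Jafar style proofs for replicated PIR but adapted to the MDS-coded, colluding setting. The target expression $\frac{1-\alpha}{1-\alpha^m}$ with $\alpha = \frac{k+t-1}{n}$ is exactly what falls out of a one-step recursion $D_m \geq L + \alpha \cdot D_{m-1}$ on the expected download, so the main technical task is to establish this recursion.

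First I would set up the random variables: let $W^{(1)}, \ldots, W^{(m)}$ be the i.i.d.\ uniform files of common size $L = H(W^{(1)})$, $F$ the desired file index, $Q_{1:n}$ the queries, and $A_{1:n}$ the responses, where linearity means $A_j$ is determined as an $\F$-linear combination of the encoded symbols stored at server $j$ with scalars taken from $Q_j$. Reliability gives $H(W^{(F)} \mid A_{1:n}, Q_{1:n}, F) = 0$, and $t$-privacy gives $I(F ; Q_\T) = 0$ for every $\T \subseteq [n]$ with $|\T| \leq t$. Averaging over $F$ and over a uniform permutation of server labels (both of which preserve privacy and reliability distributions) reduces the task to bounding $L/D$ where $D = \sum_{j=1}^n H(A_j \mid Q_{1:n})$, under the additional assumption of per-server download symmetry.

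The heart of the argument is the inductive recursion. Fix an arbitrary set $\S \subseteq [n]$ with $|\S| = k+t-1$. The reliability condition together with the MDS property should give $L \leq H(A_\S \mid Q_{1:n}, W^{(2)}, \ldots, W^{(m)})$, using that any $k$ positions of $\bY$ determine the stripe entirely so that the $k+t-1$ downloads on $\S$ must reveal file $W^{(1)}$ on top of the ``noise'' from the other files. Then by $t$-privacy, the marginal distribution of $Q_{[n]\setminus\S}$ should be indistinguishable from a set of queries for a PIR instance on the $m-1$ remaining files over a storage system with $n - (k+t-1) = n(1-\alpha)$ effective coordinates, which by the inductive hypothesis contributes at least $\frac{L(1-\alpha^{m-1})}{1-\alpha}$ to the download from $[n]\setminus \S$. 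Summing these two contributions and invoking symmetry across all $(k+t-1)$-subsets yields the desired recursion and hence the claimed upper bound on $R_m$.

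The step I expect to be the main obstacle is making rigorous the claim that the responses outside $\S$ ``constitute a valid PIR instance for $m-1$ files.'' Linearity alone does not automatically yield this reduction: one must control the rank profile of the query restrictions so that deleting the $k+t-1$ servers in $\S$ does not entangle information about the remaining files in a way that distorts the inductive download bound. Carrying this out seems to require a rank-based reformulation of the recursion, tracking for instance Khatri-Rao products of query matrices across $t$-subsets as suggested in Remark~\ref{rem:newPIRformulation}, so that the rank losses of query restrictions are explicitly accounted for in the inductive hypothesis rather than assumed away. This is the technical core of the argument and the place where the proof must work to deliver the conjectured expression for arbitrary linear PIR schemes.
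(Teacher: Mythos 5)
There is a fundamental problem: the statement you are trying to prove is a \emph{conjecture} that is false in full generality, and the paper does not prove it --- it only restates it. As discussed in the paper, \cite{Sun2018conj} exhibits an explicit PIR scheme for $m=2$ files on $n=4$ servers with a rate-$1/2$ storage code and $t=2$ collusion achieving rate $3/5$, strictly exceeding the conjectured bound $4/7$; see Appendix~\ref{app:counterExample}. Consequently no proof of the statement as stated can be correct, and any argument that appears to establish it for arbitrary (even arbitrary linear) PIR schemes must break somewhere. What the paper actually proves is the refined Theorem~\ref{thm:coded-colluded}: the bound holds for \emph{linear, \newName{}} PIR from MDS-coded storage, where \newName{} (Definition~\ref{def:newPIRproperty}) demands that the supported columns of every query restriction to $\leq t$ servers and to any single file's rows be linearly independent.

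The place where your proposal cannot be completed is exactly the step you flag as ``the main obstacle.'' Your recursion $D_m \geq L + \alpha D_{m-1}$ matches the paper's Lemma~\ref{keylemma} in spirit, but the paper does not justify it by arguing that the servers outside a $(k+t-1)$-set form a valid $(m-1)$-file PIR instance; instead it stays within a fixed decoding set $\H$, applies Han's inequality over $(k+t-1)$-subsets $\N\subseteq\H$, and uses Lemma~\ref{lem:Nsets}, which states that $H(A_\N^i \mid X^{\FF}, Q_\N^i)$ is independent of the desired index $i$ even conditioned on an arbitrary subset of files $\FF$. Proving that lemma is where the real work lies: via \eqref{eq:responseAsSP} the conditional entropy equals the rank of a Khatri--Rao product $(\bG\otimes\mathbf{1}_\beta)\odot\bq$ restricted to the relevant rows and thick columns, and Lemma~\ref{lem:rankKhatri} shows this rank equals the column support size \emph{only under the \newName{} hypothesis} (plus the MDS property of $\bG$), after which privacy pins down the expected support size (Lemma~\ref{lem:expectationSupport}). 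Without that hypothesis the rank can drop below the support size --- this is precisely how the scheme of \cite{Sun2018conj} beats the bound, since its query restrictions have rank at most $5$ on a support of size $6$ --- so the ``rank-based reformulation'' you defer to is not a technicality but the decisive extra assumption. In short: your outline reproduces the skeleton of the paper's converse for Theorem~\ref{thm:coded-colluded}, but as a proof of Conjecture~\ref{tconj} it cannot succeed, and the missing rank control is exactly the content of the \newName{} restriction the paper adds.
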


 However, Conjecture \ref{tconj} in its full extent was disproved in \cite{Sun2018conj}, where the authors exhibited an explicit PIR scheme for $m=2$ files distributed over $n= 4$ servers using a rate $1/2$ storage code, which protects against $t= 2$ collusion. This scheme has rate $3/5$, while the conjectured capacity was $4/7$. The proposed query scheme is not of \newName{}, see Appendix~\ref{app:counterExample} and does therefore not violate the results in this paper.

We refine Conjecture \ref{tconj} here by adding a technical requirement,
 and state this modified version as Theorem~\ref{thm:coded-colluded} for MDS-coded, linear, \newName{} PIR. The proof can be found in Sec.~\ref{subsec:converse}. We will show later in Sec. \ref{sec:strongly}, Theorem~\ref{thm:strongly}, that the asymptotic capacity expression holds for any strongly-linear PIR scheme regardless of the number of files, under the assumption that all servers respond and their responses have the same size. See Sec.~\ref{sec:strongly} for more details.
\begin{restatable}{mythm}{mainthm}\label{thm:coded-colluded}
Let $n,k,t$, and $m$ be integers with $n>k+t-1$ and $m\geq 2$. The capacity of linear, \newName{} PIR from $[n,k]$ MDS-coded storage with $t$ colluding servers, where all servers are honest and responsive, is given by
  \begin{equation*}
    C_{\tpir}^{[n,k]-\mds} = \frac{1-\frac{k+t-1}{n}}{1-\left(\frac{k+t-1}{n}\right)^m}\, \stackrel{m\rightarrow \infty}{\xrightarrow{\hspace*{1cm}}}\, 1-\frac{k+t-1}{n} \ .
  \end{equation*}
\end{restatable}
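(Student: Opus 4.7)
The plan is to establish both achievability and a matching converse.

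For achievability, I would apply the refinement and lift operation of Oliveira \emph{et al.} (Theorem~3 in the cited work) to the star product scheme of Freij-Hollanti \emph{et al.} The star product scheme attains the asymptotic rate $1-\frac{k+t-1}{n}$, and the lift construction boosts this to the finite-file rate $\frac{1-(k+t-1)/n}{1-((k+t-1)/n)^m}$ while preserving the \newName{} property (with the correction indicated in Appendix~\ref{app:liftedFix}). This side of the proof is essentially by reference and only requires checking that the lifted scheme still satisfies Definition~\ref{def:newPIRproperty}.

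For the converse I would cast the linear scheme in matrix form: each query directed to server $j$ is a matrix $\bQ^{(i)}_j$ whose rows describe the linear combinations the server applies to its stored symbols, and the response is $\bQ^{(i)}_j \bY_j$. The \newName{} hypothesis then says that for every $\T\subseteq[n]$ with $|\T|=t$, the Khatri--Rao-type product of the $\bQ^{(i)}_j$ for $j\in\T$ has the property that its restrictions to the supports corresponding to distinct files are linearly independent; this is the precise algebraic statement alluded to in Remark~\ref{rem:newPIRformulation}. The three key steps would then be: (i) invoke $t$-privacy to conclude that the joint distribution of $(Q_j)_{j\in\T}$ is the same whether the user wants file $W_1$ or file $W_i$, so that file indices can be permuted freely inside any $t$-subset; (ii) invoke the $[n,k]$ MDS storage property, so that for any $\K\subseteq[n]$ with $|\K|=k$ the responses on $\K$ already determine what can be downloaded from the remaining $n-k$ servers; (iii) combine (i)--(ii) with correctness to derive a recursive entropic inequality of the form
\begin{align*}
H(A_{[n]}\mid W_2,\ldots,W_m)\;\geq\; H(W_1)\;+\;\tfrac{k+t-1}{n}\,H(A_{[n]}\mid W_1,W_3,\ldots,W_m)\ ,
\end{align*}
which, iterated $m$ times and closed off using the symmetry from (i), produces exactly the geometric series $1-((k+t-1)/n)^m$ in the denominator.

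The hard part will be step (iii): turning the \newName{} condition into the clean factor $\tfrac{k+t-1}{n}$. In the corner cases $k=1$ or $t=1$ this is easy because $t$-subsets of queries are generic; here both $k$ and $t$ are larger than one, and only the \newName{} hypothesis rules out the kind of accidental linear dependencies among restricted query matrices that the Sun--Jafar counterexample (Appendix~\ref{app:counterExample}) exploits to beat the conjectured bound. Concretely I would argue that under \newName{}, for any $t$-subset $\T$ and any subset of files $\S$, the ``interference subspace'' generated by $W_\S$ in the responses from $\T$ has the expected dimension; averaging this over all $t$-subsets via the symmetry of (i) and combining with the $k$-subset MDS bound on the total downloadable information from (ii) forces the fraction $\tfrac{k+t-1}{n}$ to appear in the recursion. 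The remaining work — translating the algebraic rank statement into entropy via $H(\bQ\bY)=\mathrm{rank}(\bQ)\log|\F|$ for uniform $\bY$, handling conditional entropies over the query randomness, and verifying the symmetrization step — is technical but follows the now-standard Sun--Jafar template.
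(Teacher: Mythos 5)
Your overall architecture coincides with the paper's: achievability by lifting the star product scheme (checking that the lift is of full support-rank, cf.\ Appendix~\ref{app:liftedFix}), and a converse by backwards induction on the file index driven by a recursion of exactly the shape you write in step (iii) (this is Lemma~\ref{keylemma} with $b=r=0$). However, the step you defer as ``the hard part'' is the entire content of the proof, and the mechanism you sketch for it would not close as described. First, the factor $\tfrac{k+t-1}{n}$ does not arise from averaging over $t$-subsets combined with a $k$-subset MDS bound --- that combination would naturally produce $\tfrac{t}{n}$ and $\tfrac{k}{n}$ separately, and welding them into $k+t-1$ is precisely the open difficulty behind the conjecture. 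In the paper the correct unit of analysis is a \emph{single} subset $\N$ of $k+t-1$ servers: Han's inequality~\eqref{eq:han} applied to all $(k+t-1)$-subsets of the decoding set $\H$ produces one $\N$ with $H(A^{s+1}_\N\mid Q,X^{1,\dots,s})\geq \tfrac{k+t-1}{|\H|}H(A^{s+1}_\H\mid Q,X^{1,\dots,s})$, and that is where the fraction comes from.

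Second, your step (i) invokes the identical distribution of queries on $t$-subsets, but the recursion requires an index-swap for the \emph{responses of a $(k+t-1)$-subset}, namely $H(A^{s}_\N\mid Q, X^{1,\dots,s}) = H(A^{s+1}_\N\mid Q, X^{1,\dots,s})$ for $|\N|=k+t-1$ (Lemma~\ref{lem:Nsets}). Since only $t$-privacy is assumed and $t<k+t-1$, you cannot ``permute file indices freely'' inside $\N$; the query distribution on $\N$ may well depend on $i$. The paper bridges this gap by computing, for each query realization, the conditional entropy of the responses of $\N$ exactly: writing the responses as $\mathbf{x}\cdot\bigl((\bG\otimes\mathbf{1}_\beta)\odot\bq\bigr)$ and using the full support-rank hypothesis together with the MDS property of $\bG$ restricted to $k+t-1$ columns, Lemmas~\ref{lem:rankKhatri} and~\ref{lem:starProductDimensionBoundQueryMDS} show this entropy equals $|\colsupp(\bq)|$ restricted to the undesired files' rows --- a quantity that decomposes as a sum over individual servers, so its expectation is fixed by \emph{per-server} privacy alone (Lemma~\ref{lem:expectationSupport}). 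Your closing remark about translating rank into entropy is in the right spirit, but without this reduction from $(k+t-1)$-subset response entropies to per-server column supports, the symmetrization in your step (i) does not reach the sets of servers the recursion actually needs.
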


In what follows, we will prove Conjecture~2 for linear SPIR and Conjecture~\ref{tconj} for linear, MDS-coded, \newName{} PIR (that is, we prove Thm.~\ref{thm:coded-colluded}). We subsequently provide a proof for Conjecture~1 in the case that, in addition to the responses, also the query scheme and the (deterministic) interference cancellation are linear. We coin such a system \emph{strongly-linear}.

\begin{remark}\label{rem:practicality}

We would like to emphasize that strongly-linear schemes form a very relevant and practical case, namely the respective capacity result is known to be achievable \cite{Freij-Hollanti2017, Tajeddine2018} by a small field size $q\geq n$, which is that of a generalized Reed--Solomon code. Moreover, the subpacketization level is independent of $m$ and is (at most) quadratic in $n$ \cite[Eq.~(17)]{Freij-Hollanti2017}. This is in contrast to the schemes in~\cite{Sun2016, Sun2017,Banawan2018, zhang2018optimal}, where each file is assumed to be subdivided into a number of packets that grows exponentially with the number of files $m$.  It was shown in~\cite{zhang2018optimal} that an exponential (in $m$) number  of packets per file was necessary for a PIR scheme with optimal download rate, under the assumption that all servers respond to the queries and the responses have the same size. In \cite{zhou2020capacity} a scheme was presented that achieves the capacity with only $O(n)$ packets by making a weaker assumption on the size of the responses than in \cite{zhang2018optimal}.
\end{remark}

In Section~\ref{sec:symmetricCap} we prove the capacity of MDS-coded, symmetric PIR with colluding, byzantine, and nonreponsive servers, as stated in Conjecture~\ref{con:symCap}, for specific distributions of the randomness shared by the servers (for rigorous definitions and known results see Section~\ref{sec:symmetricCap}).

Finally, in Section~\ref{sec:subpacketization}, we show that, when assuming that files and reponses are over the same field, the rate achievable by strongly linear schemes is in fact optimal in some parameter regimes by generalizing the results of \cite{sun2017optimal}.

\section{Problem Setup and Known Results} \label{sec:SystemSetup}

We consider a distributed storage system with $n$ servers storing $m$ files $X=\{X^1,X^2,\ldots,X^m\}$, where each $X^l$ is a random variable over $\F^{\alpha \times k}$. Interpreted as a matrix, the \emph{data matrix} is denoted by $\bX$, where each block of $\alpha$ consecutive rows corresponds to a file. This matrix is encoded with an $(n,k)$ MDS storage code and server $j$ stores the $j^{\rm th}$ thick column of
\begin{equation*}
  \bY = \bX\cdot \bG =
  \begin{pmatrix}
    \bX^1\\ \bX^2 \\ \vdots \\ \bX^m
  \end{pmatrix} \cdot \bG =
  \begin{pmatrix}
    \bY_1^1 & \bY_2^1 & \dotsi & \bY_n^1 \\
    \bY_1^2 & \bY_2^2 & \dotsi & \bY_n^2 \\
    \vdots & \vdots & \ddots & \vdots \\
    \bY_1^m & \bY_2^m & \dotsi & \bY_n^m \\
  \end{pmatrix} \quad \in \F^{\alpha m \times n}\ ,
\end{equation*}
where $\bG$ is a generator matrix of the storage code. Note that $\bY^l=\bX^l \bG$ is the encoded version of the $l^{\rm th}$ file.
For $l\in [m]$, $j \in [n]$ and $\mathcal{I}\subset [n]$, we define $Y^l$, $Y_j$ and $Y_{\mathcal{I}}$ to be the random variables corresponding to the encoded version of $\bX^l$,\emph{ i.e.}, $\bY[\psi_{\alpha}(l),:]$, the $j^{\rm th}$ column of $\bY$, \emph{i.e.,} $\bY[:,j]$, and the restriction of $\bY$ to the columns indexed by $\mathcal{I}$, \emph{i.e.}, $\bY[:,\mathcal{I}]$, respectively.
We think of $\alpha$ as the number of {\em stripes} of each file, and each stripe is encoded independently of other stripes.
The $m$ files are independent and each consists of $k$ i.i.d. randomly drawn symbols from $\F_q^\alpha$, hence, for the entropies it holds that
\begin{align*}
  &H(X^i) = k\alpha \log q, \ \forall \ i \in [m]\\
  &H(X^1,\ldots,X^m) = mk\alpha \log q \ .
\end{align*}
We consider MDS codes, so every $k$ servers exactly recover the file, \emph{i.e.}, for any set $\mathcal{W} \subset [n]$ with $|\mathcal{W}| = k$ it holds that
\begin{align*}
  H(Y_{\mathcal{W}}) = H(X^1,\ldots,X^m) &= mk\alpha \log q\\
  H(X^1,\ldots,X^m|Y_{\mathcal{W}}) &= 0 \ .
\end{align*}
We also assume that the servers have access to a shared source of randomness, which has been shown \cite{gertner2000protecting} (see also \cite[Footnote~2]{sun2018sym}) to be required for enforcing the property of symmetry, \emph{i.e.}, ensuring that the user learns nothing about the files other than the requested file. Formally, let $\cS$ be a vector space over $\F$, and let
\begin{align*}
S=(S_1,\dots ,S_n)\in \cS^n
\end{align*}
be a random vector, where the symbols of $S_j$ may be used by the $j^{\rm{th}}$ server.

In a general PIR scheme, a user desiring the file with index $i$ picks the corresponding query
\begin{align*}
Q^i =  \left(Q_1^i, \ldots, Q_n^i\right)
\end{align*}
from the set of all possible queries $\cQ$, and sends $Q_j^i$ to the $j^{\rm th}$ server. Every server returns a response $A_j^i$ that is an $\beta$-tuple of symbols in $\F$. For a nonadversarial server, this response depends on the query $Q_j^i$, the symbols $Y_j$ stored at server $j$, and the randomness~$S$ shared by the servers, in a way known to the user. The list of responses from all servers for a given query is denoted by
\begin{align*}
A^i = \left(A_1^i,\ldots,A_n^i\right) \ .
\end{align*}
The desired file $X^i$ should now be recoverable from the responses, meaning that
\begin{equation}
\label{eq:correctness}
H(X^i | Q^i, A^i)=0 \ .
\end{equation}

In this work we only consider PIR schemes in which the query functions are linear.
\begin{definition}[Linear PIR]\label{def:linearPIR}
A PIR scheme is said to be \emph{linear} if
\begin{itemize}
    \item the query $Q^i$ can be represented as a matrix $\bQ^i \in F^{\alpha m \times \beta n}$, where each $\beta$-thick column $\bQ^i[:,\psi_\beta(j)]$ corresponds to the query $Q_j^i$ to server $j \in [n]$, and
\item the responses $A_j^i$ of server $j\in [n]$ are given by the vector
\begin{align*}\label{eq:linearPIR}
  \bA^i[:,\psi_\beta(j)] &= \big(\myspan{\bY[:,j],\bQ^i[:,(j-1)\beta+s]} \\
   &\hspace{2.2cm}+ \bS[:,(j-1)\beta+s]\big)_{s\in [\beta]} \\
  &= \bY[:,j]^\top \cdot \bQ^i[:,\psi_\beta(j)] + \bS[:,\psi_\beta(j)]\ ,
\end{align*}
where the vector $\bS\in\F^{1\times \beta n}$ depends on the randomness $S$ shared by the servers.
\end{itemize}
\end{definition}
Briefly and nonrigorously, in a linear PIR scheme each server receives $\beta$ query vectors and responds with the $\beta$ inner products between these vectors and the column of $\bY$ that it stores (possibly plus an additional symbol given by the shared randomness).
In the case of nonsymmetric PIR, the servers do not need any shared randomness and we may assume that~$\bS=\mathbf{0}$.

It is customary to think of the $\beta$ coordinates of the queries $Q_j^i$ as {\em iterations}. In this terminology, a linear PIR scheme consists of $\beta$ iterations, where in iteration $s$ the user sends for each $j\in [n]$ the query vector
\begin{align*}
\bQ^i[:,(j-1)\beta+s]\quad \in \F^{\alpha m \times 1}
\end{align*}
and receives a response row vector
\begin{align*}
  \big(\bA[:,(j-1)\beta+s]\big)_{j \in [n]} \quad \in \F^{1\times n}.
\end{align*}
It is easy to see that it is suboptimal to send linearly dependent queries to servers. However, in general, submatrices of the query matrix may indeed be nontrivially linearly dependent (see \cite{Sun2018conj} and Appendix~\ref{app:counterExample}), \emph{i.e.}, have supported columns that are linearly dependent. The technical assumption we make in the following, given below in Definition~\ref{def:newPIRproperty}, restricts all supported columns of the query for a subset of less than or equal to $t$ servers to be linearly independent, even when restricting to an arbitrary subset of files. We therefore coin these schemes as \newName{} PIR schemes.
\begin{definition}[\NewName{} PIR] \label{def:newPIRproperty}
  A linear PIR scheme is said to be of \emph{\newName{}} if for every query realization $\bQ^i=\bq \in \F^{\alpha m\times \beta n}$,
  any subset $\cT\subseteq[n]$ of $|\cT|\leq t$ servers, and any file index $j\in[m]$ it holds that
  \begin{align*}
        \rank(\bq[\psi_\alpha(j), \psi_\beta(\cT)])=|\colsupp(\bq[\psi_\alpha(j), \psi_\beta(\cT)])|.
  \end{align*}
\end{definition}

Most PIR schemes in the literature are indeed of \newName{}, including those in~\cite{Sun2017,sun2018sym,Sun2016,Banawan2018,Banawan2019byzcoll, oliveira2019one,zhou2020capacity,li2020towards,tian2019capacity,zhu2019new}.
A notable example of a scheme that is not of \newName{} is that in~\cite{Sun2018conj}, as discussed in Appendix~\ref{app:counterExample}.

In general, the goal of information-theoretic private information retrieval with $t$-collusion is for the user to retrieve a file such that any set of $t$ storage servers learns nothing about the index of the desired file. This is referred to as \emph{user privacy}.
\begin{definition}[User Privacy with $t$-Collusion]
  Any $t$ colluding servers shall not be able to obtain any information about the index of the requested file, \emph{i.e.}, the mutual information
  \begin{equation}\label{eq:userpriv}
    I(i ;Q_\cT^{i},A_\cT^{i},Y_\cT,S) = 0, \quad \forall \ \cT \subset [n], |\cT| = t \ .
  \end{equation}
\end{definition}

We will also consider symmetric PIR (SPIR), where the user is not supposed to learn any information about the files other than the requested one.
\begin{definition}[Server Privacy]\label{def:symmetricPIR}
  The user shall learn no information about files other than the requested one, \emph{i.e.},
  \begin{equation}\label{eq: serverpriv}
    I(X^{[m]\setminus i} ; A^i,\cQ,i) = 0 \ .
  \end{equation}
\end{definition}
A scheme that satisfies \eqref{eq:correctness} and \eqref{eq:userpriv} is called a PIR scheme. If the scheme in addition satisfies Definition~\ref{def:symmetricPIR}, then it is called an  SPIR scheme.
We are interested in the capacities of linear PIR and SPIR with collusion and adversaries, \emph{i.e.}, the highest achievable rate at which a desired file can be retrieved under these constraints.

\begin{definition}[(S)PIR Rate and Capacity]
  The {\em rate} of an (S)PIR scheme is the number of information bits of the requested file retrieved per downloaded answer bits, \emph{i.e.},
  \begin{equation*}
    R_{\mathrm{(S)PIR}} = \frac{H(X^i)}{\sum_{j=1}^n H(A_j^i)} \ .
  \end{equation*}
\end{definition}

In order to achieve symmetric privacy, the servers require some amount of shared randomness~\cite{gertner2000protecting}.
\begin{definition}[Secrecy Rate]
  The secrecy rate is the amount of common randomness shared by the storage servers relative to the file size, \emph{i.e.},
  \begin{equation*}
    \rho_\spir = \frac{H(S)}{H(X^i)} \ .
  \end{equation*}
\end{definition}

We give some results closely related to the ones presented in this work.
\begin{theorem}[Capacity of TSPIR~{\cite[Theorem~1]{Wang2017col}} and TBSPIR~{\cite[Theorem~1]{Wang2017adv}}]
  For linear symmetric private information retrieval from a set of $m\geq 2$ files stored on $n$ servers with an $(n,k)$ MDS code (for replication $k=1$), where any $t$ servers may collude, the capacity is
    \begin{equation*}
    C_{\tspir}^{(n,k)-\mds} = \left\{ \begin{array}{ll} 1-\frac{k+t-1}{n}, & \mathrm{if} \;\rho_\tspir^{(n,k)-\mds} \geq \frac{k+t-1}{n-k-t+1}\\ 0, & \mathrm{otherwise} \end{array} \right. .
  \end{equation*}
  For symmetric private information retrieval from a set of $m\geq 2$ files replicated on $n$ servers, where any $t$ servers may collude and any $b$ servers are Byzantine, the capacity is
  \begin{equation*}
      C_{\tbspir}^{[n,1]-\mds} = \left\{ \begin{array}{ll} 1-\frac{2b+t}{n}, & \mathrm{if} \;\rho_\tbspir \geq \frac{t}{n-t-2b}\\ 0, & \mathrm{otherwise} \end{array} \right. .
  \end{equation*}
\end{theorem}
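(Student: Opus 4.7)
My plan is to split the proof along two axes: the rate upper bound (converse) versus the matching achievable scheme, and the two settings, TSPIR with $(n,k)$ MDS coding versus TBSPIR with replication and $b$ Byzantine servers. In each case the result to establish is twofold, giving both the maximum download rate and the minimum secrecy rate $\rho$ below which the scheme collapses to rate zero. I would first handle the TSPIR case end-to-end and then argue that the TBSPIR case follows by replacing ``MDS-decodability from $k$ servers'' with ``robust decodability against $b$ adversarial servers from replicated copies'', which changes the effective overhead from $k+t-1$ to $2b+t$.

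For achievability of TSPIR, I would build on the star product scheme of~\cite{Freij-Hollanti2017}: the user's queries for file $i$ are drawn from a $t$-private secret-sharing code $\C_{\mathrm{ret}}$ of dimension $t$, so that the raw responses $\bY_j^\top \bQ_j^i$ lie in the Schur (coordinatewise) product $\C_{\mathrm{sto}} \star \C_{\mathrm{ret}}$, which has dimension exactly $k+t-1$ under the MDS hypothesis $n>k+t-1$. To enforce server privacy, each server $j$ adds a share $S_j$ derived from a uniform codeword of the dual $(\C_{\mathrm{sto}} \star \C_{\mathrm{ret}})^\perp$, chosen so that after dual-projection the user recovers a rank-$(n-k-t+1)$ function of $X^i$ and nothing about the remaining files. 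A single iteration thus delivers $n-(k+t-1)$ new $\F$-symbols of $X^i$ out of $n$ downloaded symbols, giving rate exactly $1-\frac{k+t-1}{n}$ and consuming $\frac{k+t-1}{n-(k+t-1)}$ randomness symbols per information symbol, which saturates the stated $\rho_\tspir$ threshold.

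For the converse I would adopt the standard Sun--Jafar style chain of inequalities, adapted to incorporate shared randomness and server privacy. Let $L=H(X^i)$ and $D=\sum_{j=1}^n H(A_j^i)$. Combining decodability \eqref{eq:correctness} with $t$-privacy \eqref{eq:userpriv} permits symmetrization of the responses over all $t$-subsets of servers and yields $D \geq L\cdot n/(n-k-t+1)$, which is the rate bound. For the randomness threshold, the crucial identity is $I(X^{[m]\setminus i};A^i,\Q,i)=0$ from \eqref{eq: serverpriv}: together with MDS-reconstructability from any $k$ servers, this forces $H(A^i\mid X^i,\Q,i)$ to be at least as large as the entropy that the $m-1$ other files would contribute to the responses in the absence of randomness, and since $A^i$ is a deterministic function of $(X,Q,S)$, this excess must be supplied by $S$. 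Tracking the bookkeeping through the handover recursion gives $H(S) \geq L\cdot\frac{k+t-1}{n-(k+t-1)}$, which is exactly the stated threshold; a strictly smaller $H(S)$ forces $I(X^{[m]\setminus i};A^i,\Q,i)>0$, contradicting server privacy and forcing capacity zero.

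The replicated TBSPIR case is structurally identical after two modifications: (i) replace ``MDS-decoding from any $k$ servers'' by ``error-decoding from $n$ replicated servers tolerating $b$ adversarial corruptions'', which inflates the effective overhead by an extra $2b$, and (ii) route the converse through a $(2b+t)$-server error/privacy argument in place of the $(k+t-1)$-server handover, yielding rate $1-\frac{t+2b}{n}$. Because Byzantine servers attack correctness and not privacy, only the $t$-collusion component of the overhead must be masked by $S$, while the $2b$ error-correction slack is carried by the geometric structure of the code; this is why the secrecy threshold becomes $t/(n-t-2b)$ rather than $(t+2b)/(n-t-2b)$. I expect the main obstacle to lie in making this randomness lower bound fully rigorous in the Byzantine setting: one must condition on an adversary's worst-case choice of $b$ tampered responses while still tying the residual entropy back to $H(S)$, without double-counting the error-correction redundancy. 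Linearity (Definition~\ref{def:linearPIR}) keeps the bookkeeping tractable by reducing entropy manipulations on $A^i$ to rank computations on the query matrix $\bQ^i$.
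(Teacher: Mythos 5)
First, note that this theorem is not proved in the paper at all: it is a quoted prior result, with the MDS--TSPIR part taken from \cite[Theorem~1]{Wang2017col} and the replicated TBSPIR part from \cite[Theorem~1]{Wang2017adv}. The closest the paper comes to a proof is its own generalization in Section~\ref{sec:symmetricCap} (Theorem~\ref{thm:symmetricCapacity} together with the secrecy-rate bound), which is established from Lemma~\ref{lem:DecodeFromN2B}, Lemma~\ref{lem:AnswerQueryDependence}, Lemma~\ref{lem:DropFileTBSPIR}, and Han's inequality \eqref{eq:han}. Your overall architecture --- an entropy-accounting converse over subsets of servers, a separate scheme-independent lower bound on $H(S)$ that yields the zero-capacity regime, and achievability by a star-product-type scheme with shared randomness added to the answers --- is the same architecture used there and in the cited works, so the plan is pointed in the right direction.

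However, there are concrete gaps. (i) Achievability: adding randomness drawn from the dual $(\C\star\D_Q)^\perp$ does not hide the undesired files, because the interference lives in $\C\star\D_Q$ itself; the randomness must be a uniformly random element of a code covering that interference space (in Theorem~\ref{thm:symmetricCapacity} and \cite{Tajeddine2018} it is a random codeword of the $(n,k)$ storage code, encoded consistently across servers), and it is precisely this choice that yields the secrecy rate $\frac{k+t-1}{n-k-t+1}$. (ii) Converse: the step your sketch leaves unnamed is the key lemma that lets one drop the desired file from the conditioning for any set $\N$ of $k+t-1$ servers, $H(A_\N^i\mid X^i,Q_\N^i)=H(A_\N^i\mid Q_\N^i)$; this is where user privacy and server privacy interact, and beyond replication it does \emph{not} hold unconditionally --- it requires assumptions on how the shared randomness enters the answers (matched or additive randomness, cf.\ Lemma~\ref{lem:DropFileTBSPIR} and \cite{Wang2019isit,Wang2019symmetric}), a subtlety your ``handover recursion'' glosses over. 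Moreover, the symmetrization is over $(k+t-1)$-subsets via Han's inequality, not over $t$-subsets, and no induction over the number of files is needed: for SPIR the capacity is independent of $m$, and the bound follows from a single chain once server privacy is invoked. (iii) Byzantine case: the missing ingredient is the zero-error cut-set argument that the honest answers of any $n-2b$ servers must determine the file (Lemma~\ref{lem:DecodeFromN2B}); the same entropy chain is then run inside the honest set $\H$ with $|\H|=n-2b$, which is what produces the numerator $k+t-1$ (equal to $t$ for $k=1$) and the denominator $n-t-2b$. Your heuristic that ``only the collusion component must be masked'' lands on the right formula, but the justification comes from restricting Han's inequality to $\H$, not from an informal claim about not double-counting error-correction redundancy.
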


It is known that when $t=1$  or $k=1$, the above SPIR capacity coincides with the asymptotic capacity of PIR with no server privacy \cite{Banawan2018, Sun2016}. Motivated by this, our aim is to prove that this is the case more generally. Namely, we will prove Conjectures~2 and~3 for MDS-coded linear PIR, and then proceed to provide a proof in the case of strongly-linear schemes (see Definition~\ref{def:strongly}) for Conjecture~1, further extending the conjectured asymptotic capacity to the nonasymptotic regime in this special case.

In our proofs, we will repeatedly use Han's inequality for joint entropies~\cite{Cover:1991}, which we state here for completeness. Let $W=\{W_1,\dots W_n\}$ be a set of random variables defined on the same probability space.
Denote by $\binom{[n]}{k}$ the set of all subsets of $[n]$ with cardinality $k$. Then
\begin{equation}\label{eq:han}
\frac{k}{n}H(W_1,\dots W_n)\leq \frac{1}{\binom{n}{k}}\sum_{\mathcal{T}\in\binom{[n]}{k}} H(W_\mathcal{T}).
\end{equation}

\section{Preliminary Lemmas}

We begin by introducing some intermediate lemmas which will be required in both Section~\ref{sec:capa} and Section~\ref{sec:symmetricCap}.
Our proofs of linear \newName{} MDS-TPIR in Section~\ref{sec:capa} and MDS-TBSPIR capacity in Section~\ref{sec:symmetricCap} are partly based on the proofs of TBSPIR capacity in a replicated setting \cite{Wang2017adv} as well as the proofs of SPIR capacity~\cite{Wang2016} and TSPIR capacity~\cite{Wang2017col} from MDS-coded storage. We first prove the intermediate results for a set of servers that is free of adversaries and then, similar to~\cite{Wang2017adv}, argue that the entropy of the adversarial responses has to be the same as for nonadversarial servers to obtain the capacity. For completeness, the proofs of the intermediate steps are included, though some of the proofs can be taken directly from~\cite{Wang2017col} and~\cite{Wang2017adv}.

Similar to the replicated case in~\cite[Lemma 6]{Wang2017adv}, in the following we argue that when considering zero error probability, \emph{i.e.}, guaranteeing that the user can decode if the number of corrupted answers is less than or equal to $b$ and the number of nonresponsive servers is less than or equal to $r$, every realization of $n-2b-r$ authentic answers has to be unique.

\begin{lemma}\label{lem:DecodeFromN2B}
  In an optimal scheme with zero error probability for $b$ adversarial and $r$ nonresponsive servers it holds that
  \begin{equation*}
    H(X^i | A_{\mathcal{H}}^i , \cQ) = 0 \ ,
  \end{equation*}
  for any set $\mathcal{H}$ of honest servers with $|\mathcal{H}| \geq n-2b-r$.
\end{lemma}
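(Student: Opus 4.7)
The plan is an adversarial-indistinguishability argument, standard in the Byzantine error-correction literature and closely analogous to \cite[Lemma~6]{Wang2017adv}. Since conditioning can only decrease entropy, it suffices to handle the case $|\mathcal{H}|=n-2b-r$ exactly; the claim for any larger honest set then follows because $H(X^i\mid A_{\mathcal{H}'}^i,\Q)\le H(X^i\mid A_\mathcal{H}^i,\Q)$ whenever $\mathcal{H}\subseteq\mathcal{H}'$. I would assume for contradiction that $H(X^i\mid A_\mathcal{H}^i,\Q)>0$ for some such $\mathcal{H}$. Then there exist a query realization $\Q=\bq$, a shared-randomness realization $S=s$, and two database realizations $x,x'$ with $x^i\neq x'^i$ whose honest responses on $\mathcal{H}$ coincide:
\[
A_\mathcal{H}^i(x,\bq,s)=A_\mathcal{H}^i(x',\bq,s).
\]

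Next I would partition the complement $[n]\setminus\mathcal{H}$, which has size $2b+r$, into three disjoint sets $B_1,B_2,R$ with $|B_1|=|B_2|=b$ and $|R|=r$, and exhibit two adversarial scenarios consistent with the $(b,r)$-model.
\begin{itemize}
\item Scenario~1: the true database is $x$; servers in $R$ are nonresponsive; servers in $B_1$ are Byzantine and emit exactly the responses an honest server would send for database $x'$ under $(\bq,s)$; all remaining servers, including those in $\mathcal{H}\cup B_2$, respond honestly for $x$.
\item Scenario~2: the true database is $x'$; the same set $R$ is nonresponsive; servers in $B_2$ are Byzantine and emit the honest responses corresponding to $x$ under $(\bq,s)$; every other server responds honestly for $x'$.
\end{itemize}

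A position-by-position comparison then shows the user's received transcript is identical in both scenarios: on $\mathcal{H}$ both transcripts are honest and equal by the choice of $x$ and $x'$; on $B_1$ both equal the ``honest response for $x'$'' (Byzantine in Scenario~1, honest in Scenario~2); on $B_2$ both equal the ``honest response for $x$''; and on $R$ neither scenario delivers an answer. Hence any decoder must output the same $\hat X^i$ in both scenarios, contradicting zero error probability since $x^i\neq x'^i$. This proves the claim.

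The main obstacle is not any quantitative estimate but a careful formalization of the Byzantine model: one must justify that a worst-case Byzantine server can produce precisely the transcript it would have sent under a different database, which in turn requires that Byzantine servers be output-unconstrained and have access to the shared randomness $s$. Under the standard worst-case model used in \cite{Wang2017adv} these conditions hold automatically, and the argument reduces to the transcript-simulation above; any attempt to restrict the adversary beyond this standard model would require redoing this step.
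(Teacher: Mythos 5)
Your proposal is correct and follows essentially the same argument as the paper: assume two realizations of file $i$ yield identical honest responses on a set of $n-2b-r$ honest servers, partition the remaining $2b$ servers into two Byzantine sets of size $b$ that each simulate the honest transcript of the other realization, and conclude that the user cannot distinguish the two scenarios, contradicting zero-error decoding. The paper additionally remarks that under the zero-error requirement the Byzantine servers need not know the desired index, only that the simulated responses occur with nonzero probability, which matches your closing observation about the worst-case adversary model.
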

\begin{proof}
  The proof is similar to the replicated case of \cite[Lemma~6]{Wang2017adv} and included for completeness.
  We show that the response of any $n-2b-r$ honest servers must suffice to correctly recover the desired file by proving that the corresponding responses must be unique for any realization of file $i$. Denote by $A_j^{i}(X^i=x^i)$ the \emph{honest} response of the $j^{\rm th}$ server for the realization $X^i=x^i$ of the $i^{\rm th}$ file. For a contradiction, assume that for a set $\cR \subset [n]$ with $|\cR| = r$ of nonresponsive servers and a set $\cH \subset [n] \setminus \cR$ of honest servers with $|\cH| = n-2b-r$ it holds that $A_{\cH}^i(X^i=x^i) = A_{\cH}^i(X^i=\tilde{x}^i)$ for two different realizations $x^i \neq \tilde{x}^i$ of file $i$.
  Partition the $2b$ remaining servers $\mathcal{B} = [n]\setminus (\cH \cup \cR)$ into two subsets $\cB_1$ and $\cB_2$, each of size $b$, and denote their responses by $A_{\mathcal{B}_1}^i$ and $A_{\mathcal{B}_2}^i$, respectively. Now consider the following cases:
  \begin{itemize}
  \item The realization of file $i$ is $X^i = x^i$. The servers of $\cB_1$ are \emph{adversarial} and reply with $A_{\cB_1}^i(X^i = \tilde x^i)$. The servers of $\cB_2$ are \emph{honest}, \emph{i.e.}, they reply with $A_{\cB_2}^i(X^i = x^i)$.
  \item The realization of file $i$ is $X^i = \tilde x^i$. The servers of $\cB_1$ are \emph{honest}, \emph{i.e.}, they reply with $A_{\cB_1}^i(X^i = \tilde x^i)$. The servers of $\cB_2$ are \emph{adversarial} and reply with $A_{\cB_2}^i(X^i = x^i)$.
  \end{itemize}
  As $A_{\cH}^i(X^i=x^i) = A_{\cH}^i(X^i=\tilde{x}^i)$ by assumption, the user receives exactly the same responses from the servers in both cases and is therefore not able to differentiate between the two realizations. Hence unique decoding would fail, thereby violating the zero error probability requirement. Note that, as we require the \emph{zero} decoding error probability it is not necessary for the adversarial servers to know the index $i$. Instead, in each case it suffices that the probability of the adversarial servers replying with the respective responses is nonzero. We conclude that for any two different realizations $x^i \neq \tilde{x}^i$ of file $i$ we have $A_{\cH}^i(X^i=x^i) \neq A_{\cH}^i(X^i=\tilde{x}^i)$, and the statement of the lemma follows.
\end{proof}

The following basic lemma will also be required in multiple proofs and applies to both the symmetric and nonsymmetric setting.
\begin{lemma}\label{lem:AnswerQueryDependence}
  For any set $\cN \subset [n]$ of nonadversarial servers
  \begin{equation*}
    H(A_{\cN}^i | \cQ, X^i, Q_{\cN}^i ) = H(A_{\cN}^i | X^i, Q_{\cN}^i) .
  \end{equation*}
\end{lemma}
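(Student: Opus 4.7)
The plan is to reduce the claim to showing that
\begin{equation*}
I\!\left(A_\N^i \,;\, Q_{[n]\setminus\N}^i \,\middle|\, X^i, Q_\N^i\right) = 0,
\end{equation*}
which is equivalent to the asserted entropy equality. One direction holds trivially since conditioning never increases entropy, and $\Q$ together with $Q_\N^i$ reveals exactly the complementary queries $Q_{[n]\setminus\N}^i$.

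The first ingredient I would use is the definition of a nonadversarial server: each $A_j^i$, $j\in\N$, is a deterministic function of the query $Q_j^i$ received at that server, of the thick column $Y_j$ stored at it, and of the shared randomness $S$. Hence $A_\N^i$ is a deterministic function of $(Q_\N^i, Y_\N, S)$, so
\begin{equation*}
I\!\left(A_\N^i \,;\, Q_{[n]\setminus\N}^i \,\middle|\, X^i, Q_\N^i, Y_\N, S\right) = 0.
\end{equation*}

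The second ingredient is that the user's queries are generated from the index $i$ and the user's private randomness $\mathcal{F}$ alone, independently of the stored data $Y$ and of the shared server randomness $S$. The joint distribution therefore factorises as the product of a term depending only on $(X^{[m]}, S)$ and a term depending only on $(i, \mathcal{F})$. From this factorisation I would derive the conditional independence
\begin{equation*}
(Y_\N, S) \,\perp\, Q_{[n]\setminus\N}^i \,\Big|\, X^i, Q_\N^i,
\end{equation*}
which yields $I(Y_\N, S \,;\, Q_{[n]\setminus\N}^i \mid X^i, Q_\N^i)=0$. Combining both ingredients via a standard chain-rule expansion,
\begin{align*}
I\!\left(A_\N^i \,;\, Q_{[n]\setminus\N}^i \,\middle|\, X^i, Q_\N^i\right) &\le I\!\left(A_\N^i, Y_\N, S \,;\, Q_{[n]\setminus\N}^i \,\middle|\, X^i, Q_\N^i\right)\\
&= I\!\left(Y_\N, S \,;\, Q_{[n]\setminus\N}^i \,\middle|\, X^i, Q_\N^i\right) + I\!\left(A_\N^i \,;\, Q_{[n]\setminus\N}^i \,\middle|\, X^i, Q_\N^i, Y_\N, S\right),
\end{align*}
then collapses everything to zero.

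I expect the main subtle point to be making the second step precise. Although it is intuitive that the queries carry no information about $(Y_\N, S)$ beyond what is already implied by $(X^i, Q_\N^i)$, the conditioning is on the correlated pair $(X^i, Q_\N^i)$ rather than on the index $i$ alone, so the stated conditional independence does not follow directly from joint independence; it requires the short factorisation argument outlined above. Once this is in hand, the lemma reduces to a one-line application of the chain rule.
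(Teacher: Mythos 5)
Your proof is correct and follows essentially the same route as the paper's: the paper bounds $I(A_{\N}^i;\Q\mid X^i,Q_{\N}^i)$ by $I(A_{\N}^i,X^{[m]},S;\Q\mid X^i,Q_{\N}^i)$, drops $A_{\N}^i$ because it is a deterministic function of $(Q_{\N}^i,X^{[m]},S)$, and kills the remainder using the independence of $(X^{[m]},S)$ from the queries --- precisely your two ingredients, with $(X^{[m]},S)$ playing the role of your $(Y_\N,S)$. One small caution: here $\Q$ denotes the whole query ensemble rather than just the realized $n$-tuple $Q^i$, so instead of reducing to $Q^i_{[n]\setminus\N}$ you should run your identical chain-rule argument with $\Q$ itself in the second slot, which goes through unchanged since $\Q$ is generated from the user's index and private randomness alone and is therefore independent of $(X^{[m]},S)$, hence of $(Y_\N,S)$.
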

\begin{proof}
  We first show that $I(A_{\cN}^i ; \cQ | X^i, Q_{\cN}^i) \leq 0$, as follows
  \begin{align*}
    I(A_{\cN}^i ; \cQ &| X^i, Q_{\cN}^i) \leq I(A_{\cN}^i, X^{[m]}, S ; \cQ | X^i, Q_{\cN}^i)\\
                                    &\stackrel{(a)}{=} I(X^{[m]} , S; \cQ | X^i, Q_{\cN}^i)\\
    &\stackrel{\phantom{(\sfa)}}{=} H( X^{[m]} , S | X^i, Q_{\cN}^i) - H( X^{[m]} , S | X^i, Q_{\cN}^i, \cQ) \\
    &\stackrel{(b)}{=} H( X^{[m]} , S | X^i) - H( X^{[m]} , S | X^i) = 0 \ ,\\
  \end{align*}
  where $(a)$ follows because the answers $A_{\cN}^i$ are a function of the queries $Q_{\cN}^i$, the files $X^{[m]}$, and the shared randomness $S$ (for the nonsymmetric case $S$ can be thought of as a constant, \emph{e.g.}, $S=\mathbf{0}$), and $(b)$ holds because the files $X^{[m]}$ and shared randomness $S$ are independent of the queries. As mutual information is nonnegative, it follows that
  \begin{align*}
    I(A_{\cN}^i ; \cQ | X^i, Q_{\cN}^i) = &H(A_{\cN}^i | X^i, Q_{\cN}^i) \\
                                       &- H(A_{\cN}^i | \cQ , X^i, Q_{\cN}^i) = 0\\
    \Rightarrow H(A_{\cN}^i | X^i, Q_{\cN}^i) = &H(A_{\cN}^i | \cQ , X^i, Q_{\cN}^i) \ .
  \end{align*}
\end{proof}

\section{Proof of Linear \newName{} MDS-TPIR Capacity}
\label{sec:capa}

\subsection{Converse}
\label{subsec:converse}

A novel formulation of the key Lemma~\ref{keylemma}, which is slightly stronger than the corresponding lemmas in~\cite{Wang2016, Wang2017col}, allows us to induct over the number of files, without requiring the symmetry assumption. We then use this induction result to prove the MDS-TPIR capacity for linear, \newName{} schemes. The same proof also yields an upper bound for the capacity in the presence of adversarial servers. However, the upper bound for MDS-TBPIR does not correspond to any known scheme constructions, and does not even agree with the MDS-TBSPIR capacity asymptotically as the number of files grows to infinity.

The following lemma, which is key to our capacity bounds, describes how sets of as many as $k+t-1$ servers will give responses that are independent of the index of the desired file, even when conditioned on an arbitrary subset of files. In order to show this we need some additional technical results on the rank of the Khatri-Rao product \cite{khatri1968solutions} of certain matrices. To not disturb the flow of the paper, we give these (lengthy) statements and corresponding proofs in Appendix~\ref{app:technicalLemmas}.
As we are only concerned with nonsymmetric PIR here, we assume $\bS = \mathbf{0}$ for the remainder of this section.
\begin{lemma}\label{lem:Nsets}
  Let $\cN \subset [n]$ with $n>k+t-1$ be a set of $|\cN| = k+t-1$ nonadversarial servers, and let $\cF\subsetneq [m]$ be any proper subset of the rows of the MDS-coded storage system. For any optimal linear, \newName{} PIR scheme, and any $i, i'\in[m]$, it holds that \begin{equation}\label{eq:answers}
    H(A_{\cN}^i | X^{\cF}, Q_{\cN}^i) = H(A_{\cN}^{i'} | X^{\cF}, Q_{\cN}^{i'}) \ .
  \end{equation}
\end{lemma}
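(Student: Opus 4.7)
The plan is to express the conditional entropy as an expected rank and then to use privacy on subsets of size $\le t$, combined with the full support-rank property and the Khatri-Rao rank lemmas of Appendix~\ref{app:technicalLemmas}, to show that this expected rank does not depend on~$i$.

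\textbf{Step 1 (Rank representation).} Since the scheme is linear and, in the nonsymmetric setting, $\bS=\mathbf{0}$, for any realization $\bQ_{\N}^{i}=\bq$ the answer $\bA_{\N}^{i}$ is a fixed affine function of $\bX$. Conditioning on $X^{\FF}$ fixes the rows of $\bX$ indexed by $\FF$, while the remaining rows $\bX[\bar{\FF},:]$, with $\bar{\FF}:=[\alpha m]\setminus\FF$, stay uniform and mutually independent. Expanding $\bY[:,j]^{\top}\bq[:,(j-1)\beta+s]$, the coefficient of $\bX[r,l]$ in the $(j,s)$-th answer coordinate equals $\bq[r,(j-1)\beta+s]\cdot\bG[l,j]$. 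Hence there is a matrix $\mathbf{M}_{\N}^{i}(\bq)$, whose $(j,s)$-th row is the Kronecker product $\bq[\bar{\FF},(j-1)\beta+s]^{\top}\otimes\mathbf{g}_{j}^{\top}$ (so that $\mathbf{M}_{\N}^{i}(\bq)^{\top}$ is a Khatri-Rao product of $\bq[\bar{\FF},\psi_{\beta}(\N)]$ with an inflated copy of $\bG_{\N}$), such that
\begin{align*}
H(\bA_{\N}^{i}\mid X^{\FF},\bQ_{\N}^{i}=\bq)=\rank(\mathbf{M}_{\N}^{i}(\bq))\cdot\log q.
\end{align*}
Averaging, $H(\bA_{\N}^{i}\mid X^{\FF},\bQ_{\N}^{i})=\mathbb{E}_{\bq}\bigl[\rank(\mathbf{M}_{\N}^{i}(\bq))\bigr]\log q$, so it suffices to show this expectation agrees for $i$ and $i'$.

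\textbf{Step 2 (Privacy controls supports of size-$\le t$ sub-blocks).} For each $\T\subseteq \N$ with $|\T|\le t$, user privacy gives $I(i;Q_{\T}^{i},A_{\T}^{i},Y_{\T},S)=0$. Since the files are independent of $i$ and of the queries, the joint distribution of $(X^{\FF},Q_{\T}^{i})$ is also independent of $i$, and in particular, for every file $l\in[m]$, the column support $\colsupp(\bq[\psi_{\alpha}(l),\psi_{\beta}(\T)])$ has the same distribution under $\bQ_{\N}^{i}$ as under $\bQ_{\N}^{i'}$.

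\textbf{Step 3 (Rank as a function of those supports).} The full support-rank hypothesis (Def.~\ref{def:newPIRproperty}) asserts that for each such $\T$ and $l$ the nonzero columns of $\bq[\psi_{\alpha}(l),\psi_{\beta}(\T)]$ are linearly independent, so the rank of this block is pinned down by its support. Feeding this into the Khatri-Rao rank identities of Appendix~\ref{app:technicalLemmas} and using that $\bG_{\N}$ is MDS and $i$-independent, one can express $\rank(\mathbf{M}_{\N}^{i}(\bq))$ as a deterministic function of $\bG_{\N}$ and of the family
\begin{align*}
\bigl\{\colsupp(\bq[\psi_{\alpha}(l),\psi_{\beta}(\T)]):\,l\in[m],\,\T\subseteq\N,\,|\T|\le t\bigr\}.
\end{align*}
By Step~2 the joint distribution of this family is $i$-invariant, hence $\mathbb{E}_{\bq}[\rank(\mathbf{M}_{\N}^{i}(\bq))]$ does not depend on $i$, and Step~1 then yields the lemma.

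\textbf{Main obstacle.} Step~3 is the crux. Since $|\N|=k+t-1>t$, the full support-rank property does not directly constrain the whole block $\bq[\psi_{\alpha}(l),\psi_{\beta}(\N)]$, only its size-$\le t$ sub-blocks; the Khatri-Rao lemmas of Appendix~\ref{app:technicalLemmas} are exactly what is needed to assemble the rank of $\mathbf{M}_{\N}^{i}(\bq)$ from the ranks of these privacy-controlled sub-blocks, with the MDS columns of $\bG_{\N}$ accounting for the remaining $k-1$ coordinates that privacy does not reach on its own.
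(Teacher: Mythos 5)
Your Steps 1 and 2 follow the paper's route: the paper likewise writes the answers as a Khatri--Rao/star-product expression in the query and the generator matrix, identifies the conditional entropy with the rank of $(\bG_\N\otimes\mathbf{1}_\beta)\odot\bq[\psi_\alpha([m]\setminus\FF),\psi_\beta(\N)]$ (Lemma~\ref{lem:starProductDimensionBoundQueryMDS}, which uses the \newName{} property exactly where you invoke it), and then compares expectations over query realizations for $i$ and $i'$. The problem is in Step~3. You propose to write $\rank(\mathbf{M}_\N^i(\bq))$ as a function of the \emph{family} of column supports $\colsupp(\bq[\psi_\alpha(l),\psi_\beta(\T)])$ over all $\T\subseteq\N$ with $|\T|\le t$, and then claim that ``by Step~2 the joint distribution of this family is $i$-invariant.'' That inference is false: privacy only guarantees that the \emph{marginal} distribution of $Q_\T^i$ is $i$-invariant for each individual $\T$ of size at most $t$. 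Since $|\N|=k+t-1>t$, the joint distribution of the supports across all these sub-blocks determines the support pattern of the query to all of $\N$, and that joint object is in general \emph{not} $i$-invariant --- if it were, the scheme would effectively be private against $k+t-1$ collusion, which is not assumed and is typically false for the capacity-achieving schemes (the user must be able to decode from these servers, so their joint view does depend on $i$).

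The paper closes this gap by proving something sharper than ``rank is some function of the sub-block supports'': Lemma~\ref{lem:rankKhatri} shows, using \newName{} together with the MDS property of $\bG_\N$, that the rank equals the \emph{total} column support $|\colsupp(\bq[\psi_\alpha([m]\setminus\FF),\psi_\beta(\N)])|$. This quantity decomposes as a disjoint sum over the individual servers $j\in\N$ of $|\colsupp(\bq[\cdot,\psi_\beta(j)])|$, so by linearity of expectation only the per-server marginals are needed, and those \emph{are} $i$-invariant by privacy applied to singletons (this is Lemma~\ref{lem:expectationSupport}). So your outline is repairable, but the repair requires the specific additive form of the rank; the step as you wrote it does not go through.
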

\begin{proof}
First note that an equivalent problem formulation\footnote{We choose to refer to the realizations of $\cQ_\cN^i$ as $\bq[:,\psi_{\beta}(\cN)]$ to be consistent with notation and indexing, \emph{i.e.}, we treat the realizations of $\cQ_\cN^i$ as a submatrix consisting of $k+t-1$ thick columns of the realizations $\bq$ of $\cQ^i$.} is given by (cf. \cite[Section~2.2, Eq. (2.10)]{Cover:1991})
  \begin{align*}
    &\underset{\bq[:,\psi_{\beta}(\cN)] \in \supp(Q^i_\cN)}{\mathbb{E}} \big(H(A_{\cN}^i | X^{\cF},Q_{\cN}^i=\bq[:,\psi_{\beta}(\cN)] )\big) \\
    &= \underset{\bq[:,\psi_{\beta}(\cN)] \in \supp(Q^{i'}_\cN)}{\mathbb{E}} \big(H(A_{\cN}^i | X^{\cF}, Q_{\cN}^{i'}=\bq[:,\psi_{\beta}(\cN)])\big) \ .
  \end{align*}
  Further, observe that by Definition~\ref{def:linearPIR} the responses of servers $j\in [\cN]$ can be expressed as the star-product (Hadamard product) between rows of the restricted query matrix $\bQ^i[:,\psi_\beta(\cN)]$ and the restricted storage matrix with each column repeated $\beta$ times, \emph{i.e.}, $\bY \otimes \mathbf{1}_\beta$, where $\otimes$ denotes the Kronecker product. Specifically, we have
  \begin{align}
      A_\cN^i &= \bA^i[:,\psi_\beta(\cN)] \nonumber \\
      &= \big(\myspan{\bY[:,j],\bQ^i[:,(j-1)\beta+s]} \big)_{s\in [\beta], j \in \cN} \nonumber \\
      &= \big( \bY[:,j]^\top \cdot \bQ^i[:,\psi_\beta(j)] \big)_{j\in \cN} \nonumber \\
      &= \sum_{l \in \psi_{\alpha}([m])} \big((\bY[l,j] \otimes \mathbf{1}_\beta) \star \bQ^i[l,\psi_{\beta}(j)] \big)_{j\in \cN} \nonumber \\
      &= \sum_{l \in \psi_{\alpha}([m])} \big((\bY[l,\cN] \otimes \mathbf{1}_\beta) \star \bQ^i[l,\psi_{\beta}(\cN)] \big) \ . \label{eq:responseAsSP}
  \end{align}
  Next, we show that for every query realization, the entropies only depend on the size of the support of the query realization as
  \begin{align*}
    H&(A_{\cN}^i | X^{\psi_{\alpha}(\cF)},Q_{\cN}^i=\bq[:,\psi_{\beta}(\cN)] ) \\
     &= H\Big( \sum_{l \in \psi_{\alpha}([m])} \big((\bY[l,\cN] \otimes \mathbf{1}_\beta) \star \bq[l,\psi_{\beta}(\cN)]\ \big) \  \\
    &\hspace{4.5cm} \Big| \ X^\cF ,Q_{\cN}^i=\bq[:,\psi_{\beta}(\cN)] \Big) \\
     &= H\Big( \sum_{l \in \psi_{\alpha}([m]\setminus \cF)} \big((\bY[l,\cN] \otimes \mathbf{1}_\beta) \star \bq[l,\psi_{\beta}(\cN)]\ \big) \ \\
    &\hspace{5.2cm} \Big| \ Q_{\cN}^i=\bq[:,\psi_{\beta}(\cN)]\Big) \\
    &\stackrel{(\sfa)}{=} \big|\colsupp\big(\bq[ \psi_{\alpha}([m]\setminus \cF) ,\psi_{\beta}(\cN)]\big)\big| \ ,
  \end{align*}
  where $(\sfa)$ holds by Lemma~\ref{lem:starProductDimensionBoundQueryMDS} given in Appendix~\ref{app:technicalLemmas}.
  Taking the expectation over the support of $Q^i_\cN$ gives
  \begin{align*}
    &\underset{\bq[:,\psi_{\beta}(\cN)] \in \supp(Q^{i}_\cN) }{\mathbb{E}} \big(H(A_{\cN}^i | X^{\cF},Q_{\cN}^i=\mathbf{q}[:,\psi_{\beta}(\cN)] )\big)\\
    &\stackrel{\phantom{(\sfa)}}{=} \underset{\bq[:,\psi_{\beta}(\cN)] \in \supp(Q^{i}_\cN) }{\mathbb{E}} \big|\colsupp\big(\bq[ \psi_{\alpha}([m]\setminus \cF) ,\psi_{\beta}(\cN)]\big)\big| \\
    &\stackrel{(\sfa)}{=} \underset{\bq[:,\psi_{\beta}(\cN)] \in \supp(Q^{i'}_\cN) }{\mathbb{E}} \big|\colsupp\big(\bq[ \psi_{\alpha}([m]\setminus \cF) ,\psi_{\beta}(\cN)]\big)\big|\\
    &\stackrel{\phantom{(\sfa)}}{=} \underset{\bq[:,\psi_{\beta}(\cN)] \in \supp(Q^{i'}_\cN)}{\mathbb{E}} H(A_{\cN}^{i'} | X^{\cF}, Q_{\cN}^{i'}=\bq[:,\psi_{\beta}(\cN)]) \ ,
  \end{align*}
  where $(\sfa)$ follows from Lemma~\ref{lem:expectationSupport} given in Appendix~\ref{app:technicalLemmas}.
\end{proof}

While the previous lemma holds for any pair of indices $i,i' \in [m]$, the interesting case is when $i\in \cF$, $i'\not\in \cF$. This is the case that intuitively means that $(k+t-1)$-tuples of servers handle desired and undesired files equally, and will be used in the inductive proof of Lemma~\ref{keylemma}. Also note that the property of \newName{} was needed in the proof of Lemma~\ref{lem:rankKhatri}, the key technical ingredient to the proof of Lemma~\ref{lem:starProductDimensionBoundQueryMDS} and thereby also to Lemma~\ref{lem:Nsets}, as it ensures that the given entropy expression is equal to the size of the column support of the query restricted to the respective rows and columns.

\begin{remark}\label{rem:newPIRformulation}
The formulation of the server responses used in Lemma~\ref{lem:Nsets} implies a novel formulation of the PIR problem with linear decoding functions. As shown in $\eqref{eq:vectorize}$ and Lemma~\ref{lem:starProductDimensionBoundQueryMDS}, the received responses are given by (to simplify the notation we assume $\alpha=1$ here)
\begin{align}
  \bA^i &= (\bX[1,1],\bX[2,1],\ldots, \bX[m,1],\bX[1,2],\bX[2,2], \nonumber\\
  &\qquad \ldots, \bX[m,2], \ldots, \bX[m,k]) \cdot \big((\bG \otimes \mathbf{1}_\beta) \odot \bQ^i\big) \ , \label{eq:responsesKhatri}
\end{align}
where $\odot$ denotes the column-wise Khatri-Rao product \cite{khatri1968solutions} and $\bG$ is a generator matrix of the storage code.
When restricting to linear decoding functions, the application of a decoder $\mathsf{D}$ such that $\sfD(\bA) = \bX^i$, is equivalent to performing linear combinations of the received responses $\bA^i$, which, in turn, is equivalent to performing linear combinations of the columns of $(\bG \otimes \mathbf{1}_\beta) \odot \bQ^i$. It is easy to see that the $l^{\mathrm{th}}$ symbol of the information vector can be obtained exactly if the $l^{\mathrm{th}}$ unit vector $e_l$ is in the column span of this matrix. Therefore, the problem of linear PIR with linear decoding functions can be defined solely based on operations from linear algebra: For each $i\in[m]$ determine a distribution of query matrices $\bQ^i$ such that
\begin{align*}
  &e_l \in \myspanCol{(\bG \otimes \mathbf{1}_\beta) \odot \bQ^i} \\
  &\hspace{2.5cm} \forall \ l \in \{i,m+i,2m+i,\ldots, (k-1)m+i\}\\
   &\Pr(\bQ^i_\cT = \bq) = \Pr(\bQ^{i'}_\cT = \bq)  \ \forall \ i,i'\in [m] , \cT \subset [n], |\cT| \leq t \ .
\end{align*}
The first condition guarantees decodability, as the given set indexes the symbols of file $i$ in the data vector of \eqref{eq:responsesKhatri}, while the second condition guarantees $t$-privacy.
\end{remark}

The following lemma will be used to prove the upper bounds on the nonsymmetric MDS-TPIR capacity.

\begin{lemma}\label{keylemma}
Consider an optimal linear (S)PIR scheme, and let $\mathcal{H} \subset [n]$ be a minimal set (set of smallest possible cardinality) such that the requested file $i$ can be obtained from the respective responses, \emph{i.e.},
\begin{equation*}
    H(X^i | A_{\mathcal{H}}^i , \cQ) = 0 \ .
\end{equation*}
For $1\leq s\leq m$, let
\begin{align*}
h_{s}=\frac{n}{|\mathcal{H}|} H(A^{s}_{\mathcal{H}}|Q,X^{1,\dots ,s-1})
\end{align*}
and $h_{m+1}=0$. Then, for all $1\leq s\leq m$,
\begin{align*}
h_{s}\geq \frac{n}{n-2b-r}\left(H(X^{s})+\frac{k+t-1}{n}h_{s+1}\right)\ .
\end{align*}
\end{lemma}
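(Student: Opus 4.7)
The recursion will be established in two steps: first, I extract an $H(X^s)$ contribution from $H(A^s_{\mathcal{H}}\mid Q, X^{1,\dots,s-1})$ using the chain rule together with the independence of the files and the decodability of file $s$ from $\mathcal{H}$; second, I lower-bound the residual entropy $H(A^s_{\mathcal{H}}\mid Q, X^{1,\dots,s})$ by $\tfrac{k+t-1}{n}h_{s+1}$ by combining Han's inequality~\eqref{eq:han} with Lemma~\ref{lem:Nsets}. Throughout, I will carry the cardinality $|\mathcal{H}|$ and only substitute $|\mathcal{H}|\leq n-2b-r$ (guaranteed by Lemma~\ref{lem:DecodeFromN2B} together with minimality) at the very end; this is valid because the derived bound $h_s\geq\tfrac{n}{|\mathcal{H}|}\bigl(H(X^s)+\tfrac{k+t-1}{n}h_{s+1}\bigr)$ only strengthens as $|\mathcal{H}|$ decreases.

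The first step is essentially a direct chain rule computation. By mutual independence of the files, $H(X^s\mid Q, X^{1,\dots,s-1})=H(X^s)$, and because $\mathcal{H}$ decodes file $s$ we have $H(X^s\mid A^s_{\mathcal{H}}, Q)=0$. Expanding $H(A^s_{\mathcal{H}}, X^s\mid Q, X^{1,\dots,s-1})$ two different ways with the chain rule splits off $H(X^s)$ exactly, leaving the residual $H(A^s_{\mathcal{H}}\mid Q, X^{1,\dots,s})$.

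For the second step (assuming $s<m$), I apply Han's inequality~\eqref{eq:han} to the random variables $\{A^{s+1}_j\}_{j\in\mathcal{H}}$ with subsets of cardinality $k+t-1$ (valid since in the regime of interest $n-2b-r\geq k+t-1$). This upper-bounds $H(A^{s+1}_{\mathcal{H}}\mid Q, X^{1,\dots,s})$ by $\tfrac{|\mathcal{H}|}{k+t-1}$ times the average over size-$(k+t-1)$ subsets $\mathcal{N}\subseteq\mathcal{H}$ of the conditional entropies $H(A^{s+1}_{\mathcal{N}}\mid Q, X^{1,\dots,s})$. Lemma~\ref{lem:AnswerQueryDependence} then lets me replace the conditioning on all of $Q$ by conditioning on just $Q^{s+1}_{\mathcal{N}}$, after which Lemma~\ref{lem:Nsets}---invoked with $\FF=\psi_\alpha([s])$, a proper subset of $[\alpha m]$ exactly because $s<m$, and with the pair $(i,i')=(s+1,s)$---swaps the file index and identifies each summand with $H(A^{s}_{\mathcal{N}}\mid Q, X^{1,\dots,s})$. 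Since $\mathcal{N}\subseteq\mathcal{H}$, each such quantity is at most $H(A^{s}_{\mathcal{H}}\mid Q, X^{1,\dots,s})$, and rearrangement delivers the desired bound $H(A^s_{\mathcal{H}}\mid Q, X^{1,\dots,s})\geq \tfrac{k+t-1}{n}h_{s+1}$.

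Combining the two steps yields the claim for $s<m$, and the boundary case $s=m$ (where $\FF$ would no longer be proper) follows from the first step alone: $h_{m+1}=0$ by definition, so $h_m\geq\tfrac{n}{|\mathcal{H}|}H(X^m)\geq\tfrac{n}{n-2b-r}H(X^m)$. The crux of the argument is the second step, where Lemma~\ref{lem:Nsets}---the place where the \newName{} hypothesis enters---is used to swap the file index from $s+1$ back to $s$, allowing the size-$(k+t-1)$ subset entropies produced by Han's inequality to be absorbed into a single full-$\mathcal{H}$ entropy. Applying Han directly to $A^s_{\mathcal{H}}$ without this index swap would give an inequality in the wrong direction and would fail to expose $h_{s+1}$.
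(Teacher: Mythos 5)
Your proposal is correct and follows essentially the same route as the paper: split off $H(X^s)$ via the chain rule, independence, and decodability, then combine Han's inequality on size-$(k+t-1)$ subsets of $\mathcal{H}$ with Lemma~\ref{lem:Nsets} (via Lemma~\ref{lem:AnswerQueryDependence}) to relate the residual to $h_{s+1}$. The only cosmetic difference is that you average over all subsets $\mathcal{N}$ and swap the index $s+1\to s$ in each summand, while the paper selects a single subset attaining the Han average and swaps $s\to s+1$ on that one term — these are equivalent.
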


\begin{proof}%
By Lemma~\ref{lem:DecodeFromN2B}, $|\mathcal{H}|\leq n-2b-r$.
By Han's inequality~\eqref{eq:han}, the average value of $H(A^{s+1}_\mathcal{N}|Q,X^{1,\dots ,s})$ over all sets $\cN \subseteq \mathcal{H}$ with $|\cN| = k+t-1$ is at least
\begin{align*}
\frac{k+t-1}{|\mathcal{H}|} H(A^{s+1}_\mathcal{H}|Q,X^{1,\dots ,s}) \ .
\end{align*}
We can thus choose a set $\cN \subseteq \mathcal{H}$ with $|\cN| = k+t-1$ such that
\begin{align*}\label{NinH}
H(A^{s+1}_\mathcal{N}|Q,X^{1,\dots ,s}) &\geq \frac{k+t-1}{|\mathcal{H}|} H(A^{s+1}_\mathcal{H}|Q,X^{1,\dots ,s})\\
&= \frac{k+t-1}{n} h_{s+1}\ .
\end{align*}

By independence of the files and the queries, we have
\begin{align*}
H(X^{s}|Q,X^{1,\dots, s-1})=H(X^s)\ .
\end{align*}
We thus get
\begin{align*}
 h_s & = \frac{n}{|\mathcal{H}|}H(A^{s}_{\mathcal{H}}|Q,X^{1,\dots, s-1}) \\
 	& = \frac{n}{|\mathcal{H}|}\left(H(X^{s}) + H(A^{s}_{\mathcal{H}}|Q,X^{1,\dots ,s})\right) \\
	& \geq \frac{n}{|\mathcal{H}|}\left(H(X^{s}) + H(A^{s}_{\mathcal{N}}|Q,X^{1,\dots ,s})\right) \\
	& \stackrel{(\sfa)}{=}  \frac{n}{|\mathcal{H}|}\left(H(X^{s}) + H(A^{s+1}_{\mathcal{N}}|Q,X^{1,\dots ,s})\right) \\
	& \geq \frac{n}{|\mathcal{H}|} \left(H(X^s) + \frac{k+t-1}{n}h_{s+1}\right)\\
	&\geq \frac{n}{n-2b-r} \left(H(X^s) + \frac{k+t-1}{n}h_{s+1}\right)\ ,
 \end{align*}
 where $(\sfa)$ follows from Lemma~\ref{lem:Nsets}.
  \end{proof}

Setting $b=r=0$, we are now ready to prove the capacity of MDS-TPIR. We restate Theorem~\ref{thm:coded-colluded} here for the sake of completeness.

\mainthm*

\begin{proof}
\emph{Achievability:} An explicit scheme achieving the rate is constructed in~\cite{oliveira2019one} by ``lifting'' the star product scheme of \cite{Freij-Hollanti2017}. To be private, this scheme needs to fulfill Definition~\ref{def:newPIRproperty}, as discussed in Appendix~\ref{app:liftedFix}.

\emph{Converse:} Let $\mathcal{H} \subset [n]$ be a minimal set such that \begin{equation*}
   H(X^i | A_{\mathcal{H}}^i , \cQ) = 0,  \end{equation*} and for $s=1,\dots, m$, let
\begin{align*}
h_s=\frac{n}{|\mathcal{H}|} H(A^{s}_{\mathcal{H}}|Q,X^{1,\dots ,s-1})
\end{align*}
as in Lemma~\ref{keylemma}.
Denote the size $H(X^i)$, which is equal for all files, by $L$. By definition and Lemma~\ref{lem:AnswerQueryDependence}, the rate $R$ of the scheme satisfies
\begin{align*}
  \frac{1}{R}&=\frac{\sum_{j\in [n]} H(A_j^s)}{L}\\
             &\geq \frac{\sum_{j\in[n]}H(A^{s}_j|Q)}{L} \\
  &\geq \frac{h_1}{L} \ ,
\end{align*}
where the last equation follows by minimality of $\cH$.
It is thus enough to show that
\begin{equation}\label{eq:induction}
  \frac{h_s}{L}\geq \frac{1-\left(\frac{k+t-1}{n} \right)^{m-s+1}}{1-\frac{k+t-1}{n}}
\end{equation}
holds for all $\ 1\leq s\leq m$. We will prove this by backwards induction on $s$.

As the base case consider $s=m$ and observe that~\eqref{eq:induction} simplifies to $h_s\geq L$ in this case. Recall that $A^m$ is a function of the files $X$ and the queries $Q$. As we have $b=r=0$, Lemma~\ref{keylemma} gives
  \begin{align*}
    h_m &= H(A^{m}_{\mathcal{H}}|Q,X^{1,\dots ,m-1}) \\
    &= H(A^{m}_{\mathcal{H}}, X^m |Q,X^{1,\dots ,m-1}) = H(X^m) = L \ .
  \end{align*}
 It follows that \eqref{eq:induction} is correct for $s=m$.
Now assume as an induction hypothesis that
\begin{align*}
\frac{h_{s'}}{L}\geq \frac{1-\left(\frac{k+t-1}{n}\right)^{m-s'+1}}{1-\frac{k+t-1}{n}} \ ,
\end{align*}
and let $s=s'-1$
Then Lemma~\ref{keylemma} yields
\begin{align*}
  \frac{h_{s}}{L}&\geq 1+\left(\frac{k+t-1}{n}\frac{h_{s'}}{L}\right)\\
& \geq 1+\frac{\frac{k+t-1}{n}-\left(\frac{k+t-1}{n}\right)^{m-s'+2}}{1-\frac{k+t-1}{n}}\\
& = \frac{1-\left(\frac{k+t-1}{n}\right)^{m-s+1}}{1-\frac{k+t-1}{n}} \ .
\end{align*}
This proves~\eqref{eq:induction} for all $\ 1\leq s\leq m$ by induction. The case $s=1$ is the statement of the theorem.
\end{proof}

\begin{remark}
  By similar techniques, we get an upper bound
  \begin{equation}\label{eq:TBbound}
    C_{\tbpir}^{[n,k]-\mds}\leq\left(1-\frac{2b+r}{n}\right) \cdot \frac{1-\frac{k+t-1}{n}}{1-\left(\frac{k+t-1}{n}\right)^m}
  \end{equation}
  for the case where we also have Byzantine and nonresponsive servers. However, we believe this to be a loose upper bound.
If the bound~\eqref{eq:TBbound} were to be tight, the result given in Theorem~\ref{thm:symmetricCapacity} would imply that in this setting symmetric PIR has a strictly lower capacity than PIR even as the number of files goes to infinity. This would be in sharp contrast to the known cases of TPIR/TSPIR and MDS-PIR/MDS-SPIR with and without Byzantine/nonresponsive servers, where the nonsymmetric capacity converges (from above) to the symmetric capacity as the number of files increases. %
\end{remark}

\section{Strongly-linear PIR Capacity}\label{sec:strongly}

We have seen that, for a symmetric linear scheme, the rate cannot be larger than that obtained by a star product scheme in~\cite{Tajeddine2018}, regardless of the number of files. Further, Theorem~\ref{thm:coded-colluded} shows that as the number of files grows, the rate of the star product scheme in~\cite{Freij-Hollanti2017} approaches the \newName{} capacity. We will now show that, under stronger linearity assumptions, this is also true for a finite number of files and without assuming server privacy. In essence, we define a {\em strongly linear} PIR scheme to be one where all interference cancellation is linear and deterministic, and where every computation uses only one response symbol from each server. This is a highly natural assumption, that also has practical implications as it allows decoding to happen instantly, even when queries are sent sequentially. However, the assumption is not true for schemes such as those in \cite{Sun2016, Banawan2018}, which do not satisfy Definition~\ref{def:strongly} below. %

\begin{definition}[Strongly Linear PIR]
\label{def:strongly}
We say that a linear PIR scheme is \emph{strongly linear} if each symbol of the desired file is obtained as a deterministic linear function over $\F$ of a response vector consisting of one response symbol from each server
\begin{align*}
\big(\bA[:,(j-1)\beta+s]\big)_{j \in [n]} \ ,
\end{align*} for some $s\in[\beta]$. %
By this, we mean that the reconstruction function does not depend on the randomness used to produce the queries. We informally think of $\big(\bA[:,(j-1)\beta+s]\big)_{j \in [n]}$ as the response obtained in the $s^\mathrm{th}$ iteration of the scheme. %
\end{definition}

\begin{remark}
Note that a \newName{} PIR scheme does not have to be strongly linear. However, the rate of every optimal strongly linear scheme is upper bounded by the rate of the star product scheme~\cite{Freij-Hollanti2017}, which agrees with the asymptotic capacity of a \newName{} PIR scheme  with corresponding parameters. This result is proved in Theorem~\ref{thm:strongly}. Hence, a \newName{} scheme can always be replaced by a strongly linear scheme (\emph{e.g.}, a star product scheme) without a loss in the asymptotic rate.
\end{remark}

For the results in this section, we need to recall a notion that is central to much recent work on PIR. For two vectors $\bc, \bd \in \F^{1 \times n}$ their coordinate-wise/star-/Hadamard product is denoted
\begin{align*}
\bc\star \bd =(\bc[1,1] \bd[1,1] , \dots , \bc[1,n] \bd[1,n]) \ .
\end{align*}
Let $\cC$ and $\cD$ be two codes of length $n$ over $\F$. The star product $\cC\star \cD$ is the linear span of the codewords $\bc\star \bd$, where $\bc\in \cC$, $\bd\in \cD$. Note that this definition does not require that the codes $\cC$ and $\cD$ are linear, but it always yields a linear code $\cC\star \cD$ as the star product.

\begin{lemma}\label{lem:strongly}
Consider a strongly linear PIR scheme from a linear storage code $\mathcal{C}$, and fix an index $s\in[\beta]$. For all $l\in [m]$, let $\mathcal{D}^{i,l}\subseteq \F^n$ be the linear span of the row vectors that can occur as the $l^{\rm th}$ %
row of the $s^{\rm th}$ iteration of a query matrix $Q^i$, \emph{i.e.},
\begin{align*}
\mathcal{D}^{i,l} \!= \! \myspanRow{\bQ[l,\{s, \beta\!+\!s, \dots , (n\!-\!1)\beta\!+\!s\}]  : \bQ^i \!\in\!\supp(Q^i)}.
\end{align*}
Then the rate of the PIR scheme is at most
  \begin{align*}
1-\frac{\dim(\mathcal{C}\star(\sum_{l\not\in \psi_{\alpha}(i)} \mathcal{D}^{i,l}))}{\dim(\mathcal{C}\star(\sum_{l} \mathcal{D}^{i,l}))} \ .
  \end{align*}
If the strongly linear PIR scheme downloads equally much from all servers, then the rate is at most
\begin{align*}
\frac{\dim(\mathcal{C}\star(\sum_{j} \mathcal{D}^{i,j}))-\dim(\mathcal{C}\star(\sum_{l\not\in \psi_{\alpha}(i)} \mathcal{D}^{i,j}))}{n} \ .
\end{align*}
\end{lemma}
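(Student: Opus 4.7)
The plan is to fix the iteration index $s$, express the corresponding response vector as a sum of star products of codewords and query rows, and then analyze the space of admissible strongly-linear decoders. Writing $\bc^l := \bX[l,:]\bG \in \C$ for the $l$-th encoded row of the data and $\bd^l := \bQ^i[l,\{s,\beta+s,\dots,(n-1)\beta+s\}] \in \D^{i,l}$ for the $l$-th row of the $s$-th iteration of the query realization, Definition~\ref{def:linearPIR} yields the decomposition
\[
\ba_s^i := \big(\bA^i[:,(j-1)\beta+s]\big)_{j\in[n]} \;=\; \sum_{l} \bc^l \star \bd^l .
\]
Splitting this sum into a \emph{desired} part ($l\in\psi_\alpha(i)$) and an \emph{interference} part ($l\notin\psi_\alpha(i)$) and setting $V := \C\star\sum_{l} \D^{i,l}$ and $W := \C\star\sum_{l\notin\psi_\alpha(i)} \D^{i,l}$, we get $\ba_s^i\in V$ while the interference always lies in $W\subseteq V$.

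The next step exploits the strongly-linear assumption: by Definition~\ref{def:strongly}, each symbol of $X^i$ decoded from iteration $s$ is obtained as $\myspan{\mathbf{v},\ba_s^i}$ for some fixed $\mathbf{v}\in\F^n$ independent of the query realization. Requiring this inner product to yield a pure symbol of $X^i$ for every realization of $\bX$ and every $\bQ^i\in\supp(Q^i)$ forces the interference to cancel identically, so $\mathbf{v}$ must annihilate every $\bc\star\bd$ with $\bc\in\C$ and $\bd\in\D^{i,l}$ for some $l\notin\psi_\alpha(i)$; equivalently, $\mathbf{v}\in W^\perp$. Moreover, two decoders $\mathbf{v}_1,\mathbf{v}_2\in W^\perp$ produce the same linear function on every admissible response iff $\mathbf{v}_1-\mathbf{v}_2\in V^\perp$. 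Since $W\subseteq V$ gives $V^\perp\subseteq W^\perp$, the space of inequivalent useful decoders has dimension $\dim W^\perp - \dim V^\perp = \dim V - \dim W$, so at most $\dim V-\dim W$ linearly independent symbols of $X^i$ can be recovered from iteration $s$.

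Finally, I convert this per-iteration symbol ceiling into the two claimed rate bounds. In iteration $s$ the response entropy satisfies $H(\ba_s^i)\leq\dim V\cdot\log q$ (since $\ba_s^i\in V$), while the information revealed about $X^i$ is at most $(\dim V-\dim W)\log q$ bits. Plugging these into $R = H(X^i)/\sum_j H(A_j^i)$, summing over the $\beta$ iterations, and invoking subadditivity of entropy gives the first bound $R\leq 1-\dim W/\dim V$. For the second bound, the equal-download assumption reads $\sum_j H(A_j^i) = nD$, and since each server contributes exactly one symbol per iteration, aggregating the per-iteration ceiling of $\dim V-\dim W$ useful symbols across all $\beta$ iterations together with $D\leq\beta\log q$ yields the sharper bound $R\leq (\dim V-\dim W)/n$. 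The main obstacle is the passage from the fixed-$s$ bound to the overall rate; this is cleanest under the natural assumption that iterations are statistically equivalent (so that $V_s,W_s$ are independent of $s$), which can be enforced without loss of generality by a symmetrization of the scheme, and otherwise one combines the fixed-$s$ bound with an averaging argument in the spirit of Han's inequality~\eqref{eq:han}.
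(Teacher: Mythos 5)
Your proposal is correct and follows essentially the same route as the paper's proof: decompose the iteration-$s$ response as $\sum_l \bc^l \star \bd^l$, observe that interference cancellation forces the decoder to be constant on cosets of $W=\C\star\sum_{l\notin\psi_\alpha(i)}\D^{i,l}$ inside $V=\C\star\sum_l\D^{i,l}$, and bound the number of recoverable symbols per iteration by $\dim V-\dim W$. The only difference is cosmetic — you phrase the counting via annihilators $W^\perp/V^\perp$ while the paper computes $\dim(\mathrm{range}\,\Phi)=\dim V-\dim\ker\Phi$ for the quotient map — and the loose end you flag about combining iterations is handled equally informally in the paper.
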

\begin{proof}
By \eqref{eq:responseAsSP} the responses in a linear PIR scheme as in Definition~\ref{def:linearPIR} can be described as the sum of the star product (\emph{i.e.}, Hadamard product) of rows of the query matrix and rows of the storage by
\begin{align*}
  \big(\bA[:,(j-1)\beta+s]\big)_{j \in [n]} &= \sum_{l=1}^{\alpha m} \big(\bY[l,(j-1)\beta+s]\big)_{j \in [n]} \\
  &\hspace{1cm}\star \bQ^i[l,(j-1)\beta+s]\big)_{j \in [n]} \big)\\
      &\in \sum_{l=1}^{\alpha m} \mathcal{D}^{i,l} \star \mathcal{C} \ .
\end{align*}
Let
\begin{align*}
    \Phi:\big(\bA[:,(j-1)\beta+s]\big)_{j \in [n]}\mapsto \bx\in\F^\gamma
\end{align*}
be the deterministic map that returns  $\gamma$ desired coordinates %
of the desired file $\bX$ from the responses in iteration $s$. Then for each $l\not\in\psi_{\alpha}(i)$, $\Phi$ must be constant on each coset of $\mathcal{D}^{i,l} \star \mathcal{C}$, because otherwise changing the query matrix and the $l^{\rm th}$ row of $\bY$ would affect the value of $\Phi\big(\big(\bA[:,(j-1)\beta+s]\big)_{j \in [n]}\big)$. Since this holds for every $l\neq i$, $\Phi$ must be constant on each coset of $\sum_{j\neq i} \mathcal{D}^{i,j}\star \mathcal{C}$. Thus, the dimension of the range of $\Phi$ is
\begin{align*}
    \gamma&=\dim \Big(\sum_{j} \mathcal{D}^{i,j}\star \mathcal{C}\Big)-\dim\ker(\Phi)\\
    &\leq\dim\Big(\sum_{j} \mathcal{D}^{i,j}\star \mathcal{C}\Big)-\dim\Big(\sum_{j\not\in\psi_\alpha(i)} \mathcal{D}^{i,j}\star \mathcal{C}\Big).
\end{align*}
The answer $\big(\bA[:,(j-1)\beta+s]\big)_{j \in [n]}$ can be reconstructed from the responses of $\dim\left(\sum_{j} \mathcal{D}^{i,j}\star \mathcal{C}\right)$ servers, or from $n$ servers if we require to download equally much from each server. Dividing the number $|\mathcal{I}|$ of downloaded $q$-ary symbols from the desired file by the number of $q$-ary symbols in $\big(\bA[:,(j-1)\beta+s]\big)_{j \in [n]}$, we get the claimed bounds on the PIR rate. This concludes the proof.
\end{proof}

For the rest of the paper, we assume downloading the same number of symbols from all the servers for simplicity.

Before proceeding, for the reader's convenience, we briefly recapitulate the star product PIR scheme of \cite{Freij-Hollanti2017,Tajeddine2018}. Consider a distributed storage system storing $m$ files encoded with an $[n,k]$ MDS storage code $\cC$ and a user looking to retrieve file $i$ with collusion resistance $t$. For simplicity we assume $n=2k+t+2b+r-1$ and $\alpha=1$ here, as this allows the recovery of the file in one iteration\footnote{We would like to emphasize that the scheme discussed here is a special case of the star product PIR scheme of \cite{Freij-Hollanti2017,Tajeddine2018}, with parameters chosen for an illustrative purpose. The full scheme is not limited to this specific choice of $n$.}, for more details see \cite{Freij-Hollanti2017,Tajeddine2018}. Further, for ease of notation, we only consider the case of all servers being responsive, \emph{i.e.}, $r=0$. The extension to the case of nonresponsive servers is trivial. The star product scheme consists of the following steps:
  \begin{enumerate}
  \item The user chooses a \emph{query code} $\cD_Q$ with $d_{\cD_Q^{\perp}}\geq t+1$, where $d_{\cD_Q^{\perp}}$ denotes the minimum distance of the dual code $\cD_Q^{\perp}$. From this code, she generates a matrix $\bD \in F^{m\times n}$ whose $m$ rows are codewords of $\cD_Q$ chosen i.i.d. at random\footnote{The fact that $d_{\cD_Q^\perp}\geq t+1$ implies that any $t$ positions in a codeword of $\cD_Q$ are an information set. Hence, any $t$ columns of $\bD$ are i.i.d. distributed over $\F^{m \times t}$.}.
  \item The \emph{query matrix} is given by
    \begin{align*}
      \bQ^i = \bD + \bE \ ,
    \end{align*}
    where $\bE$ is all-zero, except for the $i^{\mathrm{th}}$ row $\bE[i,:]$, which is chosen to be the basis of an $[n,1]$ code\footnote{Here and for the general scheme it is convenient to view this as a code instead of a vector. Note that for a different choice of $n$ and $\alpha$ the dimension of this code could be larger than $1$.} $\mathcal{E}$.
  \item The user sends the $j^{\rm th}$ column of $\bQ^i$ to the $j^{\rm th}$ server. The server replies with $\bA^i[1,j] = \left\langle \bQ^i[:,j], \bY[:,j]\right\rangle + \bz[1,j]$, where $\bz[1,j]=0$ if the server is honest and arbitrary if the server is adversarial ($\bz$ can be thought of as the received error vector).
  \item By \eqref{eq:responseAsSP} the user receives
    \begin{align*}
      \bA^i &= \Big(\sum_{l \in [m]} \bY[l,:] \star \bQ^i[l,:] \Big) + \bz \\
            &= \Big(\sum_{l \in [m]} \bY[l,:] \star \big(\bD[l,:]+\bE[l,:]\big) \Big) + \bz \\
            &= \underbrace{\Big(\sum_{l \in [m]} \bY[l,:] \star \bD[l,:] \Big)}_{\in \cC \star \cD_Q } + \underbrace{\big(\bY[i,:] \star \bE[i,:]\big)}_{\in \cC \star \mathcal{E}} + \bz \\
    \end{align*}
    Recall that the Hamming weight of $\bz$ is at most $b$, the number of adversarial servers. Hence, if the code $\cC \star \cD_Q + \cC \star  \mathcal{E}$ is of distance $d_{\cC\star \cD_Q + \cC\star  \mathcal{E}}\geq 2b+1$, the errors can be decoded and the user obtains
    \begin{align*}
            \underbrace{\Big(\sum_{l \in [m]} \bY[l,:] \star \bD[l,:] \Big)}_{\in \cC\star \cD_Q } + \underbrace{\big(\bY[i,:] \star \bE[i,:]\big)}_{\in \cC\star \mathcal{E} } \ .
    \end{align*}
    As $\bE$ is chosen by the user, we only require that the codes $\cC \star \cD_Q $ and $\cC \star \mathcal{E} $ intersect trivially to recover the vector $\bY[i,:]\star \bE[i,:] \in \cC \star \mathcal{E}$. Finally, the file $X^i$ can be recovered from this vector, given that $\cC \star \mathcal{E} $ is of dimension $k$.
  \end{enumerate}
  It remains to determine codes $\cC$, $\cD_Q$, and $\mathcal{E}$ that fulfill the required properties for the given $n$. Conveniently, it has been shown \cite{Freij-Hollanti2017,Tajeddine2018} that the popular class of generalized Reed-Solomon (GRS) codes provides such codes, however, these details are beyond the scope of this short summary.

We are now ready to show that any strongly linear scheme can be replaced by a star product scheme for the same privacy model, without losing in the PIR rate. %
\begin{theorem}[Capacity of Strongly Linear PIR]\label{thm:strongly}
  The capacity of strongly linear PIR from an $(n,k)$ storage code $\cC$, with $b$ Byzantine and $r$ non responsive servers, that protects against $t$-collusion, is
\begin{equation*}
    C_{\tbpir}^{(n,k)-\mds}= 1-\frac{k+t+2b+r-1}{n} \,
\end{equation*}
for any number of files $m$.
\end{theorem}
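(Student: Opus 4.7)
The achievability follows directly from the star product scheme of \cite{Freij-Hollanti2017,Tajeddine2018}, which is strongly linear in the sense of Definition~\ref{def:strongly}; as recapitulated just before the theorem statement, this scheme is known to achieve the claimed rate for every number of files $m$. So only the converse requires a new argument.

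For the converse, I would fix an arbitrary file index $i\in[m]$ and iteration index $s\in[\beta]$, and work with the query codes $\mathcal{D}=\sum_l \mathcal{D}^{i,l}$ and $\mathcal{D}_u=\sum_{l\not\in\psi_\alpha(i)}\mathcal{D}^{i,l}$ from Lemma~\ref{lem:strongly}. The strategy is to combine the rate bound of Lemma~\ref{lem:strongly} with the decodability requirement of Lemma~\ref{lem:DecodeFromN2B} via a restriction of the scheme to an honest, responsive set $\H\subseteq[n]$ of $|\H|=n-2b-r$ servers. Lemma~\ref{lem:DecodeFromN2B} guarantees that the desired file is determined by the responses on such an $\H$, and since the scheme is strongly linear the (deterministic, linear) decoder factors through the projection onto $\H$. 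Hence the restriction of the scheme to $\H$ is itself a strongly linear PIR scheme with $t$-collusion and no adversarial or unresponsive servers, whose rate $R_\H$ is related to the original rate by $R=\tfrac{n-2b-r}{n}R_\H$ (the file size is unchanged while the per-iteration download drops from $n$ to $n-2b-r$ symbols).

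Writing $\pi_\H$ for the puncturing of a code to the coordinates in $\H$ and applying Lemma~\ref{lem:strongly} to the restricted scheme on $|\H|$ servers, one obtains
\begin{align*}
R \leq \frac{\dim\bigl(\pi_\H(\mathcal{C})\star\pi_\H(\mathcal{D})\bigr)-\dim\bigl(\pi_\H(\mathcal{C})\star\pi_\H(\mathcal{D}_u)\bigr)}{n}.
\end{align*}
Two bounds on the right-hand side then finish the proof. Trivially $\dim(\pi_\H(\mathcal{C})\star\pi_\H(\mathcal{D}))\leq|\H|=n-2b-r$. For the matching lower bound $\dim(\pi_\H(\mathcal{C})\star\pi_\H(\mathcal{D}_u))\geq k+t-1$, the $t$-privacy condition \eqref{eq:userpriv} forces the marginal distribution of $Q_\T^i$ on every $|\T|=t$ to be independent of $i$, which in the strongly linear setting translates into $\mathcal{D}_u$ having dual distance at least $t+1$ (any $t$ columns span $\F^t$). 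This property is preserved by the projection $\pi_\H$, while $\pi_\H(\mathcal{C})$ remains an $[n-2b-r,k]$ MDS code. A star-product dimension lemma for MDS codes then yields $\dim(\pi_\H(\mathcal{C})\star\pi_\H(\mathcal{D}_u))\geq k+t-1$, and combining the two bounds gives $R\leq 1-\tfrac{k+t+2b+r-1}{n}$, matching the achievability.

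The main obstacle is the lower bound on the star-product dimension: the claim that for any $[n',k]$ MDS code $\mathcal{A}$ and any code $\mathcal{B}$ of dual distance at least $t+1$ with $n'\geq k+t-1$, one has $\dim(\mathcal{A}\star\mathcal{B})\geq k+t-1$. For GRS storage codes this is automatic since the star product of GRS codes is GRS, but for a general MDS storage code it needs a more careful argument, for instance by restricting to a well-chosen set of $k+t-1$ coordinates and exploiting the interaction of the information sets of $\mathcal{A}$ (from the MDS property) and of $\mathcal{B}$ (from the dual-distance condition). A secondary technical point, routine by comparison, is the assertion that the restriction to $\H$ of a strongly linear scheme is itself strongly linear; this follows because a deterministic linear decoder which, by Lemma~\ref{lem:DecodeFromN2B}, is constant on the kernel of the projection to $\H$ automatically descends to a linear decoder on the projected response space.
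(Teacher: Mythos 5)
Your achievability direction and your appeal to Lemma~\ref{lem:strongly} are sound, but the converse hinges on a step that is not justified and, as stated, is not a consequence of the definitions: the claim that the $t$-privacy condition \eqref{eq:userpriv} forces $\mathcal{D}_u=\sum_{l\notin\psi_\alpha(i)}\mathcal{D}^{i,l}$ to have dual distance at least $t+1$. Privacy constrains the \emph{distribution} of $Q_\T^i$ to be independent of $i$; it imposes no rank or spanning condition on the restrictions of the query rows to $\T$ --- this is precisely the point of the paper's full support-rank discussion, and the scheme of \cite{Sun2018conj} shows that private linear queries can be badly rank-deficient on $t$-subsets. If some private strongly linear scheme had $\dim(\mathcal{D}_u|_\T)<t$ for some $\T$, your lower bound $\dim(\pi_\H(\mathcal{C})\star\pi_\H(\mathcal{D}_u))\geq k+t-1$ would fail and your chain of inequalities would not deliver the theorem for that scheme; one would need a compensating decrease in $\dim(\pi_\H(\mathcal{C})\star\pi_\H(\mathcal{D}))$ that your argument does not supply. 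Two further, smaller issues: the MDS star-product dimension lemma you invoke is, as you yourself note, left unproved (it is fixable along the lines of Lemma~\ref{lem:rankKhatri}, but it is not routine); and the reduction to $\H$ with $R=\frac{n-2b-r}{n}R_\H$ glosses over the fact that the original decoder must tolerate corrupted responses, so Lemma~\ref{lem:DecodeFromN2B} only gives the existence of \emph{some} decoder on $A_\H^i$, not that the original deterministic linear one descends through the projection.

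The paper avoids all of this with a domination argument rather than dimension counting. It replaces the given strongly linear scheme by a star product scheme whose interference rows are drawn uniformly from $\mathcal{D}=\sum_{l\neq i}\mathcal{D}^{i,l}$ and whose desired rows are a fixed realization $\mathbf{E}$; this new scheme is no less private and no less robust against Byzantine and nonresponsive servers than the original, decodes everything the original decoded, and by Lemma~\ref{lem:strongly} its rate upper-bounds that of the original. The bound $1-\frac{k+t+2b+r-1}{n}$ for private star product schemes is then quoted from \cite{Tajeddine2018} --- that citation is exactly where the hard implication you tried to establish from scratch (that privacy forces the interference star product to be large) is actually carried out. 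To salvage your direct route you would have to prove that privacy \emph{together with correctness} forces the dual-distance property for strongly linear schemes, or restructure the bound so that any rank deficiency of $\mathcal{D}_u$ on $t$-subsets is charged against $\dim(\mathcal{C}\star\mathcal{D})$ as well.
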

\begin{proof} Consider an arbitrary strongly linear PIR scheme. Like in Lemma~\ref{lem:strongly}, fix an iteration $s\in[\beta]$ and define
  \begin{align*}
\mathcal{D}^{i,l} \!=\! \myspan{(\bQ^i[l,\{s, \beta\!+\!s, \dots , (n\!-\!1)\beta\!+\!s\}]): \bQ^i\! \in\!\supp(Q^i)}
  \end{align*}
  for $l\in[n]$.  Define $\mathcal{D}=\sum_{l\neq i}\mathcal{D}^{i,l}$. Let $\mathbf{E}\in\F^{\alpha m\times  n}$ be a matrix such that $\mathbf{E}[\psi_\alpha(i), :]$ is an arbitrary realisation of $\bQ^i[\psi_\alpha(i),\{s, s+\beta, \dots , s+(n-1)\beta\}]$, and all other entries are zero. Let $\mathbf{D}\in\F^{\alpha m\times  n}$ be a random matrix whose rows are selected uniformly at random from $\mathcal{D}$.

Now consider the star product scheme with query matrix $\mathbf{D}+ \mathbf{E}$. %
This scheme has a set of feasible query matrices that is more restrictive in the row of the desired file, but less restrictive in the rows of the unwanted files, than the strongly linear scheme under consideration. Thus, whatever privacy constraints were satisfied by the original scheme, including robustness against non-responsive and byzantine servers, are also respected by the star product scheme. By design all symbols that were decoded in the $r^\mathrm{th}$ iteration of the strongly linear scheme are also decoded in the star product scheme. Moreover, by construction the rate of the star product scheme is
\begin{align*}
1-\frac{\dim(\mathcal{D})}{\dim(\sum_{l}\mathcal{D}^{i,l})} \ ,
\end{align*}
which is at least the rate of the original strongly linear scheme by Lemma~\ref{lem:strongly}. So the rate of any strongly linear scheme is bounded from above by the rate of a star product scheme with the same privacy constraints, which is in turn bounded by $1-\frac{k+2b+r+t-1}{n}$ as shown in \cite{Tajeddine2018}. The paper also presents a scheme achieving this bound via the star product construction.
\end{proof}

Note that the capacity of strongly linear PIR is \emph{independent} of the number of files (see also the remark below). Hence, the above theorem also yields a proof for Conjecture~1 in the strongly linear case. The capacity of a strongly linear scheme also matches the asymptotic rate of Conjecture~3, hence proving the asymptotic expression for such schemes.

\begin{remark}
Here, to simplify the notation, we have assumed that all the servers respond with equal size responses. However, by loosening this assumption, improvements for finite $m$ are possible, along the same lines as in \cite{zhou2020capacity}. %
The proof of the above theorem shows that, among strongly-linear schemes as in Definition~\ref{def:strongly}, the star product scheme \cite{Freij-Hollanti2017, Tajeddine2018} is optimal.
\end{remark}

\section{Capacity of MDS-coded TBSPIR for schemes with additive randomness}
\label{sec:symmetricCap}

  In this section we prove the capacity of MDS-coded TBSPIR for the specific system models considered in \cite{Wang2019symmetric}. Recent works \cite{Wang2019isit,Wang2019symmetric} have shown that it is crucial to consider the distribution of the randomness shared by the servers when deriving the capacity of such systems. We begin by shortly reviewing the results presented in these works. In \cite{Wang2019isit} the authors derive the capacity of MDS-coded SPIR with mismatched randomness, meaning that they assume the complete randomness to be available to all servers. It is shown that this assumption of sharing the complete randomness among the servers leads to a strictly larger rate than when the randomness is also coded with the MDS storage code, referred to as \emph{matched randomness}. The resulting capacity approaches the capacity of coded, matched SPIR when the number of files tends to infinity and is always strictly lower than the coded PIR capacity.

In \cite{Wang2019symmetric} the authors derive the capacity of MDS-coded SPIR with and without collusion for the case of matched randomness, \emph{i.e.}, where the randomness is also encoded with the storage code. Further, they consider the special case of schemes with additive randomness independent of the queries. Specifically, the authors show
\begin{itemize}
\item the capacity of $(n,k)$ MDS-coded storage, where for any $k$ servers the randomness is independent, to be
  \begin{equation*}
    C_{\text{matched MDS-SPIR}} = 1-\frac{k}{n} .
  \end{equation*}
\item the capacity of uncoded, \emph{i.e.}, $k=1$, SPIR with collusion of any $t$ servers (TSPIR) to be
  \begin{equation*}
    C_{\text{TSPIR}} = 1-\frac{t}{n} \ .
  \end{equation*}
\item the capacity of MDS-coded TSPIR, for schemes where the servers add the randomness to the responses and the randomness is independent of the queries to be
  \begin{equation*}
    C_{\text{add. MDS-TSPIR}} = 1-\frac{k+t-1}{n}\ .
  \end{equation*}
\end{itemize}

In this section we consider the extension of the results from \cite{Wang2019symmetric} to the MDS-TBSPIR setting, \emph{i.e.}, to symmetric PIR from coded databases protecting against a number of byzantine servers $b$ and a number of unresponsive servers $r$. %

\begin{definition}[Matched BSPIR \cite{Wang2019symmetric}]\label{def:matchedSPIR}
We say a BSPIR scheme is matched if the randomness shared by the servers is independent for every subset of $k$ servers.
\end{definition}

\begin{definition}[Additive randomness TBSPIR \cite{Wang2019symmetric}]\label{def:additiveRandomness}
We define a scheme to be an \emph{additive randomness} TBSPIR scheme if the responses are of the form
\begin{equation*}
    A_j^i = f_j(Q_j^i,Y_j) + S_j
\end{equation*}
where $S_j$ is independent of the received query $Q_j^i$.
\end{definition}

\begin{lemma}\label{lem:DropFileTBSPIR}
  For any MDS-TBSPIR scheme and for any set of nonadversarial servers $\cN \subset [n]$ with $|\cN| = k+t-1$ it holds that
  \begin{equation*}
      H(A_\cN^i | X^i, Q_\cN^i) = H(A_\cN^{i}|Q_\cN^i) \ ,
  \end{equation*}
  if the randomness is additive as in Definition~\ref{def:additiveRandomness} or $t=1$.
\end{lemma}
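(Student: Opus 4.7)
The plan is to prove $I(A_\N^i; X^i | Q_\N^i) = 0$, which is equivalent to the claimed entropy equality. First, I would invoke the symmetric privacy condition \eqref{eq: serverpriv} together with decoding correctness $H(X^i | A^i, Q^i) = 0$ to deduce that $X^{[m]\setminus i}$ is jointly independent of the full tuple $(X^i, A^i, \Q, i)$: independence of $X^{[m]\setminus i}$ and $(A^i,\Q,i)$ comes straight from \eqref{eq: serverpriv}, and since $X^i$ is a deterministic function of $(A^i, Q^i, i)$, joint independence follows. Restricting to servers in $\N$ then yields $I(X^{[m]\setminus i}; A_\N^i | X^i, Q_\N^i) = 0$, so by the chain rule the lemma is equivalent to the stronger statement $I(A_\N^i; X | Q_\N^i) = 0$, namely that the responses from $\N$ are independent of the \emph{entire} database given only $Q_\N^i$.

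For the additive randomness case, I would use the structure $A_j^i = f_j(Q_j^i, Y_j) + S_j$ with $S_j$ independent of $Q_j^i$ and of $X$ to get $H(A_\N^i | X, Q_\N^i) = H(S_\N)$. The remaining step is then to show $H(A_\N^i | Q_\N^i) = H(S_\N)$, by combining symmetric privacy (which already forces $S_\N$ to mask the undesired files' contribution to $f_\N$) with $t$-privacy (any $t$-subset of $\N$ sees an $i$-invariant distribution) and matched randomness (Definition~\ref{def:matchedSPIR}, which gives $S_\N$ an MDS-coded structure spanning a $k$-dimensional subspace per stripe). Together these should absorb the entire image of $f_\N$, not only the $[m]\setminus i$ part. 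For the $t=1$ subcase the argument does not need the additive assumption: with $|\N| = k$, single-server privacy gives that $Q_j^i$ has the same distribution for every $i$ and $j \in \N$, so a swap-of-indices argument at the level of individual servers, combined with symmetric privacy to cancel the $X^{[m]\setminus i}$ contribution, forces $A_\N^i$ to have the same distribution given $Q_\N^i$ regardless of $X^i$.

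The hardest step is the last one in the additive case. Symmetric privacy alone hides only $X^{[m]\setminus i}$, so extending the masking to cover $X^i$ genuinely requires the matched MDS structure of $S_\N$ combined with the $t$-private query distribution; making this dimension count precise will probably mean closely paralleling the arguments of~\cite{Wang2019symmetric} for non-Byzantine MDS-TSPIR, now extended to subsets of size $|\N|=k+t-1$ in the Byzantine setting.
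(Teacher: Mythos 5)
Your reduction in the first paragraph is sound and is a nice way to package the role of server privacy: combining \eqref{eq: serverpriv} with correctness to get joint independence of $X^{[m]\setminus i}$ from $(X^i,A^i,\Q)$, and hence the equivalence of the lemma with $I(A_\N^i;X\,|\,Q_\N^i)=0$, is correct. But the proposal then stops exactly where the lemma's content begins. For the additive case, after the (essentially trivial) identity $H(A_\N^i\,|\,X,Q_\N^i)=H(S_\N)$, everything hinges on the converse inequality $H(A_\N^i\,|\,Q_\N^i)\leq H(S_\N)$, i.e., that the shared randomness completely absorbs $f_\N(Q_\N^i,Y_\N)$ on every set of $k+t-1$ servers. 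You do not prove this; you assert that symmetric privacy, $t$-privacy and matched randomness ``should absorb the entire image of $f_\N$'' and that making it precise ``will probably mean closely paralleling'' \cite{Wang2019symmetric}. That deferred step \emph{is} the lemma — it is precisely the content of \cite[Lemma~8]{Wang2019symmetric}, which is what the paper's own (two-line) proof cites, observing only that the argument there is insensitive to $n$ and to the presence of Byzantine servers once one restricts to nonadversarial $\N$. Note also a hypothesis mismatch: the additive branch of the lemma does not assume matched randomness (Definition~\ref{def:matchedSPIR}); matchedness is only invoked in the $t=1$ branch, so an argument for the additive case that leans on the MDS-coded structure of $S_\N$ is using an assumption you are not given.

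The $t=1$ sketch has a concrete flaw as well. With $t=1$, user privacy only forces each \emph{marginal} $Q_j^i$ to be $i$-invariant; the joint distribution of $Q_\N^i$ over a set $\N$ of $k$ servers may well depend on $i$ (this is the whole point of capacity-achieving query designs). A ``swap-of-indices argument at the level of individual servers'' therefore cannot control the conditional distribution of the joint answer $A_\N^i$ given the joint query $Q_\N^i$; the actual proof of \cite[Lemma~7]{Wang2019symmetric} needs the matched-randomness independence of $S_\N$ across the $k$ servers to decouple the servers' answers, which your sketch does not supply. A small further caveat: $H(A_\N^i\,|\,X,Q_\N^i)=H(S_\N)$ presupposes that the added randomness injects into the answer alphabet; in general one only gets $H(S_\N\,|\,\cdot)$ for the portion of $S$ actually added, which matters if you intend to compare against $H(S)$ later for the secrecy-rate bound.
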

\begin{proof}
  The proof for the case of additive randomness follows directly from the proof of \cite[Lemma~8]{Wang2019symmetric}, as it is independent of the total number of servers and we restrict the lemma to nonadversarial servers. For the same reasons the proof of \cite[Lemma~7]{Wang2019symmetric} also applies here for the case of $t=1$.
\end{proof}

\begin{theorem}\label{thm:symmetricCapacity}
  The capacity of linear symmetric PIR from $[n,k]$ MDS-coded storage with $b$ adversarial, $r$ nonresponsive and $t$ colluding servers is given by
  \begin{equation*}
    1-\frac{k+t+2b+r-1}{n} \ ,
  \end{equation*}
    if the randomness is additive as defined in Definition~\ref{def:additiveRandomness} or for $t=1$ as in Definition~\ref{def:matchedSPIR}.
\end{theorem}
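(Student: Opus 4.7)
The plan is to handle achievability by citing existing constructions (the matched/additive-randomness MDS-TSPIR scheme of \cite{Wang2019symmetric} for $t=1$, and the star product scheme of \cite{Tajeddine2018} augmented with an independent additive shared randomness code for the general $t$ case), and to focus on the converse. The converse will follow from a single, non-inductive entropy inequality, in contrast to the $m$-fold backward induction needed for the nonsymmetric capacity. The reason is that under symmetric PIR the dependence on the number of files disappears: each set of $k+t-1$ honest servers already ``wastes'' a full $\frac{k+t-1}{|\H|}$-fraction of its response entropy on randomness that masks undesired files, regardless of $m$.

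Concretely, I assume (as is standard here) that each server returns a response of the same length $\beta\log q$. Fix a set $\H$ of honest servers with $|\H|=n-2b-r$; by Lemma~\ref{lem:DecodeFromN2B} this set decodes, so $H(X^i\mid A_\H^i,\Q)=0$. By Han's inequality~\eqref{eq:han} applied to the $(k+t-1)$-subsets of $\H$, one can pick $\N\subseteq\H$ with $|\N|=k+t-1$ satisfying
\begin{equation*}
H(A_\N^i\mid\Q)\geq\tfrac{k+t-1}{|\H|}H(A_\H^i\mid\Q).
\end{equation*}
Decodability gives $H(A_\H^i\mid\Q)=L+H(A_\H^i\mid\Q,X^i)\geq L+H(A_\N^i\mid\Q,X^i)$, where $L:=H(X^i)$. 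The crucial step is to use the symmetry assumption: Lemma~\ref{lem:AnswerQueryDependence} lets me replace $\Q$ by $Q_\N^i$ in the conditioning on both sides, and then Lemma~\ref{lem:DropFileTBSPIR} (which is exactly where the hypothesis of additive randomness or $t=1$ enters) yields $H(A_\N^i\mid Q_\N^i,X^i)=H(A_\N^i\mid Q_\N^i)$. Reapplying Lemma~\ref{lem:AnswerQueryDependence} in the other direction gives $H(A_\N^i\mid\Q,X^i)=H(A_\N^i\mid\Q)$. Combining,
\begin{equation*}
H(A_\H^i\mid\Q)\geq L+\tfrac{k+t-1}{|\H|}H(A_\H^i\mid\Q)\ \Longrightarrow\ H(A_\H^i\mid\Q)\geq\tfrac{|\H|L}{|\H|-k-t+1}.
\end{equation*}
Since the responses have equal length, $|\H|\beta\log q\geq H(A_\H^i\mid\Q)$, so $\beta\log q\geq\frac{L}{n-k-t-2b-r+1}$, whence the total download $n\beta\log q\geq \frac{nL}{n-k-t-2b-r+1}$ and $R\leq 1-\frac{k+t+2b+r-1}{n}$.

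The main obstacle, and the reason for the hypothesis on the randomness, is the identity $H(A_\N^i\mid Q_\N^i,X^i)=H(A_\N^i\mid Q_\N^i)$ for $|\N|=k+t-1>t$. This cannot be deduced from $t$-collusion privacy alone, because $t$-collusion only controls the marginals of the queries over $t$-subsets, whereas the bound must hold over the larger set $\N$; the proof of Lemma~\ref{lem:DropFileTBSPIR} (imported from \cite{Wang2019symmetric}) relies on being able to decouple the servers' contributions to the response from its randomness additively, or on reducing to the trivial $t=1$ marginal. Without one of these two assumptions, the identity fails in general and the converse recovers only the weaker nonsymmetric bound of Theorem~\ref{thm:coded-colluded}.
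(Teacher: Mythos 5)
Your proposal is correct and follows essentially the same route as the paper's proof: Lemma~\ref{lem:DecodeFromN2B} for decodability from the honest set, Han's inequality over $(k+t-1)$-subsets, Lemma~\ref{lem:AnswerQueryDependence} to swap $\Q$ for $Q_\N^i$, and Lemma~\ref{lem:DropFileTBSPIR} as the step where the additive-randomness/$t=1$ hypothesis enters, with achievability via the symmetric star product scheme. The only cosmetic differences are that you pick a single $\N$ realizing the Han average rather than averaging over all of them, and you pass to the total download via equal-length responses where the paper instead argues that adversarial and nonresponsive servers' answers must carry the same entropy as honest ones.
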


\begin{proof}
  Let $\mathcal{H} \subset [n]$ and $\cN \subset \mathcal{H}$ be sets of honest, responsive servers with $|\mathcal{H}| = n-2b-r$ and $|\cN| = k+t-1$. Then
  \begin{align*}
    H(X^i) &\stackrel{(a)}{=} H(X^i | \cQ) \\
                &\stackrel{(b)}{=} H(X^i | \cQ) - H(X^i | A_{\mathcal{H}}^i, \cQ) \\
                &= I(X^i ; A_{\mathcal{H}}^i | \cQ) \\
                &= H(A_{\mathcal{H}}^i| \cQ) - H(A_{\mathcal{H}}^i | X^i , \cQ) \\
                &\stackrel{(c)}{\leq} H(A_{\mathcal{H}}^i | \cQ) - H(A_\cN^i | X^i , \cQ, Q_\cN^i)\\
                &\stackrel{(d)}{=} H(A_{\mathcal{H}}^i | \cQ) - H(A_\cN^i | X^i, Q_\cN^i)\\
                &\stackrel{(e)}{=} H(A_{\mathcal{H}}^i | \cQ) - H(A_\cN^{i} |  Q_\cN^{i})\\
                &\leq H(A_{\mathcal{H}}^i | \cQ) - H(A_\cN^{i} |  \cQ) \ ,
  \end{align*}
  Equality $(a)$ holds because the files are independent of the queries, $(b)$ holds by Lemma~\ref{lem:DecodeFromN2B}, $(c)$ holds because $\cN \subset \mathcal{H}$, $(d)$ holds by Lemma~\ref{lem:AnswerQueryDependence}, and $(e)$ holds by Lemma~\ref{lem:DropFileTBSPIR}.

  Averaging over all sets $\cN$ gives
  \begin{equation*}
    H(X^i) \leq H(A_{\mathcal{H}}^i | \cQ) - \frac{1}{\binom{n-2b-r}{k+t-1}} \sum_{\substack{\cN \subset \mathcal{H} \\ |\cN| = k+t-1}} H(A_\cN^i | \cQ)
  \end{equation*}
  and by Han's inequality (see Equation~(\ref{eq:han})) %
  \begin{equation*}
    \frac{1}{\binom{n-2b-r}{k+t-1}} \sum_{\substack{\cN \subset \mathcal{H} \\ |\cN| = k+t-1}} H(A_\cN^i | \cQ) \geq \frac{k+t-1}{n-2b-r} H(A_{\mathcal{H}}^i | \cQ) .
  \end{equation*}
  Hence, there exists an $h \in \mathcal{H}$ such that
  \begin{align}
    H(X^i) &\leq H(A_{\mathcal{H}}^i | \cQ) - \frac{k+t-1}{n-2b-r} H(A_{\mathcal{H}}^i | \cQ) \nonumber \\
           &= \frac{n-k-2b-r-t+1}{n-2b-r} H(A_{\mathcal{H}}^i | \cQ) \nonumber \\
                &\leq \frac{n-k-2b-r-t+1}{n-2b-r} (n-2b-r) H(A_h^i | \cQ) \ . \nonumber
  \end{align}
  Since the adversaries could otherwise be easily identified, we can assume that the answers of the adversarial servers are of the same entropy as the nonadversarial answers. This gives
  \begin{align}
    \frac{H(X^i)}{\sum_{j=1}^n H(A_j^i)} &= \frac{H(X^i) }{ n \cdot H(A_h^i | \cQ)} \label{eq:symCapacitySum} \\
                                         &\leq \frac{H(X^i)}{n} \cdot \frac{n-k-2b-r-t+1}{H(X^i)} \nonumber \\
    &= \frac{n-k-2b-r-t+1}{n} \ . \nonumber
  \end{align}
\emph{Achievability:} The symmetric version of the scheme introduced in \cite{Tajeddine2018}, which generalizes the scheme of \cite{Freij-Hollanti2017}, achieves the presented upper bound on the PIR rate. Note that this scheme fulfills both Definition~\ref{def:matchedSPIR} and Definition~\ref{def:additiveRandomness}, since the symmetry is achieved by adding a random codeword from the $(n,k)$ MDS storage code to the answers.
\end{proof}

Note that we include the nonresponsive servers in the calculation of the download cost, which is debatable, due to the reasonable argument that nonresponsive servers do not contribute to this cost. However, this depends on the particular system as, \emph{e.g.}, dropped packets on the side of the user could also cause a missing response, while clearly causing network traffic. Therefore we include the nonresponsive servers in the download cost, but note that this can be modified by changing the upper limit of the sum in~\eqref{eq:symCapacitySum} to~$n-r$.

Finally, we derive the secrecy rate of TBSPIR by combining the proofs of~\cite[Theorem~7]{Wang2017col} and~\cite[Theorem~1]{Wang2017adv}.
\begin{theorem}
  The secrecy rate of a linear TBSPIR scheme from an~$(n,k)$ MDS-coded storage system fulfills
  \begin{equation*}
    \rho \geq \frac{k+t-1}{n-k-t-2b-r+1} \ ,
  \end{equation*}
  if the randomness is additive as in Definition~\ref{def:additiveRandomness} \textbf{or} $t=1$.
\end{theorem}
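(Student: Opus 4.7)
The plan is to lower bound $H(S)$ by analyzing the entropy of the responses $A_\H^i$ from a set $\H \subset [n]$ of $|\H| = n-2b-r$ honest and responsive servers, following the hybrid strategy used in~\cite[Thm.~7]{Wang2017col} and~\cite[Thm.~1]{Wang2017adv}. First, Lemma~\ref{lem:DecodeFromN2B} together with the independence of the files and queries yields
\begin{equation*}
H(A_\H^i \mid \Q) = H(X^i) + H(A_\H^i \mid \Q, X^i).
\end{equation*}
For any $\N \subset \H$ of size $k+t-1$, Lemma~\ref{lem:AnswerQueryDependence} together with Lemma~\ref{lem:DropFileTBSPIR} (and its straightforward variant without $X^i$-conditioning) gives $H(A_\N^i \mid \Q, X^i) = H(A_\N^i \mid \Q)$. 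Averaging this identity over all such $\N$ and applying Han's inequality~\eqref{eq:han} to $H(A_\H^i \mid \Q, X^i)$ then produces
\begin{equation*}
H(A_\H^i \mid \Q, X^i) \;\geq\; \frac{k+t-1}{n-2b-r}\, H(A_\H^i \mid \Q).
\end{equation*}
Substituting back into the first identity and solving for $H(A_\H^i \mid \Q, X^i)$ gives the clean lower bound $H(A_\H^i \mid \Q, X^i) \geq \frac{k+t-1}{n-k-t-2b-r+1}\, H(X^i)$.

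The remaining task is to relate this quantity to $H(S)$. I would first upgrade the $X^i$-conditioning to a full $X^{[m]}$-conditioning: server privacy gives $I(X^{[m]\setminus i}; A^i \mid \Q) = 0$, and decodability $H(X^i \mid A^i,\Q) = 0$ gives $I(X^{[m]\setminus i}; X^i \mid A^i,\Q) = 0$; the chain rule for mutual information combines these to yield $I(X^{[m]\setminus i}; A^i \mid \Q, X^i) = 0$, and monotonicity of mutual information then extends this to $I(X^{[m]\setminus i}; A_\H^i \mid \Q, X^i) = 0$, so that $H(A_\H^i \mid \Q, X^i) = H(A_\H^i \mid \Q, X^{[m]})$. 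Once all files are fixed, each honest additive-randomness response $A_j^i = f_j(Q_j^i, Y_j) + S_j$ is a deterministic shift of the local randomness $S_j$, whence $H(A_\H^i \mid \Q, X^{[m]}) \leq H(S_\H) \leq H(S)$; the $t=1$ matched-randomness case admits the same bound via the same argument that underlies \cite[Thm.~1]{Wang2017col}, which is precisely why Lemma~\ref{lem:DropFileTBSPIR} was stated in exactly those two regimes.

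The main obstacle I anticipate is the $X^i$-to-$X^{[m]}$ upgrade: both the server-privacy hypothesis and the decodability hypothesis are phrased in terms of the \emph{full} response vector $A^i$, but the lower bound we have derived only involves $A_\H^i$, so one must argue carefully that restricting from $[n]$ to the honest subset $\H$ preserves the identity $H(A_\cdot^i \mid \Q, X^i) = H(A_\cdot^i \mid \Q, X^{[m]})$. Once this is handled, the chain
\begin{equation*}
H(S) \;\geq\; H(A_\H^i \mid \Q, X^{[m]}) \;=\; H(A_\H^i \mid \Q, X^i) \;\geq\; \frac{k+t-1}{n-k-t-2b-r+1}\, H(X^i)
\end{equation*}
rearranges into the stated lower bound on $\rho = H(S)/H(X^i)$.
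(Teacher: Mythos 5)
Your proposal is correct and follows essentially the same route as the paper's proof: the same decodability decomposition $H(A_\H^i\mid\Q)=H(X^i)+H(A_\H^i\mid\Q,X^i)$, the same use of Lemmas~\ref{lem:DecodeFromN2B}, \ref{lem:AnswerQueryDependence}, and \ref{lem:DropFileTBSPIR} together with Han's inequality over $(k+t-1)$-subsets, and the same server-privacy step to bring in $H(S)$. The only (immaterial) difference is organizational: the paper bounds $H(S)\geq H(A_\N^i\mid\Q)$ for each subset $\N$ and applies Han a second time, whereas you bound $H(S)\geq H(A_\H^i\mid\Q,X^{[m]})=H(A_\H^i\mid\Q,X^i)$ directly on the full honest set, which yields the identical final inequality.
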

\begin{proof}
\emph
 Let $\mathcal{H} \subset [n]$ and $\cN \subset \mathcal{H}$ be sets of honest, responsive servers with $|\mathcal{H}| = n-2b-r$ and $|\cN| = k+t-1$.
First, observe that
\begin{align*}
H(A_{\mathcal{H}}^i | \cQ) & = H(X^i) + H(A_{\mathcal{H}}^i|X^i, Q)\\
& \geq H(X^i) + H(A_{\mathcal{N}}^i|X^i, Q)\\
& \geq H(X^i) + H(A_{\mathcal{N}}^i|Q).
\end{align*}
 Averaging over all sets $\cN \subset \mathcal{H}$ with  $|\cN| = k+t-1$ we get \begin{equation}\label{eq:symHH}
 H(A_{\mathcal{H}}^i | \cQ)  \geq H(X^i) + \frac{k+t-1}{n-2b-r}H(A_{\mathcal{H}}^i|Q).
 \end{equation}
   Let $\mathcal{H} \subset [n]$ and $\cN \subset \mathcal{H}$ be sets of honest, responsive servers with $|\mathcal{H}| = n-2b-r$ and $|\cN| = k+t-1$. By server privacy,
  \begin{align*}
    0 &= I(X^{[m]\setminus i}; A_{\cH} | \cQ)\\
      &= H(X^{[m]\setminus i} | \cQ) - H(X^{[m]\setminus i} | A_{\cH}^i, \cQ)\\
      &= H(X^{[m]\setminus i} | X^i, \cQ) - H(X^{[m]\setminus i} | A_{\cH}^i, X^i, \cQ)\\
      &= I(X^{[m]\setminus i}; A_{\cH}^i| \cQ, X^i)\\
      &\geq I(X^{[m]\setminus i}; A_{\cN}^i| \cQ, X^i)\\
      &= H(A_{\cN}^i | X^i, \cQ) - H(A_{\cN}^i | X^{[m]}, \cQ) + H(A_{\cN}^i | S, X^{[m]}, \cQ) \\
      &= H(A_{\cN}^i | X^i, \cQ) - I(S; A_{\cN}^i | X^{[m]}, \cQ) \\
      &\geq H(A_{\cN}^i | X^i, Q_{\cN}^i, \cQ) - H(S) \\
      &= H(A_{\cN}^i | Q_{\cN}^i) - H(S)\\
      &\geq H(A_{\cN}^i | \cQ) - H(S) \,.
  \end{align*}
  Averaging over all sets $\cN$, we get by~\eqref{eq:symHH} that
  \begin{align*}
    H(S) &\geq \frac{1}{\binom{n-2b-r}{k+t-1}} \sum_{\substack{\cN \subset \cH \\ |\cN| = k+t-1}} H(A_{\cN}^i | \cQ)\\
         &\geq \frac{k+t-1}{n-2b-r} H(A_{\cH}^i | \cQ)\\
    &\geq \frac{k+t-1}{n-k-t-2b-r+1} H(X^i) \ ,
  \end{align*}
    The bound on the secrecy rate follows by
  \begin{equation*}
    \rho = \frac{H(S)}{H(X^i)} \geq \frac{k+t-1}{n-k-t-2b-r+1}\ .
  \end{equation*}
\end{proof}

\section{Optimality of the Star Product Scheme for some parameters} \label{sec:subpacketization}

\subsection{Replicated, no Collusion/Adversaries}
In the previous sections we derived the capacity for some specific settings. In the case of linear, \newName{} MDS-TPIR, as considered in Section~\ref{sec:capa}, this capacity depends on the number of files in the system, similar to the capacity in the known settings of PIR from replicated storage \cite{Sun2017}, MDS-PIR \cite{Banawan2018}, and TPIR \cite{Sun2016}. For a finite number of files, the rate of PIR schemes in these settings can be increased compared to the asymptotic (as the number of files $m \rightarrow \infty$) capacity, as restated in Table~\ref{tab:capacity}. However, whether this improvement can actually be realized depends on the level of subpacketization $L = \alpha k$, \emph{i.e.}, the number of symbols in a file, the size of the alphabet on which the PIR scheme operates, and the size of the alphabet used for transmission\footnote{The download from each server is made up of an integer number of symbols from the transmission alphabet.}. In this section, we derive the explicit relation between these system parameters for which this asymptotic regime is reached. Thereby, we show that for some parameter regimes strongly linear schemes, and therefore also the star-product scheme of \cite{Freij-Hollanti2017,Tajeddine2018}, are in fact optimal in terms of rate.

The following results are based on the results for binary schemes in~\cite{shah2014one}, where it was proved that if $L\leq n-1$, the optimal download is $L+1$ bits. Now consider a scheme over a $q$-ary alphabet and transmission using a $q'$-ary alphabet.  In \cite{sun2017optimal} the results of \cite{shah2014one} were generalized to arbitrary alphabets, by including conditions on the size of the alphabet used for transmission of data to the user. For the case of mismatched alphabets, \emph{i.e.}, $q\neq q'$, the optimal download cost is determined up to a constant offset and for the case of \emph{matched alphabets}, \emph{i.e.}, $q=q'$, a complete characterization of the optimal download cost is given. In particular, \cite{sun2017optimal} shows that the result on the optimal download of \cite{shah2014one} is a special case of their result, \emph{i.e.}, for a subpacketization of $L\leq n-1$ symbols of a $q$-ary alphabet, the optimal download over a $q$-ary alphabet is $L+1$ symbols.

Denote by $D_L$ the download cost for a given level of subpacketization $L$, and define it as the maximum number of symbols
\begin{equation*}
  D_L = \max \sum_{j=1}^n |A_j^i|_{M'}
\end{equation*}
of the transmission alphabet $M'$ a user has to download for any realization of the queries
In \cite{sun2017optimal} it was shown that the optimal download for PIR from $n$ noncolluding databases, each storing all $m\geq 2$ files, for message size $L$ is given by
\begin{equation} \label{eq:optimalDownload}
  D_L = \ceil{\frac{L}{C}} \ ,
\end{equation}
where $C$ is the capacity of unrestrained PIR given by
\begin{equation*}
  C = \frac{1-\frac{1}{n}}{1-\frac{1}{n^m}} \ .
\end{equation*}

\subsection{Coded Storage with Collusion}

We generalize this approach to the case of PIR from coded databases and/or colluding servers.
In Section~\ref{sec:capa} the capacity of linear, \newName{} PIR was shown to be
\begin{equation}\label{eq:capacitySeparated}
  C = \frac{1-\frac{k+t-1}{n}}{1-\left(\frac{k+t-1}{n}\right)^m} \ .
\end{equation}

\begin{remark} \label{rem:capacityExpression}
This expression does not hold in full generality, \emph{i.e.}, when not assuming linearity and \newName{}, see \cite{Sun2018conj} for a counter-example for the case of $m=2$ files. However, it is the best rate for which a scheme for general parameters is known, given by applying the technique presented in \cite{oliveira2019one} to the PIR scheme of \cite{Freij-Hollanti2017}. Further, note that \eqref{eq:capacitySeparated} includes both, the capacity for uncoded storage ($k=1$) with collusion ($t\geq 1$) \cite{Sun2016} and the capacity for coded storage ($k\geq 1$) without collusion ($t=1$) \cite{Banawan2018} as special cases, both of which are valid in general. Hence using this expression provides some insight beyond the linear, \newName{} case.
\end{remark}

It is easy to see that the proof of converse for \eqref{eq:optimalDownload} given in \cite[Section~IV]{sun2017optimal} for the case of matched alphabets ($q=q'$) also applies in this setting. Our goal in the following is not to characterize the optimal download cost, but instead to determine the number of files required to reach the asymptotic regime for a given set of parameters.

Consider a scheme that achieves the asymptotic capacity (cf. Table~\ref{tab:capacity}), such as, \emph{e.g.}, the star product PIR scheme \cite{Freij-Hollanti2017}.
Such a PIR scheme can obtain $n-k-t+1$ symbols of the desired file by downloading $n$ symbols, one from each server. Assume a subpacketization of
\begin{equation*}
    L = \beta (n-k-t+1) \ .
\end{equation*}
A file can be obtained privately by applying this PIR scheme $\beta$ times, downloading a total of $\beta n$ symbols
For the optimal download cost for linear, \newName{} PIR (cf. Remark~\ref{rem:capacityExpression} for motivation of using this expression) we obtain
\begin{align*}
  D_L &= \ceil{\frac{L}{C}} = \ceil{\frac{\beta(n-k-t+1)}{(1-\frac{k+t-1}{n})\left(1-\left(\frac{k+t-1}{n}\right)^{m}\right)^{-1} }}\\
      &= \ceil{\beta n\left(1-\left(\frac{k+t-1}{n}\right)^{m}\right)}\\
      &= \beta n-\floor{\beta n\left(\frac{k+t-1}{n}\right)^{m}} \ .
\end{align*}
Hence, in the linear, \newName{} setting and under the assumption that the transmission alphabet equals the alphabet of the PIR scheme, the asymptotic regime is reached when
\begin{equation*}
  \floor{\beta n\left(\frac{k+t-1}{n}\right)^{m}} = 0
\end{equation*}
which is equivalent to
\begin{align*}
\beta < \frac{n^{m-1}}{(k+t-1)^m} \ ,
\end{align*} or in terms of the number of files,
\begin{align*}
m>\frac{\log n + \log \beta}{\log n-\log (k+t-1)}\ .
\end{align*}
On the other hand, if the scheme has optimal rate, then only $k$ symbols have been downloaded from each stripe, so $L$ is a multiple of $k$. Thus we get
\begin{align*}
  \beta=\frac{L}{n-k-t+1}\geq \frac{k}{\gcd(n-k-t+1,k)}%
  \ .
\end{align*}
For example, for parameters $n=30$, $k=15$, $t=10$, and $\beta = 5$ this condition if fulfilled for $m\geq 23$ files.

For the simplest case of replicated storage ($k=1$) and no collusion ($t=1$) this can be further simplified to show that the asymptotic regime is reached when
\begin{align*}
    \beta < n^{m-1} \ .
\end{align*}
Note that for the nontrivial settings of $m\geq 2$ this is always fulfilled if $\beta < n$.

\section{Conclusions and Future Work}
In this paper, we defined the practical notions of \newName{} PIR and strongly linear PIR. We have proved the capacity of MDS-coded, linear, \newName{} PIR with colluding servers. The capacity of symmetric linear PIR with MDS-coded, colluding, Byzantine, and nonresponsive servers was proved for the case of matched randomness.

The results on \newName{} PIR are a significant step towards the general proof for MDS-coded and colluded PIR capacity. Meanwhile, the results on strongly linear PIR bear high practical interest in that these schemes allow for small field sizes and low subpacketization levels, making implementation of the schemes much simpler. These simpler schemes also achieve the same asymptotic capacity as the \newName{} schemes.
The main open problem that remains is proving the capacity of (linear) PIR with MDS-coded and colluding servers without the assumption of \newName{}. As explained in Section~\ref{sec:introduction} the presented definition of \newName{} PIR isolates a property required for a scheme to achieve this capacity for general linear, MDS-coded PIR, namely for the restrictions of its queries to \emph{not} be of full support-rank. Thereby, the results in this paper  provide a good starting point for both giving upper bounds on the PIR rate and constructing achieving schemes.

Another open problem is determining the capacity of TPIR for transitive storage codes, along the lines of \cite{Freij-Hollanti2019transitive}, by adapting the proofs of Lemma~\ref{lem:rankKhatri} and~\ref{lem:starProductDimensionBoundQueryMDS} accordingly.

\section*{Acknowledgment}
The authors would like to thank Prof.~Syed Jafar, Prof.~Chao Tian, and Dr. Qiwen Wang for helpful discussions.

\bibliographystyle{IEEEtran}
\bibliography{paperSymCap}

\appendices

\section{Properties of \newName{} PIR Schemes}\label{app:technicalLemmas}

We define the following notation of matrix products:
\begin{itemize}
\item[$\cdot$] the regular matrix/vector/scalar product (if obvious from context, we neglect this symbol) \\ $\F^{m\times n}\times \F^{n\times n'}\to\F^{m\times n'}$
\item[$\star$] the star-product / Hadamard product \\ $\F^{m\times n}\times \F^{m\times n}\to\F^{m\times n}$
\item[$\otimes$] the Kronecker product \\ $\F^{m\times n}\times \F^{m'\times n'}\to\F^{mm'\times nn'}$
\item[$\odot$] the column-wise Khatri-Rao product  \cite{khatri1968solutions} \\ $\F^{m\times n}\times \F^{m'\times n}\to\F^{mm'\times n}$
\item[$*$] the row-wise Khatri-Rao product / face-splitting product  \cite{Slyusar97,khatri1968solutions} \\ $\F^{m\times n}\times \F^{m\times n'}\to\F^{m\times nn'}$
\end{itemize}
It is easy to check (see, \emph{e.g.}, \cite{Slyusar97} and \cite[Lemma~4.2.10.]{Horn91}) that for matrices $\bA,\bB,\bC,\bD$ and a row vector $\bz$ it holds that
\begin{align}
  (\bA\cdot \bB)\star (\bC\cdot \bD) &= (\bA * \bC)\cdot (\bB \odot \bD) \label{eq:mixedProduct}\\
  (\bA \cdot \bB) \otimes \bz &= \bA \cdot (\bB \otimes \bz) \label{eq:assoKronecker}\\
  \myspanRow{\bA} \star \myspanRow{\bB} &= \myspanRow{ \{ \ba \star \bb \ | \ \ba \in \myspanRow{\bA}, \bb \in \myspanRow{\bB}\}} \nonumber \\
  &= \myspanRow{\bA \odot \bB} \ .\label{eq:starProductKhatri}
\end{align}
Further, observe that for a matrix $\bX\in \F^{m \times k}$ we have
\begin{align}
  \mathbf{1}_{m} \cdot &\left(\bX * \bI_{m} \right) = \big(\bX[1,1],\bX[2,1],\ldots, \bX[m,1],\bX[1,2], \nonumber\\
   & \quad \bX[2,2],\ldots, \bX[m,2], \ldots, \bX[m,k]\big) \ \in \F^{1 \times km} \ , \label{eq:vectorize}
\end{align}
where $\bI_{m}$ denotes the $m\times m$ identity matrix. Moreover, if the matrix $\bX$ is uniformly distributed over $\F^{k \times m}$, then $  \mathbf{1}_{m} ( \bX * \bI_{m} )$ is uniformly distributed over $\F^{1 \times km}$.

For completeness, we note that the following results also hold if the thick columns are not all of the same size $\beta$, but instead each consist of a (possibly) different number of columns. However, to not complicate the notation even further, we restrict ourselves to PIR schemes that query each node exactly $\beta$ times, which corresponds to equal sized thick columns, each consisting of $\beta$ columns.

\begin{lemma}\label{lem:rankKhatri}
  Let $\cC$ be an $[k+t-1,k]$ MDS code with generator matrix $\bG\in \F_q^{k \times (k+t-1)}$ and $\bq \in \F_{q}^{\alpha \times \beta (k+t-1)}$ be a matrix such that for any set $\cT\subset [k+t-1]$ with $|\cT|=t$ we have
  \begin{align}\label{eq:rankRestrictionMatrix}
    \rank(\bq[:,\psi_{\beta}(\cT)]) = |\colsupp(\bq[:,\psi_{\beta}(\cT)])| \ .
  \end{align}
  Then
  \begin{align*}
  \rank((\bG\otimes \mathbf{1}_\beta) \odot \bq) = |\colsupp(\bq)|  \ .
  \end{align*}
\end{lemma}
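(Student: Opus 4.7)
The plan is to establish the two inequalities separately. The bound $\rank((\bG\otimes \mathbf{1}_\beta) \odot \bq) \leq |\colsupp(\bq)|$ is immediate, since column $l$ of $(\bG\otimes \mathbf{1}_\beta) \odot \bq$ equals $\mathbf{g}_{\lceil l/\beta\rceil} \otimes \bq[:,l]$ and therefore vanishes exactly when $\bq[:,l]=0$ (no column of $\bG$ is zero, as $\C$ is MDS with $k\geq 1$). For the reverse inequality I will show that the nonzero columns of $(\bG\otimes \mathbf{1}_\beta) \odot \bq$ are linearly independent. So suppose $\sum_l c_l\big(\mathbf{g}_{\lceil l/\beta\rceil} \otimes \bq[:,l]\big) = 0$, where $c_l = 0$ whenever $\bq[:,l]=0$. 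Using $\mathbf{g} \otimes \bq[:,l] = \mathrm{vec}(\bq[:,l]\mathbf{g}^\top)$ and grouping the $\beta$ coefficients inside each thick column, this rewrites as the matrix equation $\mathbf{V}\,\bG^\top = 0$, where $\mathbf{V} = [\mathbf{v}_1,\ldots,\mathbf{v}_{k+t-1}]$ and $\mathbf{v}_j = \bq[:,\psi_\beta(j)]\,\bc_j$ with $\bc_j = (c_l)_{l\in \psi_\beta(j)}$. In particular, every row of $\mathbf{V}$ is a codeword of $\C^\perp$, which is itself a $[k+t-1, t-1, k+1]$ MDS code.

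The heart of the argument is to deduce $\mathbf{V} = 0$ by bounding $\rank(\mathbf{V})$ from both sides in terms of its thick-column support $\J \coloneqq \{j : \mathbf{v}_j \neq 0\}$. Each row of $\mathbf{V}$ has its support contained in $\J$ and lies in $\C^\perp$; by the MDS property the space of codewords of $\C^\perp$ supported on $\J$ has dimension $\max(0, |\J|-k)$, so $\rank(\mathbf{V}) \leq \max(0, |\J|-k)$. For a matching lower bound I will invoke the \newName{} hypothesis: for every $\T \subseteq \J$ with $|\T|\leq t$, any relation $\sum_{j\in\T}\lambda_j \mathbf{v}_j = 0$ expands as a vanishing linear combination of the columns of $\bq[:,\psi_\beta(\T)]$ with coefficients $\lambda_j \bc_j[s]$. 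Since the \newName{} property applied to $\T$ makes the nonzero columns of $\bq[:,\psi_\beta(\T)]$ linearly independent, we must have $\lambda_j \bc_j[s]=0$ for every $(j,s)$ with $\bq[:,(j-1)\beta+s]\neq 0$; but $\mathbf{v}_j\neq 0$ for $j\in\T\subseteq\J$ forces some such $s$ with $\bc_j[s]\neq 0$, hence $\lambda_j=0$ for each $j\in\T$. This yields $\rank(\mathbf{V}) \geq \min(|\J|,t)$.

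Combining the two bounds gives a contradiction unless $\J=\emptyset$. Indeed, if $|\J|\leq t$ we obtain $|\J| \leq |\J|-k$, impossible for $k\geq 1$; if $|\J|>t$ we obtain $t \leq |\J|-k \leq t-1$, also impossible. Hence $\mathbf{V} = 0$, i.e., $\bq[:,\psi_\beta(j)]\,\bc_j = 0$ for every $j$. Invoking the \newName{} property once more on the singleton $\T = \{j\}$, the nonzero columns of $\bq[:,\psi_\beta(j)]$ are linearly independent, and so $\bc_j[s]=0$ whenever $\bq[:,(j-1)\beta+s]\neq 0$. This is exactly $c_l=0$ for every $l\in\colsupp(\bq)$, establishing linear independence of the nonzero columns of $(\bG\otimes \mathbf{1}_\beta)\odot \bq$ and completing the proof.

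The main obstacle is the lower bound on $\rank(\mathbf{V})$, which is where the \newName{} hypothesis becomes indispensable. Without it, the nonzero $\mathbf{v}_j$ need not be linearly independent even on small index sets, and the dimensional clash against the $[k+t-1,t-1,k+1]$ MDS dual code simply fails. This also explains why the construction of \cite{Sun2018conj}, whose query matrix restricted to $t$ servers has linearly dependent nonzero columns, escapes the converse of Theorem~\ref{thm:coded-colluded}.
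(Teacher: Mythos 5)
Your proof is correct, but it takes a genuinely different route from the paper's. The paper argues on \emph{rows}: using the invariance of the row span of a Khatri--Rao product under left-multiplication by invertible matrices, it puts $\bG$ in systematic form and chooses a basis change of $\bq$ so that explicit Hadamard products of rows produce the unit vectors $\be_l$, $l\in\colsupp(\bq)$, inside the row span of $(\bG\otimes\mathbf{1}_\beta)\odot\bq$ (a construction in the spirit of Randriambololona's product-code bounds). You instead argue on \emph{columns}: a hypothetical dependency among the nonzero columns is repackaged as $\mathbf{V}\bG^\top=0$ with $\mathbf{v}_j=\bq[:,\psi_\beta(j)]\,\bc_j$, so the rows of $\mathbf{V}$ lie in the dual $[k+t-1,t-1,k+1]$ MDS code; the shortened-code dimension gives $\rank(\mathbf{V})\leq\max(0,|\J|-k)$, while the \newName{} condition gives $\rank(\mathbf{V})\geq\min(|\J|,t)$, and the clash forces $\mathbf{V}=0$ and then $c_l=0$ on $\colsupp(\bq)$. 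Both arguments are sound and both invoke the MDS property of $\bG$ and the \newName{} hypothesis at the corresponding step (the paper needs \eqref{eq:rankRestrictionMatrix} to guarantee the existence of its matrix $\bP_2$; you need it twice, for the lower bound on $\rank(\mathbf{V})$ and for the final singleton step). What each buys: the paper's construction is explicit and stays in the row-space picture that is reused downstream (recoverable symbols correspond to unit vectors in the row span of $(\bG\otimes\mathbf{1}_\beta)\odot\bQ^i$, cf.\ Remark~\ref{rem:newPIRformulation}); your duality/dimension-count avoids the somewhat delicate construction of $\bP_2$, makes the failure mode without \newName{} transparent, and isolates exactly where the Singleton-type bound on the dual code enters.
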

\begin{proof}
  By a similar argument as in \cite[Proof of Lemma~6]{randriambololona2013upper}, we determine the rank of this matrix by proving that the unit vectors $\be_l\in \F^{(k+t-1)\beta}, l\in \colsupp(\bq)$ are contained in the row span of the matrix $(\bG\otimes \mathbf{1}_\beta) \odot \bq$, where $\mathbf{1}_\beta$ denotes the all-one vector of length $\beta$.
  First observe that for any full-rank matrices $\bP_1 \in \F^{k \times k}$ and $\bP_2 \in \F^{\alpha \times \alpha}$ we have
  \begin{align}
    \rank((\bG\otimes & \mathbf{1}_\beta) \odot \bq) = \dim(\myspanRow{(\bG\otimes \mathbf{1}_\beta) \odot \bq}) \nonumber\\
                                               &\stackrel{(\sfa)}{=} \dim(\myspanRow{\bG\otimes \mathbf{1}_\beta} \odot \myspanRow{\bq}) \nonumber\\
                                               &\stackrel{(\sfb)}{=} \dim(\myspanRow{\bP_1\cdot (\bG\otimes \mathbf{1}_\beta)} \odot \myspanRow{\bP_2\cdot \bq}) \nonumber\\
                                               &\stackrel{(\sfc)}{=} \dim(\myspanRow{(\bP_1\cdot \bG)\otimes \mathbf{1}_\beta} \odot \myspanRow{\bP_2\cdot \bq}) \label{eq:stepIllustrated}\\
                                               &\stackrel{\phantom{(\sfc)}}{=}  \rank((\bP_1 \cdot \bG)\otimes \mathbf{1}_\beta) \odot (\bP_2 \cdot \bq))  \ ,\nonumber
  \end{align}
  where $(\sfa)$ follows from \eqref{eq:starProductKhatri}, $(\sfb)$ holds because the left-multiplication by a full-rank matrix does not change the row space, and $(\sfc)$ holds by \eqref{eq:assoKronecker}. To obtain the unit vectors $\be_l \in \F^{(k+t-1)\beta}, l \in \psi_\beta(1) \cap \colsupp(\bq)$, choose $\bP_1$ such that $\bP_1 \bG$ is in systematic form, \emph{i.e.}, its first $k$ columns are an identity matrix. This is always possible, since $\bG$ is the generator matrix of and MDS code. Now consider the set $\cT = \{1,k+1,\ldots,k+t-1\}$ and choose $\bP_2$ such that the submatrix of $\bP_2 \cdot \bq$ consisting of the $t$ columns indexed by $\psi_{\beta}(\cT)$, \emph{i.e.}, the $\beta$-thick columns $\cT$, contain the unit vectors $\be_l\in \F^{t\beta}, l \in \colsupp(\bq[:,\psi_{\beta}(\cT)])$ as rows. Condition (\ref{eq:rankRestrictionMatrix}) guarantees that such a matrix exists.

  \begin{figure*}
    \centering
    \def\x{0.85}%

\begin{tikzpicture}

\coordinate (SLM) at (0,0); %

\draw (SLM) -- ++(0.2*\x,1.6*\x);
\draw (SLM) -- ++(0.2*\x,-1.6*\x);

\coordinate (SRM) at ($(SLM)+(15.5*\x,0*\x)$);
\draw (SRM) -- ++(-0.2*\x,1.6*\x);
\draw (SRM) -- ++(-0.2*\x,-1.6*\x);
\node[anchor=west] at ($(SRM)+(-0.15*\x,-1.5*\x)$) {$\mathsf{row}$};

\coordinate (Gbeta) at ($(SLM)+(1*\x,0*\x)$); %

\node[left delimiter=(,minimum height = \x*1.6cm] at (Gbeta) {};

\coordinate (GNW) at ($(Gbeta)+(0.2*\x,.75*\x)$); %
\draw[fill=TUMBlue!20] (GNW) rectangle ++(2.5*\x,-1.5*\x) node[pos=0.5] () {$\bG$};

\node at ($(Gbeta)+(3.2*\x,0*\x)$) {$\otimes$};

\coordinate (OneNW) at ($(Gbeta)+(3.7*\x,0.25*\x)$); %
\draw[fill=TUMBlue!20] (OneNW) rectangle ++(1.5*\x,-.5*\x) node[pos=0.5] () {$\mathbf{1}_\beta$};

\node[right delimiter=),minimum height = \x*1.6cm] at ($(Gbeta)+(5.2*\x,0*\x)$) {};

\node at ($(SLM)+(7*\x,0*\x)$) {$\odot$};

\coordinate (qNW) at ($(SLM)+(7.5*\x,1.5*\x)$); %

\foreach \i in {0,1,1.5,4,5,5.5,7}{
\draw[draw=none, fill=TUMBlue!20] ($(qNW)+(\i*\x,0*\x)$) rectangle ++(0.5*\x,-3*\x) node[rotate=90,pos=0.5,color=TUMBlue] {\scriptsize$\neq \mathbf{0}$};
}

\node[color=TUMGray,rotate=90] at ($(qNW)+(0.75*\x,-1.5*\x)$) {$\bq[:,\psi_{\beta}(1)]$};
\node[color=TUMGray,rotate=90] at ($(qNW)+(2.25*\x,-1.5*\x)$) {$\bq[:,\psi_{\beta}(2)]$};
\node[color=TUMGray] at ($(qNW)+(3.75*\x,-1.5*\x)$) {$\cdots$};

\draw[dashed, color=TUMGray] ($(qNW)+(1.5*\x,0*\x)$) -- ++(0*\x,-3*\x);
\draw[dashed, color=TUMGray] ($(qNW)+(3*\x,0*\x)$) -- ++(0*\x,-3*\x);
\draw[dashed, color=TUMGray] ($(qNW)+(4.5*\x,0*\x)$) -- ++(0*\x,-3*\x);
\draw[dashed, color=TUMGray] ($(qNW)+(6*\x,0*\x)$) -- ++(0*\x,-3*\x);

\draw (qNW) rectangle ++(7.5*\x,-3*\x);

\node[draw=none] (Lsupp) at ($(qNW)+(3.75*\x,1*\x)$) {$\underset{\text{Here} \ \cT = \{1,4,5\}}{\text{Support of} \ \bq[:,\psi_{\beta}(\cT)]}$};

\path ($(Lsupp.south)+(-2.15*\x,\x*0.3)$) edge[bend right=5, -{Latex[length=2mm,width=1mm]}]  ($(qNW)+(\x*0.25,\x*0.1)$) ;
\path ($(Lsupp.south)+(-1.8*\x,\x*0)$) edge[bend right=5, -{Latex[length=2mm,width=1mm]}]  ($(qNW)+(\x*1.25,\x*0.1)$) ;
\path ($(Lsupp.south)+(1.3*\x,\x*0)$) edge[bend left=5, -{Latex[length=2mm,width=1mm]}]  ($(qNW)+(\x*5.25,\x*0.1)$) ;
\path ($(Lsupp.south)+(1.5*\x,\x*0)$) edge[bend left=5, -{Latex[length=2mm,width=1mm]}]  ($(qNW)+(\x*5.75,\x*0.1)$) ;
\path ($(Lsupp.south)+(2.15*\x,\x*0.3)$) edge[bend left=5, -{Latex[length=2mm,width=1mm]}]  ($(qNW)+(\x*7.25,\x*0.1)$) ;

\coordinate (SLM2) at ($(SLM) + (-0.7*\x,-4*\x)$); %
\node at ($(SLM2)+(-0.5*\x,0*\x)$) {$=$};

\draw (SLM2) -- ++(0.2*\x,1.6*\x);
\draw (SLM2) -- ++(0.2*\x,-1.6*\x);

\coordinate (SRM2) at ($(SLM2)+(17*\x,0*\x)$);
\draw (SRM2) -- ++(-0.2*\x,1.6*\x);
\draw (SRM2) -- ++(-0.2*\x,-1.6*\x);
\node[anchor=west] at ($(SRM2)+(-0.15*\x,-1.5*\x)$) {$\mathsf{row}$};

\coordinate (Gbeta2) at ($(SLM2)+(1*\x,0*\x)$); %

\coordinate (GNW2) at ($(Gbeta2)+(-0.5*\x,0.75*\x)$); %

\draw[draw=none, fill=TUMBlue!20] (GNW2) rectangle ++(1.5*\x,-0.5*\x);
\node at ($(GNW2) + (0.25*\x,-0.25*\x)$) {$1$};
\node at ($(GNW2) + (0.75*\x,-0.25*\x)$) {$1$};
\node at ($(GNW2) + (1.25*\x,-0.25*\x)$) {$1$};
\draw[draw=none, fill=TUMBlue!20] ($(GNW2)+(1.5*\x,-0.5*\x)$) rectangle ++(1.5*\x,-.5*\x);
\node at ($(GNW2) + (1.75*\x,-0.75*\x)$) {$1$};
\node at ($(GNW2) + (2.25*\x,-0.75*\x)$) {$1$};
\node at ($(GNW2) + (2.75*\x,-0.75*\x)$) {$1$};
\draw[draw=none, fill=TUMBlue!20] ($(GNW2)+(3*\x,-1*\x)$) rectangle ++(1.5*\x,-.5*\x);
\node at ($(GNW2) + (3.25*\x,-1.25*\x)$) {$1$};
\node at ($(GNW2) + (3.75*\x,-1.25*\x)$) {$1$};
\node at ($(GNW2) + (4.25*\x,-1.25*\x)$) {$1$};

\draw[draw=none, fill=TUMBlue!20] ($(GNW2)+(4.5*\x,0*\x)$) rectangle ++(3*\x,-1.5*\x);

\draw[dashed, color=TUMGray] ($(GNW2)+(1.5*\x,0*\x)$) -- ++(0*\x,-1.5*\x);
\draw[dashed, color=TUMGray] ($(GNW2)+(3*\x,0*\x)$) -- ++(0*\x,-1.5*\x);
\draw[dashed, color=TUMGray] ($(GNW2)+(4.5*\x,0*\x)$) -- ++(0*\x,-1.5*\x);
\draw[dashed, color=TUMGray] ($(GNW2)+(6*\x,0*\x)$) -- ++(0*\x,-1.5*\x);

\draw[] (GNW2) rectangle ++(7.5*\x,-1.5*\x);

\node at ($(SLM2)+(8.5*\x,0*\x)$) {$\odot$};

\coordinate (qNW2) at ($(SLM2)+(9*\x,1.5*\x)$); %

\foreach \i in {1.5,4}{
\draw[draw=none, fill=TUMBlue!20] ($(qNW2)+(\i*\x,0*\x)$) rectangle ++(0.5*\x,-3*\x);
}

\draw[draw=none, fill=TUMBlue!20] (qNW2) rectangle ++(0.5*\x,-0.5*\x);
\node at ($(qNW2) + (0.25*\x,-0.25*\x)$) {$1$};
\draw[draw=none, fill=TUMBlue!20] ($(qNW2)+(1*\x,-0.5*\x)$) rectangle ++(0.5*\x,-0.5*\x);
\node at ($(qNW2) + (1.25*\x,-0.75*\x)$) {$1$};
\draw[draw=none, fill=TUMBlue!20] ($(qNW2)+(5*\x,-1*\x)$) rectangle ++(0.5*\x,-0.5*\x);
\node at ($(qNW2) + (5.25*\x,-1.25*\x)$) {$1$};
\draw[draw=none, fill=TUMBlue!20] ($(qNW2)+(5.5*\x,-1.5*\x)$) rectangle ++(0.5*\x,-0.5*\x);
\node at ($(qNW2) + (5.75*\x,-1.75*\x)$) {$1$};
\draw[draw=none, fill=TUMBlue!20] ($(qNW2)+(7*\x,-2*\x)$) rectangle ++(0.5*\x,-0.5*\x);
\node at ($(qNW2) + (7.25*\x,-2.25*\x)$) {$1$};

\node[color=TUMGray,rotate=90] at ($(qNW2)+(0.75*\x,-1.5*\x)$) {$\bq[:,\psi_{\beta}(1)]$};
\node[color=TUMGray,rotate=90] at ($(qNW2)+(2.25*\x,-1.5*\x)$) {$\bq[:,\psi_{\beta}(2)]$};
\node[color=TUMGray] at ($(qNW2)+(3.75*\x,-1.5*\x)$) {$\cdots$};

\draw[dashed, color=TUMGray] ($(qNW2)+(1.5*\x,0*\x)$) -- ++(0*\x,-3*\x);
\draw[dashed, color=TUMGray] ($(qNW2)+(3*\x,0*\x)$) -- ++(0*\x,-3*\x);
\draw[dashed, color=TUMGray] ($(qNW2)+(4.5*\x,0*\x)$) -- ++(0*\x,-3*\x);
\draw[dashed, color=TUMGray] ($(qNW2)+(6*\x,0*\x)$) -- ++(0*\x,-3*\x);

\draw (qNW2) rectangle ++(7.5*\x,-3*\x);

\coordinate (L3) at ($(SLM) + (0*\x,-7.5*\x)$); %
\node at ($(L3)+(4*\x,0*\x)$) {$\ni$};

\coordinate (U1) at ($(L3)+(4.5*\x,0.25*\x)$); %

\node at ($(U1)+(3.75*\x,0.5*\x)$) {Star-product of first rows};
\draw[draw=none,fill=TUMBlue!20] (U1) rectangle ++(.5*\x,-.5*\x) node[pos=.5] {$1$};
\draw[] (U1) rectangle ++(7.5*\x,-.5*\x);

\end{tikzpicture}
    \caption[Rank properties of the Khatri-Rao product of matrices.]{Illustration of the proof of Lemma~\ref{lem:rankKhatri} for $k=t=\beta=3$. The \textcolor{TUMBlue}{blue} areas indicate positions that are potentially nonzero, white areas contain only zeros. Columns in the support of $\bq$, i.e., nonzero columns of $\bq$, are indicated by ${\color{TUMBlue}\neq \mathbf{0}}$. The second line corresponds to \eqref{eq:stepIllustrated} when $\bP_1$ and $\bP_2$ are chosen as described in the proof for the set $\cT=\{1,4,5\}$. The third line is the first unit vector, given by the star-product of the first rows of the two matrices.}
    \label{fig:illustrationRankProof}
  \end{figure*}

  Now the first row of the matrix $(\bP_1 \cdot \bG)\otimes \mathbf{1}_\beta$ is a vector that is only (and exactly) supported on the positions $\psi_\beta(\cT)$. Further, by the choice of $\bP_2$, for any $l\in [\beta] \cap \colsupp(\bq[:,\psi_{\beta}(\cT)]$ there exists a row in the matrix $\bP_2 \cdot \bq$ of support $\cS\subset \{l\} \cup \psi_\beta(\{2,3,\ldots,k\})$ and $l\in \cS$ (for an illustration, see Figure~\ref{fig:illustrationRankProof}). Hence, the star-product of these rows, which by definition of the column-wise Khatri-Rao product is a row of $(\bP_1 \cdot (\bG\otimes \mathbf{1}_\beta)) \odot (\bP_2 \cdot \bq)$, is the $l$-th unit vector.

  By the same approach we can show that all the unit vectors $\be_l\in \F^{(k+t-1)\beta}, l \in \colsupp(\bq)$ are contained in the row span\footnote{Observe that the matrices $\bP_1$ and $\bP_2$ are chosen to show that a specific unit vector is contained as a row of the matrix $(\bP_1 \cdot (\bG\otimes \mathbf{1}_\beta)) \odot (\bP_2 \cdot \bq)$, which implies that it also in the span of $(\bG\otimes \mathbf{1}_\beta) \odot \bq$. As we are interested in showing which unit vectors are in the \emph{span}, we do \emph{not} require the matrices $\bP_1$ and $\bP_2$ to be the same for all unit vectors $\be_l\in \F^{(k+t-1)\beta}, l \in \colsupp(\bq)$. Instead, it suffices that for each of these unit vectors there \emph{exists} a choice of $\bP_1$ and $\bP_2$ such that it is a row of the resulting matrix.} of $(\bG\otimes \mathbf{1}_\beta) \odot \bq$ and the lemma statement follows.
\end{proof}

\begin{remark}
We consider the application of Lemma~\ref{lem:rankKhatri} to the two special cases of $t=1$ and $k=1$, \emph{i.e.}, the case of no collusion and the uncoded  (repetition-coded) setting:
\begin{itemize}
    \item $t=1$: In this case, the matrix $\bG$ is simply a full-rank $k\times k$ matrix spanning the trivial $[k,k]$ code, \emph{i.e.}, the full space $\F_q^k$. The matrix $\bP_1$ is given by $\bG^{-1}$ and the $\beta k$ columns of $(\bP_1 \cdot \bG)\otimes \mathbf{1}_\beta$ are $\beta$ repetitions of each unit vector. Hence, the matrix $((\bP_1 \cdot \bG)\otimes \mathbf{1}_\beta) \odot (\bP_2 \cdot \bq)$ is a block diagonal matrix, where each block on the diagonal is a thick column of $\bP_2 \cdot \bq$. Clearly, the columns of different blocks are linearly independent and therefore the rank of this matrix is the sum over the rank of each thick column of $\bq$. If \eqref{eq:rankRestrictionMatrix} holds, this is exactly the cardinality of the column support of $\bq$.
    \item $k=1$: In this case, the matrix $\bG$ spans a $[t,1]$ repetition code, \emph{i.e.}, is a $1\times t$ matrix with non-zero entries. Hence, the matrix $(\bG\otimes \mathbf{1}_\beta) \odot \bq$ is equal to $\bq$ up to non-zero scalar multiples of the columns. As $\bq[:,\psi_\beta(\cT)]= \bq$ for $k=1$, the lemma holds trivially.
\end{itemize}
\end{remark}

With this technical lemma established, we can now link the entropy of the answers of any subset of $k+t-1$ servers to the column support of the query.
\begin{lemma}\label{lem:starProductDimensionBoundQueryMDS}
  Let $\cC$ be an $[n,k]$ MDS code with generator matrix $\bG\in \F_q^{k \times n}$ and $\bY = \bX \cdot \bG \in \F_q^{\alpha m \times n}$, where $\bX$ is chosen uniformly at random from all $\F^{\alpha m \times k}$ matrices. Further, let $\bq \in \F^{\alpha m \times \beta n}$ be a matrix such that for any set $\mathcal{T}\subset [n]$ with $|\cT|=t$, and nonempty set $\cF \subset [m]$ we have
  \begin{align*}
    \rank(\bq[\psi_{\alpha}(\cF),\psi_{\beta}(\cT)]) = |\colsupp(\bq[\psi_{\alpha}(\cF),\psi_{\beta}(\cT)])| \ .
  \end{align*}
  Then for any set $\cN \subset [n]$ with $|\cN|=k+t-1$ it holds that
  \begin{align*}
    H\Big( \sum_{l \in \psi_{\alpha}(\cF)} (\bY[l,\cN]  \otimes \mathbf{1}_\beta&) \star \bq[\psi_{\alpha}(\cF),\psi_{\beta}(\cN)]\Big) \\
     &= |\colsupp(\bq[\psi_{\alpha}(\cF),\psi_{\beta}(\cN)])| \ ,
  \end{align*}
  where $\bq[\psi_{\alpha}(\cF),\psi_{\beta}(\cN)]$ denotes $\bq$ restricted to the $\beta$-thick columns indexed by $\cN$ and $\alpha$-thick rows indexed by $\cF$.
\end{lemma}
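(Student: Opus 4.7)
The plan is to reduce this entropy calculation to a pure linear-algebra rank statement and then invoke Lemma~\ref{lem:rankKhatri}. The key observation is that the quantity inside the entropy is a linear function of the entries of $\bX[\psi_\alpha(\FF),:]$ alone, with deterministic coefficients that depend only on $\bG$ and $\bq$. Since $\bX$ is uniform and independent of $\bq$, the entropy of such a linear image (in units of $\log q$) equals the rank of the coefficient matrix.

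First I would rewrite each summand using the storage equation $\bY[l,\N] = \bX[l,:] \cdot \bG[:,\N]$ together with identity~\eqref{eq:assoKronecker} to get $\bY[l,\N] \otimes \mathbf{1}_\beta = \bX[l,:] \cdot (\bG[:,\N] \otimes \mathbf{1}_\beta)$. Applying the mixed-product identity~\eqref{eq:mixedProduct} to each term (viewing $\bq[l,\psi_\beta(\N)]$ as $1 \cdot \bq[l,\psi_\beta(\N)]$ on the right) then gives $(\bY[l,\N] \otimes \mathbf{1}_\beta) \star \bq[l,\psi_\beta(\N)] = \bX[l,:] \cdot \bigl((\bG[:,\N] \otimes \mathbf{1}_\beta) \odot \bq[l,\psi_\beta(\N)]\bigr)$. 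Summing over $l \in \psi_\alpha(\FF)$ and concatenating the row vectors $\bX[l,:]$ into a single row $\bx$ of length $\alpha|\FF|k$, the whole expression becomes $\bx \cdot M$, where, after a row permutation that does not change the rank, $M = (\bG[:,\N] \otimes \mathbf{1}_\beta) \odot \bq[\psi_\alpha(\FF),\psi_\beta(\N)]$.

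Because $\bX$ is uniform over $\F_q^{\alpha m \times k}$, the restricted vectorization $\bx$ is uniform over $\F_q^{\alpha|\FF|k}$, and hence $\bx \cdot M$ is uniform on the row span of $M$, giving $H(\bx \cdot M) = \rank(M)$ in units of $\log q$. It remains to compute this rank. Since $|\N|=k+t-1$ and $\C$ is an $[n,k]$ MDS code, the restriction $\C|_\N$ is a $[k+t-1,k]$ MDS code with generator matrix $\bG[:,\N]$. Applying Lemma~\ref{lem:rankKhatri} to this restricted code and to the matrix $\bq[\psi_\alpha(\FF),\psi_\beta(\N)]$ yields $\rank(M) = |\colsupp(\bq[\psi_\alpha(\FF),\psi_\beta(\N)])|$, which is the desired conclusion. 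The hypothesis needed by Lemma~\ref{lem:rankKhatri}, namely that every $t$-thick-column restriction of $\bq[\psi_\alpha(\FF),\psi_\beta(\N)]$ has rank equal to its column support size, is precisely the hypothesis of the current lemma specialized to subsets $\T \subset \N$.

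The main obstacle I foresee is bookkeeping: one must verify carefully that, after combining \eqref{eq:mixedProduct} and \eqref{eq:assoKronecker} and stacking the per-$l$ contributions, the resulting coefficient matrix agrees up to a row permutation with $(\bG[:,\N] \otimes \mathbf{1}_\beta) \odot \bq[\psi_\alpha(\FF),\psi_\beta(\N)]$. A minor nuance is that Lemma~\ref{lem:rankKhatri} is stated for a matrix $\bq$ with $\alpha$ rows, whereas here $\bq[\psi_\alpha(\FF),\psi_\beta(\N)]$ has $\alpha|\FF|$ rows; this causes no issue, as the proof of Lemma~\ref{lem:rankKhatri} relies only on the rank condition on $t$-thick-column restrictions and extends verbatim to any number of rows.
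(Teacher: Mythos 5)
Your proposal is correct and follows essentially the same route as the paper: both rewrite the response via the storage equation and the identities \eqref{eq:assoKronecker}, \eqref{eq:mixedProduct} to express it as the (vectorized, uniformly distributed) data row times the coefficient matrix $(\bG_\N\otimes \mathbf{1}_\beta) \odot \bq[\psi_{\alpha}(\FF),\psi_{\beta}(\N)]$, reduce the entropy to the rank of that matrix, and conclude by Lemma~\ref{lem:rankKhatri} applied to the restricted $[k+t-1,k]$ MDS code. The only cosmetic difference is that you expand term-by-term in $l$ and stack, whereas the paper manipulates the whole block at once using the face-splitting product with an identity matrix; your closing remark about the row count in Lemma~\ref{lem:rankKhatri} is a fair observation that the paper leaves implicit.
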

\begin{proof}
Let $\bI_{m}$ denote the $m \times m$ identity matrix. We begin with some transformation steps:
\begin{align*}
  &\sum_{l \in \psi_{\alpha}(\cF)} (\bY[l,\cN] \otimes \mathbf{1}_\beta) \star \bq[\psi_{\alpha}(\cF),\psi_{\beta}(\cN)] \\
  &\stackrel{\phantom{(\ref{eq:mixedProduct})}}{=} \mathbf{1}_{|\psi_{\alpha}(\cF)|} \cdot \big(((\bX[\psi_{\alpha}(\cF),:] \cdot \bG_\cN)\otimes \mathbf{1}_\beta) \\
   &\hspace{4cm}\star ( \bI_{|\psi_{\alpha}(\cF)|} \cdot \bq[\psi_{\alpha}(\cF),\psi_{\beta}(\cN)] )\big) \\
  &\stackrel{(\ref{eq:assoKronecker})}{=}  \mathbf{1}_{|\psi_{\alpha}(\cF)|} \cdot \big((\bX[\psi_{\alpha}(\cF),:] \cdot (\bG_\cN\otimes \mathbf{1}_\beta)) \\
  &\hspace{4cm}\star ( \bI_{|\psi_{\alpha}(\cF)|} \cdot \bq[\psi_{\alpha}(\cF),\psi_{\beta}(\cN)])\big)  \\
  &\stackrel{(\ref{eq:mixedProduct})}{=} \mathbf{1}_{|\psi_{\alpha}(\cF)|} \cdot \big((\bX[\psi_{\alpha}(\cF),:] *  \bI_{|\psi_{\alpha}(\cF)|})\\
  &\hspace{3.4cm}\cdot ((\bG_\cN\otimes \mathbf{1}_\beta) \odot \bq[\psi_{\alpha}(\cF),\psi_{\beta}(\cN)])\big)  \\
  &\stackrel{\phantom{(\ref{eq:mixedProduct})}}{=} \big( \mathbf{1}_{|\psi_{\alpha}(\cF)|} \cdot (\bX[\psi_{\alpha}(\cF),:] *  \bI_{|\psi_{\alpha}(\cF)|}) \big) \\
    &\hspace{3.3cm}\cdot \big((\bG_\cN\otimes \mathbf{1}_\beta) \odot \bq[\psi_{\alpha}(\cF),\psi_{\beta}(\cN)]\big)  \ .
\end{align*}
By (\ref{eq:vectorize}) and the definition of $\bX$, the vector
\begin{align*}
\mathbf{1}_{|\psi_{\alpha}(\cF)|} \cdot (\bX[\psi_{\alpha}(\cF),:] *  \bI_{|\psi_{\alpha}(\cF)|})
\end{align*}
is uniformly distributed over $\F^{1\times k |\psi_{\alpha}(\cF)|}$ and it follows that
\begin{align*}
 &H\Big(\sum_{l \in \psi_{\alpha}(\cF)} (\bY[l,\cN] \otimes \mathbf{1}_\beta) \star \bq[l,\psi_{\beta}(\cN)]\Big) \\
  &= H\Big(\big( \mathbf{1}_{|\psi_{\alpha}(\cF)|} \cdot (\bX[\psi_{\alpha}(\cF)] *  \bI_{|\psi_{\alpha}(\cF)|}) \big) \\
  &\hspace{3cm} \cdot \big((\bG_\cN\otimes \mathbf{1}_\beta) \odot \bq[\psi_{\alpha}(\cF),\psi_{\beta}(\cN)]\big)  \Big)\\
   &= \rank\Big((\bG_\cN\otimes \mathbf{1}_\beta) \odot \bq[\psi_{\alpha}(\cF),\psi_{\beta}(\cN)]\Big) \\
 &= |\colsupp(\bq[\psi_{\alpha}(\cF),\psi_{\beta}(\cN)])| \ ,
\end{align*}
where the last equality holds by Lemma~\ref{lem:rankKhatri}.
\end{proof}

One key to the proof of the capacity of \newName{} schemes is that while it is generally not possible to make a statement on the expected rank of a query solely based on the requirement that a PIR scheme is $t$-private, it \emph{is possible} to make such a statement on the expected size of the support of the query.

\begin{lemma}\label{lem:expectationSupport}
  For any PIR scheme, file indices $i,i' \in [m]$, and any $\cF \subset [m]$ it holds that
  \begin{align*}
    &\underset{\mathbf{q}\in \supp(Q^{i})}{\mathbb{E}}|\colsupp(\mathbf{q}[\psi_{\alpha}(\cF),:] )| \\
    &\hspace{3cm}=  \underset{\mathbf{q}\in \supp(Q^{i'})}{\mathbb{E}}|\colsupp(\mathbf{q}[\psi_{\alpha}(\cF),:] )|   \ .
  \end{align*}
\end{lemma}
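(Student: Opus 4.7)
The plan is to reduce the statement to the single-server marginal privacy property of any PIR scheme, using that column support is a sum of per-column indicators and is therefore additive across columns. Concretely, I will write the expected column support as a sum over the $\beta n$ columns of the probability that the column is nonzero, and then argue that each such probability is determined by the marginal distribution of the query sent to one single server, which by user privacy is independent of $i$. The essential observation is that column support, unlike rank, decomposes additively over columns, so $1$-collusion privacy suffices even though joint distributions of queries across multiple servers may well depend on the file index.

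Concretely, I would first rewrite
\[
|\colsupp(\bq[\psi_{\alpha}(\FF), :])| \;=\; \Big|\Big\{c \in [\beta n] : \bq[\psi_{\alpha}(\FF), c] \neq \mathbf{0}\Big\}\Big|,
\]
so that by linearity of expectation,
\[
\underset{\bq\in\supp(Q^{i})}{\mathbb{E}}\,|\colsupp(\bq[\psi_{\alpha}(\FF), :])| \;=\; \sum_{c=1}^{\beta n} \Pr\!\big[Q^i[\psi_{\alpha}(\FF), c] \neq \mathbf{0}\big].
\]
Next, I would observe that every column index $c \in [\beta n]$ belongs to $\psi_{\beta}(\{j(c)\})$ for a unique server $j(c) \in [n]$, and hence the event $\{Q^i[\psi_{\alpha}(\FF), c] \neq \mathbf{0}\}$ is a deterministic function of $Q_{j(c)}^i$ alone. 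Applying user privacy \eqref{eq:userpriv} to the singleton $\T = \{j(c)\}$---which is available since any $t$-private scheme is in particular $1$-private, by the data-processing inequality applied to a larger colluding set containing $j(c)$---one obtains that $Q_{j(c)}^i$ and $Q_{j(c)}^{i'}$ are identically distributed for every pair $i, i' \in [m]$. Consequently the two column-wise probabilities agree term-by-term, and summing over $c$ yields the claim.

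The main obstacle is essentially nonexistent; the argument is a one-line reduction once column support is rewritten as a sum of indicators. The only subtle point worth flagging is that one must confine the privacy appeal to \emph{single-server} marginals and avoid invoking joint distributions across multiple servers (which in general depend on $i$, as the counterexample in Appendix~\ref{app:counterExample} illustrates); this is precisely why the additivity of column support in the column index---as opposed to the subadditivity of rank---is what makes the lemma go through.
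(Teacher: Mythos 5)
Your proof is correct and follows essentially the same route as the paper's: both arguments reduce the claim to the fact that the single-server query marginals $Q_j^i$ and $Q_j^{i'}$ are identically distributed (a consequence of $t$-privacy restricted to singletons), and both then exploit the additivity of the column-support count across servers together with linearity of expectation. The only cosmetic difference is that you decompose down to individual columns while the paper stops at $\beta$-thick columns (one block per server), which changes nothing.
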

\begin{proof}
  As the scheme is private, the query $Q_j^i$ to each individual server $j\in[n]$ must be independent of the index $i$, \emph{i.e.}, $\bQ^i[:,\psi_\beta(j)]$ and $\bQ^{i'}[:,\psi_\beta(j)]$ must have the same probability distribution. Trivially, this implies that, the $(|\cF|\alpha\times \beta)$-matrices $\bQ^{i}[\psi_{\alpha}(\cF), \psi_{\beta}(j)] $ and $\bQ^{i'}[\psi_{\alpha}(\cF), \psi_{\beta}(j)]$ also have the same probability distribution and therefore
  \begin{align*}
    &\underset{\bq\in \supp(Q^{i})}{\mathbb{E}}\big(\big|\colsupp(\bq[\psi_\alpha(\cF),:])\cap \psi_\beta(j)\big|\big) \\
    &\hspace{1cm}= \underset{\bq\in \supp(Q^{i'})}{\mathbb{E}}\big(\big|\colsupp(\bq[\psi_\alpha(\cF),:])\cap \psi_\beta(j)\big|\big) \ .
  \end{align*}
  Writing the column support as a disjoint union, we get
  \begin{align*}
|\colsupp(\mathbf{q}[\psi_{\alpha}(\cF),:] )| \! = \! \! \sum_{j\in[n]}\! |\colsupp(\mathbf{q}[\psi_{\alpha}(\cF),:] )\cap \psi_\beta(j)|,
  \end{align*}
  and so by additivity of the expectation we have
 \begin{align*}
   &\underset{\mathbf{q}\in \supp(Q^{i})}{\mathbb{E}}|\colsupp(\mathbf{q}[\psi_{\alpha}(\cF),:] )| \\
   &\hspace{1cm} = \sum_{j\in[n]}  \underset{\mathbf{q}\in \supp(Q^{i})}{\mathbb{E}}|\colsupp(\mathbf{q}[\psi_{\alpha}(\cF),:]) \cap \psi_\beta(j) | \\
   &\hspace{1cm} = \sum_{j\in[n]}  \underset{\mathbf{q}\in \supp(Q^{i'})}{\mathbb{E}}|\colsupp(\mathbf{q}[\psi_{\alpha}(\cF),:]) \cap \psi_\beta(j) | \\
   &\hspace{1cm} = \underset{\mathbf{q}\in \supp(Q^{i'})}{\mathbb{E}}|\colsupp(\mathbf{q}[\psi_{\alpha}(\cF),:] )| \ .
 \end{align*}
\end{proof}

\section{Refined and Lifted PIR Schemes}\label{app:liftedFix}

In this appendix we aim to clarify some of the details of the refinement operation of \cite{oliveira2019one}. Specifically, this operation is based on choosing vectors such that their respective inner product with the stored vectors are ``linearly independent random variables''. Given the application of these rules in \cite[Example~7]{oliveira2019one}, this appears to mean that the corresponding columns in the column-wise Khatri-Rao product of the matrix of storage vectors and the matrix of the query vectors are linearly independent. However, as we discuss in the following, this is not sufficient for the scheme to be private. To allow for better comparison with \cite{oliveira2019one}, we follow their notation in the following.

\subsection{A Counter-Example Violating Privacy} \label{sec:counterExampleLifted}

We consider \cite[Example~7]{oliveira2019one} for the setting $n=4$ and $k=t=2$ with $m=2$ files. There and in the following, file $1$ is assumed to be desired by the user. The storage code is a $[4,2]$ MDS code over $\F_3$ with generator matrix (cf. \cite[Table~VII]{oliveira2019one})
\begin{align*}
  \bG =
  \begin{pmatrix}
    1 & 0 & 1 & 1\\
    0 & 1 & 1 & 2\\
  \end{pmatrix} \ .
\end{align*}
Considering the linear combinations used to obtain $\bx_3^2$ and $\bx_4^2$, it is easy to see that
\begin{align}\label{eq:queryCodeLifted}
[\bx_1^2,\bx_2^2,\bx_3^2,\bx_4^2] = [\bx_1^2, \bx_2^2] \cdot \bG
\end{align}
and we therefore also refer to the code generated by $\bG$ as the \emph{query code}\footnote{In general, the storage and query code do not need to be the same.}. For the desired file $1$, the vectors $\bx^1_j$ are chosen uniformly at random from all query vectors of $\F_3^{\alpha 2 \times 1}$ supported only on file $1$ (cf. \cite[Definition~1]{oliveira2019one}) and such that the $\myspan{\bY_j,\bx_j^1}$ are ``linearly independent random variables''. For the undesired file $2$, the vectors $\bx^2_1$ and $\bx^2_2$ are chosen uniformly at random from all query vectors supported only on file $2$ and such that $\myspan{\bY_1,\bx_1^2}$ and $\myspan{\bY_2,\bx_2^2}$ are ``linearly independent random variables''. The vectors $\bx_3^2$ and $\bx_4^2$ are given by (\ref{eq:queryCodeLifted}).

We set the subpacketization to be $\alpha=2$, \emph{i.e.}, the storage is a length $4$ vector, where the first two positions correspond to file $1$ and the other two positions to file $2$. Now assume the following realizations of $\bx^1$ and $\bx^2$ (the $j^{\mathrm{th}}$ column of $\bx^l$ gives $\bx^l_j$)
\begin{align*}
  \bx^1 =
  \begin{pmatrix}
    1 & 2 & 0 & 0\\
    0 & 0 & 1 & 2\\ \hdashline
    0 & 0 & 0 & 0\\
    0 & 0 & 0 & 0
  \end{pmatrix}
                \quad
  \bx^2 =
  \begin{pmatrix}
    0 & 0 & 0 & 0\\
    0 & 0 & 0 & 0\\\hdashline
    1 & 2 & 0 & 2\\
    0 & 0 & 0 & 0\\
  \end{pmatrix} \ .
\end{align*}
By \cite[Lemma~1]{oliveira2019one} and written in terms of our notation\footnote{Here, the fourth server only receives one query, so the fourth "thick" column is only one column wide.}, the query is then given by
\begin{align*}
    \bq =  \left(\!\! \begin{array}{cc: cc:cc:c}
    1 & 0 & 2 & 0 & 0 & 0 & 0\\
    0 & 0 & 0 & 0 & 1 & 0 & 2\\ \hdashline
    0 & 1 & 0 & 2 & 0 & 0 & 2\\
    0 & 0 & 0 & 0 & 0 & 0 & 0
  \end{array} \!\! \right) \ ,
\end{align*}
where server $j$ receives the $j^\mathrm{th}$ thick column, as indicated by the dashed lines.
We make the following observations:
\begin{itemize}
\item The (positions in the Khatri-Rao product corresonding to) $\myspan{\bY_j,\bx_j^1}, j=1,2,3,4,$ are indeed linearly independent, as any two columns of the storage code are linearly independent.
\item By the same argument, $\myspan{\bY_1,\bx_1^2}$ and $\myspan{\bY_2,\bx_2^2}$ are linearly independent.
\item The third and fourth columns of $\bx^2$, \emph{i.e.}, $\bx^2_3$ and $\bx^2_4$, are as in \eqref{eq:queryCodeLifted}.
\end{itemize}
As $\bx^1$ and $\bx^2$ are chosen uniformly at random such that these properties are fulfilled, this is a query realization with nonzero probability. However, since $\myspan{\bY_3,\bx_3^2} = \myspan{\bY_3,\mathbf{0}} = 0$, the query $\bx_3^2$ is not a valid query if file $2$ is the desired file. Hence, upon receiving the queries $\bx_3^1$ and $\bx_3^2$, server $3$ is able to deduce that file $2$ is not the desired file. Further, observe that here we have $\bx^1[\{1,2\},1]=\bx^2[\{3,4\},1]$ and $\bx^1[\{1,2\},2] = \bx^2[\{3,4\},2]$, so simply excluding this case for the undesired file is not an option, as this would allow servers one and two to deduce that file $1$ is the desired file.

\subsection{High-Level View of the Fixed Scheme}\label{sec:highLevelLiftingFix}

  It is easy to see that the problem described in the previous section is that while $\myspan{\bY_1,\bx_1^2}$ and $\myspan{\bY_2,\bx_2^2}$ give linearly independent random variables, the vectors $\bx_1^2$ and $\bx_2^2$ themselves are not linearly independent. This leads to an $\bx_3^2$ that trivially results in a ``linearly dependent random variable''.

  The additional property required for the scheme of \cite{oliveira2019one} to be private is that the queries received by any $t$-subset of servers leads to ``linear independent random variables'' for \emph{both} files. In the counter-example above this was violated because $\bx_3^2 = \mathbf{0}$. The simplest solution to guaranteeing that this property is fulfilled is controlling the rank of any $t$-subset of the submatrices of $\bx^{1},\bx^2$ corresponding to each file, \emph{i.e.}, in the example given by $\bx^1[\{1,2\},:]$ and $\bx^2[\{3,4\},:]$. In particular, choosing these matrices uniformly random from all matrices generating a given $[n,t]$ MDS codes
  ensures that every subset of $t$ columns is of full rank $t$. In this case, it is easy to see that the inner products $\myspan{\bY_1,\bx_1^l},\myspan{\bY_2,\bx_2^l},\dots,\myspan{\bY_t,\bx_t^l}$ are linearly independent for both $l\in\{1,2\}$. Specifically, consider the subset $\cT\subset [n]$ with $|\cT|=t$. Then, by \cite[Lemma~6]{randriambololona2013upper} the dimension of the space spanned by these inner products is
  \begin{align*}
    \dim(\myspanRow{\bG_\cT} \star \myspanRow{\bx_\cT^l}) = t \ ,
  \end{align*}
  which implies their independence. This choice for $\bx^1,\bx^2$ also guarantees the privacy of the scheme. Since the submatrix corresponding to each file is chosen randomly from all matrices generating the MDS code, the part received by each $t$-subset of servers is a full-rank $t\times t$ matrix uniformly distributed over the set of all full-rank matrices in $\F^{t\times t}$.

  While this ensure privacy, we need to make sure that it preserves retrievability of the desired symbols. Here, the critical property that allows for the increase in rate is the difference in dimension of the star-product between the query for the desired and undesired file. First, consider the desired file and w.l.o.g. assume this to be file $1$. For this file the goal is to ensure that the inner products $\myspan{\bY_1,\bx_1^1},\myspan{\bY_2,\bx_2^1},\dots,\myspan{\bY_n,\bx_n^1}$ are independent or, equivalently, for the space $\myspanRow{\bG_\cT} \star \myspanRow{\bx^1}$ to be of large dimension. As noted in \cite[Proof of Lemma~1]{oliveira2019one}, this is a generic property and easily satisfied over a large enough field.

  The basis for the code of the undesired file needs to be chosen according to the one-shot scheme being refined. An explicit method to choose this code is obtained, \emph{e.g.}, by using the star-product scheme of \cite{Freij-Hollanti2017} as the one-shot scheme.
  For the undesired file, the dimension of the inner products $\myspan{\bY_1,\bx_1^2},\myspan{\bY_2,\bx_2^2},\dots,\myspan{\bY_n,\bx_n^2}$ is supposed to be as \emph{small} as possible, which is guaranteed in the star-product scheme \cite{Freij-Hollanti2017} by using a GRS code with the same code locators as the GRS storage code (see also Section~\ref{sec:strongly}). By choosing the part of $\bx^2$ corresponding to file $2$ such that it generates this code, we obtain
  \begin{align*}
    \dim(\myspanRow{\bG} \star \myspanRow{\bx^2}) = k+t-1 \ .
  \end{align*}
This implies that \emph{all} the inner products $\myspan{\bY_1,\bx_1^2},\myspan{\bY_2,\bx_2^2},\dots,\myspan{\bY_n,\bx_n^2}$ can be obtained from just a subset of $k+t-1$ of these inner products. In turn, this enables the gain of the refinement lemma, as $n-(k+t-1)$ queries can be saved by querying with sums of columns of $\bx^1$ and $\bx^2$ instead of individual columns, as will be discussed in more detail in the following example.

In conclusion, a ``fix'' to the ambiguity in the choice of the matrices $\bx$, which ensures the privacy of this scheme, is given by requiring that the supported columns of any subset of $t$ thick columns of each $\bx^l$ are linearly independent, exactly as required in Definition~\ref{def:newPIRproperty}. Note that our proposed fix allows the scheme to achieve the highest rate possible (for this specific scheme, not in general). Hence, albeit it might be possible to find a different distribution that also results in a private version of the scheme in \cite{oliveira2019one}, there is no advantage to be gained in terms of rate.

\subsection{Example of the Fixed Scheme}\label{sec:exampleLiftingFix}

We now give an updated version of the refinement for $m=2$ files in \cite[Example~7]{oliveira2019one} and address the subsequent lifting operation to $m>2$ files of \cite[Section~V.A]{oliveira2019one}. Recall the system parameters in this example are $n=4$, $k=2$, and $t=2$. For the one-shot scheme being refined we use the star-product scheme of \cite{Freij-Hollanti2017} and set the subpacketization $\alpha=2$.

\subsubsection{Refinement}\label{sec:refinement}

In the following we consider a storage code $\cC$ with the same generator matrix
\begin{align*}
  \bG =
  \begin{pmatrix}
    1 & 0 & 1 & 1\\
    0 & 1 & 1 & 2\\
  \end{pmatrix}
\end{align*}
as in \cite{oliveira2019one}, but over\footnote{The reason for this increase in field size is that the proposed fix requires a $[4,2]$ MDS with the property that the dimension of the star-product with the storage code is the product of their respective dimensions, \emph{i.e.}, equal to $4$. However, despite this being a generic property for MDS codes of larger field size (cf. \cite[Proof of Lemma~1]{oliveira2019one}), the remarkably small field size of $3$ causes none of the combinations of the few $[4,2]$ MDS codes that exist in this field to have this property.} $\F_5$ instead of $\F_3$. Note that another generator matrix of this code is given by
\begin{align*}
  \bG' =
  \begin{pmatrix}
    1 & 1 & 1 & 1\\
    0 & 1 & 3 & 4\\
  \end{pmatrix} \cdot
  \begin{pmatrix}
    1&0&0&0\\
    0&1&0&0\\
    0&0&2&0\\
    0&0&0&3\\
  \end{pmatrix} \ ,
\end{align*}
hence the code $\cC$ is a $[4,2]$ GRS code with code locators $(0,1,3,4)$ and column multipliers $(1,1,2,3)$.

As explained in the previous section, for constructing the query we need to find two codes which result in different dimensions when taking the star-product with the storage code---large dimension for the desired file and small dimension for the undesired file(s).

We begin with the $[n,t]$ MDS code $\cC^{1}$ for the desired file, which we again assume to be file $1$. Recall that the proposed fix requires the property that $\dim(\cC \star \cC^1) = 4$. It is easy to check that this is fulfilled, \emph{e.g.}, by choosing the code $\cC^1$ to be generated by
\begin{align*}
  \bG^1 =
  \begin{pmatrix}
    1 & 0 & 1 & 1\\
    0 & 1 & 1 & 3\\
  \end{pmatrix} \ .
\end{align*}

Next, consider the code $\cC^2$ used for querying the undesired file $2$. This code is chosen according to the one-shot scheme being refined, in our case the star-product scheme of \cite{Freij-Hollanti2017}. For constructing the parts of the query corresponding to the undesired files, this scheme uses an $[n,t]$ GRS code with the same code locators as the storage code and arbitrary column multipliers. For simplicity, we choose $\cC^2 = \cC$ here. It is then easy to check that $\cC\star \cC^2$ is the $3$-dimensional code generated by the matrix
\begin{align*}
\begin{pmatrix}
    1 & 0 & 0 & 4\\
    0 & 1 & 0 & 2\\
    0 & 0 & 1 & 2
  \end{pmatrix} \ .
\end{align*}
Hence, the decoding equation of the scheme is given by
\begin{align}
  \myspan{\bY_4,\bx_4^2} = 4 \myspan{\bY_1,\bx_1^2} + 2 \myspan{\bY_2,\bx_2^2} + 2 \myspan{\bY_3,\bx_3^2} \ . \label{eq:decEqFixedScheme}
\end{align}

Finally, to construct the query, choose the parts of $\bx^1$ and $\bx^2$ to be uniformly random matrices generating\footnote{For the parameters considered here these matrices are simply the generator matrices of the codes.} the codes $\cC^1$ and $\cC^2$, respectively. For example, one valid choice is
\begin{align*}
  \bx^1 =
  \begin{pmatrix}
    1 & 4 & 0 & 3 \\
    3 & 3 & 1 & 4 \\ \hdashline
    0 & 0 & 0 & 0 \\
    0 & 0 & 0 & 0 \\
  \end{pmatrix}
  \quad
  \bx^2 =
  \begin{pmatrix}
    0 & 0 & 0 & 0 \\
    0 & 0 & 0 & 0 \\ \hdashline
    0 & 1 & 1 & 2 \\
    3 & 4 & 2 & 1 \\
  \end{pmatrix} \ .
\end{align*}
The columns of these matrices are used to construct the queries to each server, as in \cite[Table~XVIII]{oliveira2019one} (which is included here in Table~\ref{tab:queryRefined} for the reader's convenience).
\begin{table}
\renewcommand{\arraystretch}{1.3}
  \centering
  
  \caption{Query structure for the example of Section~\ref{sec:refinement}, same as the structure of \cite[Example 7 / Table XVIII]{oliveira2019one}.}
  \begin{tabular}{cccc}
    Server 1 &Server 2&Server 3&Server 4\\ \hline
    $\bx_1^1$&$\bx_2^1$&$\bx_3^1$&$\bx_4^1+\bx_4^2$\\
    $\bx_1^2$&$\bx_2^2$&$\bx_3^2$& \\
  \end{tabular}
  \label{tab:queryRefined}
  
\end{table}
In terms of our notation, the query matrix is therefore given by
\begin{align*}
    \bq =  \left(\!\! \begin{array}{cc: cc:cc:c}
    1 & 0 & 4 & 0 & 0 & 0 & 3 \\
    3 & 0 & 3 & 0 & 1 & 0 & 4 \\ \hdashline
    0 & 0 & 0 & 1 & 0 & 1 & 2 \\
    0 & 3 & 0 & 4 & 0 & 2 & 1 \\
  \end{array} \!\! \right) \ .
\end{align*}
Note that the response generated by the last column, \emph{i.e.}, the response of server $4$, is given by
\begin{align*}
  \myspan{\bY_4,\bq[:,7]} = \myspan{\bY_4, \bx_4^1} + \myspan{\bY_4, \bx_4^2} \ .
\end{align*}
Using the decoding equation of this scheme, as given in \eqref{eq:decEqFixedScheme}, we obtain
\begin{align*}
 &\myspan{\bY_4, \bx_4^1} =  \myspan{\bY_4,\bq[:,7]} - \myspan{\bY_4, \bx_4^2} \\
   &= \myspan{\bY_4,\bq[:,7]} - (4 \myspan{\bY_1,\bx_1^2} + 2 \myspan{\bY_2,\bx_2^2} + 2 \myspan{\bY_3,\bx_3^2}) \\
  &= \myspan{\bY_4,\bq[:,7]}\\
  & \hspace{1cm}- (4 \myspan{\bY_1,\bq[:,2]} + 2 \myspan{\bY_2,\bq[:,4]} + 2 \myspan{\bY_3,\bq[:,6]}) \ .
\end{align*}
As all terms on the right hand side are known after receiving the responses, the user obtains $4$ independent symbols\footnote{The independence of these symbols is guaranteed by the dimension of the star-product between $\cC$ and $\cC^1$ being $\dim(\cC \star \cC^1)=4$.}
\begin{align*}
  \myspan{\bY_1, \bx_1^1}, \myspan{\bY_1, \bx_2^1}, \myspan{\bY_1, \bx_3^1}, \myspan{\bY_1, \bx_4^1}
\end{align*}
and can recover the $\alpha k = 4$ information symbols of the desired file $1$.

\subsubsection{Lifting}\label{sec:lifting}

The second part of the scheme in \cite{oliveira2019one}, which the authors refer to as \emph{lifting}, is concerned with extending this refined scheme for $m=2$ files to any number of files $m>2$.
In the following we discuss the extension of the example discussed above to $m=3$ files, similar to the extension of \cite[Example~7]{oliveira2019one} in \cite[Section~V.A]{oliveira2019one}. The system parameters remain unchanged, except that the subpacketization is increased to $\alpha=8$.

The query structure for this setting is given in \cite[Table~XX]{oliveira2019one} and consists of $27$ single columns, $9$ sums of two columns (two-sums), and $1$ sum of three columns (three-sum) of the matrices $\bx^{l}\in \F^{24\times 16}, l=1,2,3$. The key to the success of the lifting operation is that three of the two-sums, which are required for symmetrization (to guarantee privacy), and the three-sum behave similar to the queries in the two-file example---the parts corresponding to the undesired files are of low dimension while the part corresponding to the desired file is of large dimension. However, this step has a similar problem as the refinement operation (see Appendix~\ref{sec:counterExampleLifted}), as the two-sums used for symmetrization are chosen randomly (non-zero), which could make their linear combination distinguishable from the sums involving the desired file.

To lift the discussed scheme we need a method to choose the matrices $\bx^l, l\in 1,2,3$ such that the scheme is private while preserving retrievability. To this end, we consider the same codes $\cC^1$ and $\cC^2$ as in Appendix~\ref{sec:refinement} and define the permutation matrix
\begin{align*}
    \bP = \begin{small}
\left(
\arraycolsep=4pt
    \begin{array}{cccccccccccccccc}
      1&0&0&0&0&0&0&0&0&0&0&0&0&0&0&0\\
      0&1&0&0&0&0&0&0&0&0&0&0&0&0&0&0\\
      0&0&1&0&0&0&0&0&0&0&0&0&0&0&0&0\\
      0&0&0&1&0&0&0&0&0&0&0&0&0&0&0&0\\ \hdashline
      0&0&0&0&0&0&1&0&0&0&0&0&0&0&0&0\\
      0&0&0&0&0&0&0&1&0&0&0&0&0&0&0&0\\
      0&0&0&0&0&0&0&0&1&0&0&0&0&0&0&0\\
      0&0&0&0&0&1&0&0&0&0&0&0&0&0&0&0\\ \hdashline
      0&0&0&0&0&0&0&0&0&0&0&0&1&0&0&0\\
      0&0&0&0&0&0&0&0&0&0&0&0&0&1&0&0\\
      0&0&0&0&0&0&0&0&0&0&1&0&0&0&0&0\\
      0&0&0&0&0&0&0&0&0&0&0&1&0&0&0&0\\ \hdashline
      0&0&0&0&0&0&0&0&0&0&0&0&0&0&0&1\\
      0&0&0&0&0&0&0&0&0&0&0&0&0&0&1&0\\
      0&0&0&0&0&0&0&0&0&1&0&0&0&0&0&0\\
      0&0&0&0&1&0&0&0&0&0&0&0&0&0&0&0
    \end{array}\right)
    \end{small} \ .
\end{align*}
For the desired file, we choose the part corresponding to the desired file $1$ to be a random basis of the code\footnote{The permutation matrix is required to preserve the labeling of \cite[Table~XX]{oliveira2019one} (see Table~\ref{tab:queryLifted}). There, for each $l=1,2,3$ the columns (in the order of the server they are sent to) $\{\bx^l_1,\bx^l_2,\bx^l_3,\bx^l_4\}$, $\{\bx^l_7,\bx^l_8,\bx^l_9,\bx^l_6\}$, $\{\bx^l_{13},\bx^l_{14},\bx^l_{11},\bx^l_{12}\}$, and $\{\bx^l_{16},\bx^l_{15},\bx^l_{10},\bx^l_{5}\}$ semantically belong together. In contrast, in $\bG^1 \otimes \bI_4$ each subset of $4$ consecutive columns semantically belongs together. The permutation matrix adjusts for this difference.} $\myspan{(\bG^1 \otimes \bI_4)\cdot \bP}$. It is easy to check that
\begin{align*}
  \myspan{(\bG^1 \otimes \bI_4)\cdot \bP} \star \myspan{(\bG \otimes \mathbf{1}_4) \cdot \bP} = \myspan{\big((\bG^1 \odot \bG) \otimes \bI_4\big) \cdot \bP}
\end{align*}
and therefore this star-product is of dimension $4 \cdot \dim(\cC \star \cC^1) = 16$. %

Similarly, for the undesired files $2$ and $3$, the corresponding parts of the matrices $\bx^2$ and $\bx^3$ are each chosen uniformly random from the bases of the code\footnote{Both file $2$ and $3$ can use the same code $\bG^2$ to construct their query.} $\myspan{(\bG^2 \otimes \bI_4)\cdot \bP}$.

\begin{table}
\renewcommand{\arraystretch}{1.3}
  \centering
  \caption{Query structure for the example of Section~\ref{sec:lifting}, same as the structure of \cite[Section V.A / Table XX]{oliveira2019one}.}
  \begin{tabular}{cccc}
    Server 1 &Server 2&Server 3&Server 4\\ \hline
    $\bx_1^1$&$\bx_2^1$&$\bx_3^1$&$\bx_4^1+\bx_4^2$\\
    $\bx_1^2$&$\bx_2^2$&$\bx_3^2$&$\bx_5^1+\bx_4^3$\\
    $\bx_1^3$&$\bx_2^3$&$\bx_3^3$&$\bx_5^2+\bx_5^3$\\ \hdashline
    $\bx_7^1$&$\bx_8^1$&$\bx_9^1+\bx_9^2$&$\bx_6^1$\\
    $\bx_7^2$&$\bx_8^2$&$\bx_{10}^1+\bx_9^3$&$\bx_6^2$\\
    $\bx_7^3$&$\bx_8^3$&$\bx_{10}^2+\bx_{10}^3$&$\bx_6^3$\\ \hdashline
    $\bx_{13}^1$&$\bx_{14}^1+\bx_{14}^2$&$\bx_{11}^1$&$\bx_{12}^1$\\
    $\bx_{13}^2$&$\bx_{15}^1+\bx_{14}^3$&$\bx_{11}^2$&$\bx_{12}^2$\\
    $\bx_{13}^3$&$\bx_{15}^2+\bx_{15}^3$&$\bx_{11}^3$&$\bx_{12}^3$\\ \hdashline
    $\bx_{16}^1+\bx_{16}^2+\bx_{16}^3$&&\\
  \end{tabular}
  \label{tab:queryLifted}
  
\end{table}

Now consider the queries as given in \cite[Table~XX]{oliveira2019one} (which is included here in Table~\ref{tab:queryLifted} for the reader's convenience).
First observe that privacy is preserved as for each $2$-tuple of servers the part of the query corresponding to a given file is uniformly distributed over all $8\times 8$ full-rank matrices\footnote{The symmetry among files is guaranteed by the scheme of \cite{oliveira2019one}. The proposed fix \emph{additionally} guarantees that these matrices are also of full-rank.}. For example, assume the first two servers collude. For each file $l=1,2,3$, the queries received by these servers are made up of the $8$ columns $\bx_j^l, j\in\{1,2,7,8,13,14,15,16\}$. Furthermore, for each file and the given permutation matrix, these $8$ columns contain exactly $2$ columns from each subblock of a random basis of $\myspan{\bG^1 \otimes \bI_4}$ or $\myspan{\bG^2 \otimes \bI_4}$ (here, a subblock is one of the $4$ instances of the matrix $\bG^1$ or $\bG^2$). It is easy to see that this matrix is of full rank $8$ if and only if the corresponding restriction to the $t$ columns within each subblock is of full rank. As the codes $\cC^1$ and $\cC^2$ are MDS, this is always the case.

Note that the strategy described above results in a set of viable (with non-zero probability) queries which is a subset of the original scheme, namely those where the matrices $\bx^l, l=1,2,3$ contain MDS codes in the respective subblocks. Hence, the retrievability of all $16$ symbols of the desired file $1$, which is given by multiple applications of the same process as in Appendix~\ref{sec:refinement}, follows immediately from the arguments in \cite{oliveira2019one}.

\section{A Scheme that does not fulfill Definition~\ref{def:newPIRproperty}} \label{app:counterExample}

In \cite{Sun2018conj}, a linear PIR scheme from $[n=4, k=2]$ MDS-coded storage with $t=2$ colluding servers and $m=2$ files was presented, achieving a PIR rate  $3/5$. This rate exceeds the one in Conjecture  \ref{tconj}, thereby providing a counter-example that disproves it in its full generality. In the following, we briefly introduce this counter-example with a focus on the query construction and show that it does not fulfill Definition~\ref{def:newPIRproperty}.

Each of the two files is assumed to be comprised of $12$ symbols from $\mathbb{F}_p$ for a large prime $p$ and the subpacketization level is set to $\alpha=6$.
Let
\begin{equation*}
\begin{pmatrix}
  \mathbf{V}_1\\\mathbf{V}_2\\ \vdots \\ \mathbf{V}_6
  \end{pmatrix}, ~  \begin{pmatrix}
  \mathbf{U}_0\\\mathbf{U}_1\\ \vdots \\ \mathbf{U}_5
  \end{pmatrix}
\end{equation*}
be two random  full-rank $6\times 6$ matrices over $\mathbb{F}_p$.
Without loss of generality, suppose that the first file is desired.
The queries to servers $1$ and $2$ are given in \eqref{eq:query1} and \eqref{eq:query2}, respectively,
\begin{figure*}[!t]
  \normalsize
\begin{align}
  Q_1^1=\begin{pmatrix}
 \mathcal{L}_{11}(\mathbf{V}_1^T, \mathbf{V}_2^T, \mathbf{V}_3^T) &  \mathcal{L}_{12}(\mathbf{V}_1^T, \mathbf{V}_2^T, \mathbf{V}_3^T)  & & & \mathcal{L}_{13}(\mathbf{V}_1^T, \mathbf{V}_2^T, \mathbf{V}_3^T)\\
& & \mathcal{L}_{11}(\mathbf{U}_0^T, \mathbf{U}_6^T, \mathbf{U}_8^T)& \mathcal{L}_{12}(\mathbf{U}_0^T, \mathbf{U}_6^T, \mathbf{U}_8^T) &  \mathcal{L}_{13}(\mathbf{U}_0^T, \mathbf{U}_6^T, \mathbf{U}_8^T)
  \end{pmatrix}, \label{eq:query1}
\end{align}
\begin{align}
  Q_2^1=\begin{pmatrix}
 \mathcal{L}_{21}(\mathbf{V}_1^T, \mathbf{V}_4^T, \mathbf{V}_5^T) &  \mathcal{L}_{22}(\mathbf{V}_1^T, \mathbf{V}_4^T, \mathbf{V}_5^T)  & & & \mathcal{L}_{23}(\mathbf{V}_1^T, \mathbf{V}_4^T, \mathbf{V}_5^T)\\
& & \mathcal{L}_{21}(\mathbf{U}_0^T, \mathbf{U}_7^T, \mathbf{U}_9^T)& \mathcal{L}_{22}(\mathbf{U}_0^T, \mathbf{U}_7^T, \mathbf{U}_9^T) &  \mathcal{L}_{23}(\mathbf{U}_0^T, \mathbf{U}_7^T, \mathbf{U}_9^T)
  \end{pmatrix}, \label{eq:query2}
\end{align}
\hrulefill
\end{figure*}
where $\mathcal{L}_{ij}(\mathbf{a},\mathbf{b},\mathbf{c})$ denotes some linear combinations of $\mathbf{a}, \mathbf{b}, \mathbf{c}$ (see P1 and P2 in \cite[Pg.~1004]{Sun2018conj} for more details on the requirements on the coefficients of the involved linear combinations), and
\begin{align*}
 & \mathbf{U}_6=\mathbf{U}_1+\mathbf{U}_2, ~\mathbf{U}_7=\mathbf{U}_1+2\mathbf{U}_2,\\
 & \mathbf{U}_8=\mathbf{U}_3+\mathbf{U}_4,~ \mathbf{U}_9=\mathbf{U}_3+2\mathbf{U}_4.
\end{align*}
Note that this definition includes the processing step done at the servers in \cite{Sun2018conj} as part of the query, which is necessary to describe the scheme as a linear scheme as in Definition~\ref{def:linearPIR}.
Then, in our notation for the query, we have for $\cF=\{1\}$ and $\cT=\{1,2\}$
\begin{align*}
   \bq[&\psi_{\alpha}(\cF), \psi_{\beta}(\cT)] = \bq[\psi_{\alpha}(1), \psi_{\beta}(\{1,2\})] = \bq[[6], [12]] \\
    =& {\scriptstyle\big(\mathcal{L}_{11}(\mathbf{V}_1^T, \mathbf{V}_2^T, \mathbf{V}_3^T) \  \mathcal{L}_{12}(\mathbf{V}_1^T, \mathbf{V}_2^T, \mathbf{V}_3^T)  \ \mathbf{0}_{6\times 2}  \ \mathcal{L}_{13}(\mathbf{V}_1^T, \mathbf{V}_2^T, \mathbf{V}_3^T) }\\
  &\hspace{0.7cm} {\scriptstyle \mathcal{L}_{21}(\mathbf{V}_1^T, \mathbf{V}_4^T, \mathbf{V}_5^T) \  \mathcal{L}_{22}(\mathbf{V}_1^T, \mathbf{V}_4^T, \mathbf{V}_5^T)  \ \mathbf{0}_{6\times 2} \ \mathcal{L}_{23}(\mathbf{V}_1^T, \mathbf{V}_4^T, \mathbf{V}_5^T) \big) }\ ,
\end{align*}
where $\mathbf{0}_{6\times 2}$ denotes the $6\times 2$ zero matrix. The matrix $\bq[\psi_{\alpha}(1), \psi_{\beta}(\{1,2\})]$
is a $6\times 10$ matrix with $6$ non-zero columns that are linear combinations of the $5$ vectors $\mathbf{V}_1^T, \mathbf{V}_2^T, \mathbf{V}_3^T, \mathbf{V}_4^T$, and $\mathbf{V}_5^T$.
Therefore, we have
\begin{align*}
  \mbox{rank}(\bq[\psi_{\alpha}(1),& \psi_{\beta}(\{1,2\})])\le 5 \\
  &<6 =|\colsupp(\bq[\psi_{\alpha}(1), \psi_{\beta}(\{1,2\})])| \ ,
\end{align*}
and conclude that the PIR scheme in \cite{Sun2018conj} does not fulfill Definition~\ref{def:newPIRproperty}.

While it might seem excessive to describe a scheme that does \emph{not} fall into the class of \newName{} PIR schemes in this much detail, we would like to point out that this in fact further motivates our definition. The results presented in Section~\ref{sec:capa} show that \emph{the} distinguishing feature of this scheme is in fact the low rank of the queries, when restricting to a subset of thick columns and rows, thereby strongly hinting at what a scheme for general parameters and of a PIR rate that exceeds the one in Conjecture~\ref{tconj} / Theorem~\ref{thm:coded-colluded} must fulfill.

\section{Notation} \label{app:notation}
The notation used in this work is summarized in Table~\ref{tab:notation}.
\begin{table*}[htbp]
\caption{Notation used in this work.}\label{tab:notation}
\begin{center}
\renewcommand{\arraystretch}{2}
\begin{tabular}{|c|l|c|l|}
\hline
$(n, k, d)$ (resp. $[n, k, d]$) & Code parameters of a (resp. linear) code & $m$	& Number of files \\ \hline  $(n, k)$ (resp. $[n, k]$) & Code parameters of an  (resp. a linear) MDS  code & $t$ & Number of colluding servers \\ \hline
$R_m$ (resp. $C_m$)&Rate (resp. Capacity)  of a PIR scheme with $m$ files &$b$ & Number of Byzantine servers \\\hline $R$&$\lim_{m\rightarrow\infty}R_m $ & $r$ &  Number of nonreponsive servers\\\hline
$[b]$ &Set of integers $\{i, 1\le i\le b\}$  &$n$ & Number of servers/code length \\\hline
$k$ & Code dimension of an MDS code & $d$ & Minimum distance of a code\\\hline
$\supp(W)$ & The set of realizations of $W$ with nonzero probability&$W_{j,...,l}$&$\{W_j,W_{j+1},\ldots,W_{l}\}$\\\hline
$\bW = [\bW_1^\top, \bW_2^\top,\ldots ]^\top$&A matrix with the $j^{\mathrm{th}}$ block of rows corresponding to $W_j$&  $W_{\mathcal{T}}$& $\{W_j : j \in \mathcal{T}\}$\\\hline
$\bW = [\bW_1, \bW_2,\ldots ]$ &A matrix with the $j^{\mathrm{th}}$ block of columns corresponding to $W_j$& $\mathbb{F}_q$ ($\mathbb{F}$) & Finite field of $q$ elements \\\hline
$\bW[\cI,:]$&Submatrix of $\bW$ restricted to the rows indexed  by $\cI$&$\psi_{\beta}(\cI)$&$\bigcup\limits_{i \in \cI} \{(i-1)\beta +1,\ldots, i\beta\}$\\\hline
$\bW[:,\cI]$&Submatrix of $\bW$ restricted to the columns indexed  by $\cI$&$\bG$& Generator matrix of a storage code\\\hline
$\bX$& Stands for $[(\bX^1)^\top, (\bX^2)^\top, \ldots, (\bX^m)^\top]^\top$ & $\bX^i\in \mathbb{F}^{\alpha\times k}$& The $i^{\rm{th}}$ file \\\hline
 $\bY^l=\bX^l\cdot\bG$& Encoded version of the $l^{\rm{th}}$ file &  $\bY=\bX\cdot\bG$&  Encoded version of all the files \\\hline $Y_l$& The $l$-th column of $Y$ & $\alpha$& Number of stripes of each file\\\hline
$Y_\mathcal{I}$ & The restriction of $Y$ to the storage of servers indexed by $\mathcal{I}$ &
$H(\cdot)$& Entropy function\\\hline
$\cS$&   Randomness shared by the servers & $\rho_\spir$ & Secrecy rate of SPIR  \\\hline  $S$& Stands for $(S_1,\dots ,S_n)$ &$\cQ$ & Set of all possible queries \\\hline
$S_j\in \cS$& Shared randomness that used by the $j^{\rm{th}}$ server &$\odot$ & The column-wise Khatri-Rao product \\\hline
$\myspanRow{\bW}$ or $\myspanRow{\cC}$ & Row span of the matrix $\bW$ or vectors in the set $\cC$ &$\left\langle~ , ~\right\rangle$& Inner product operation \\\hline
$Q^i=\left(Q_1^i, \ldots, Q_n^i\right)$ & Query when  the $i^{\rm{th}}$ file is requested & $I(~;~)$ & Mutual information\\\hline
$Q_j^i$ & Query sent to the $j^{\rm{th}}$ server when  the $i^{\rm{th}}$ file is requested & $\star$ & Star product \\\hline
$A_j^i$ & Response  from  the $j^{\rm{th}}$ server when  the $i^{\rm{th}}$ file is requested  &$A^i$ & Stands for $\left(A_1^i,\ldots,A_n^i\right)$
\\\hline
$\colsupp(\bW)$ &The set of indices of nonzero columns of $\bW$&$\otimes$ & The Kronecker product
 \\\hline
\end{tabular}
\end{center}
\end{table*}

\end{document}